\setlist{nosep} % or \setlist{noitemsep} to leave space around whole list
\def\XXint#1#2#3{{\setbox0=\hbox{$#1{#2#3}{\int}$}
     \vcenter{\hbox{$#2#3$}}\kern-.5\wd0}}
\newcommand{\N}{\mathbb{N}}
\newcommand{\Z}{\mathbb{Z}}
\newcommand{\R}{\mathbb{R}}
\newcommand{\C}{\mathbb{C}}
\newtheorem{theo}{Theorem}
\newtheorem{prop}{Proposition}[section]
\newtheorem{lem}[prop]{Lemma}
\newtheorem{coro}[prop]{Corollary}
\newtheorem{remark}[prop]{Remark}
\newtheorem{defi}[prop]{Definition}
\theoremstyle{plain}
\numberwithin{equation}{section}
\def\t0{\rightarrow 0} % Vers zéro
\def\ti{\rightarrow \infini} % Vers l'infini
\newcommand{\f}{\frac}
\newcommand{\infini}{\infty}
\newcommand{\ep}{\varepsilon}
\newcommand{\hal}{\frac{1}{2}}
\def\div{\mathrm{div }} % Divergence
\def\1{\mathbf{1}} % Fonction caractéristique
\def \mc{\mathcal}
\def \ep{\epsilon}
\renewcommand{\epsilon}{\varepsilon}
\def \({\left(}
\def \){\right)}
\def\Rd{\R^d} % Espace physique
\def\meseq{\mu_{eq}} % Mesure d'équilibre
\def \ZNbeta{Z_{N,\beta}} % Fonction de partition
\def \carr{C} % Carré
\def \Nn{\mc{N}}
\def \D{\mc{D}}
\def\yg{|y|^\gamma}
\newcommand{\drd}{\delta_{\Rd}}
\def \W{\mathbb{W}} % Énergie d'une configuration de points
\def \WN{\mc{W}_N} % Énergie à N points
\def\config{\mathcal{X}} % Espace des configurations de points
\def\Lip{\mathrm{Lip}_1} % Espace Lipschitz
\def \Lploc{L^p_{\mathrm{loc}}}
\def \probas{\mathcal{P}}
\def \Pelec{P^{\rm{elec}}} % Processus électrique
\def\P{\mathbb{P}} % "Vraies" mesures
\def \Pst{P} % Processus ponctuel standard sans marque
\def \PNbeta{\P_{N, \beta}} % Mesure de Gibbs à \beta
\def \PgN2{\mathbf{P}_{N,2}} % Processus Gibbsien à N points et à \beta = 2..
\def \sineb{\text{Sine}_{\beta}} % Sine_{\beta}
\def\Esp{\mathbf{E}} % Espérance
\def \E{\Esp}
\def \Ent{\mathrm{Ent}}   % Entropie relative classique
\def \ERS{\mathsf{ent}} % Entropie relative spécifique (sans marques)
\def \Poisson{\mathbf{\Pi}}
\def \B{\mathbf{B}} % Bernoulli
\def \fbeta{\mathcal{F}_{\beta}} % Rate function sans marque
\def \cds{c_{d,s}} %% Constante c,d,s.
\def \A{\mathsf{Elec}} % Union des classes admissibles
\def \conf{\rm{Conf}} % Configuration sous-jacente
\def \Eloc{E^{\rm{loc}}}
\def \Hloc{\Phi^{\rm{loc}}}
\def \dist{d}
\def \dist{\mathrm{dist}}
\def\c{c_{d,s}}
\def\g{g}
\def\nab{\nabla}
\newcommandx \So[2][1=e,2=M]{\mathcal{S}^{#2,#1,\epsilon}_{R, \eta,-}}
\renewcommandx \H[2][1=e, 2=M]{H^{#2,#1,\epsilon}_{R, \eta}}
\def \Int{\mathrm{Int}}
\def \Pelec{\Pst^{\rm{elec}}}
\DeclarePairedDelimiter\ceil{\lceil}{\rceil}
\def \llceil{\left\lceil}
\def \rrceil{\right\rceil}
\DeclarePairedDelimiter\floor{\lfloor}{\rfloor}
\def \Welec{\mathbb{W}^{\rm{elec}}}
\def \Wper{\mathbb{W}^{\rm{BS}}}
\def \Wint{\mathbb{W}^{\mathrm{int}}}
\def \Sp{\mathcal{S}_{R, \eta}^{M, \epsilon}}
\def \Hint{\mathcal{H}^{\rm{int}}}
\def \C{\mathcal{C}}
\def \hcarr{\hat{C}}
\def \SR{S_R}
\def \hB{\hat{B}}
\def \H{H}
\def \wN{W_N}
\def \Dlog{\mathcal{D}^{\rm{log}}}
\def \Clog{C^{\rm{log}}}
\def \fbetai{\mc{F}'_{\beta}}
 \def \Escr{E^{\rm{scr}}}
 \def \Cscr{\C^{\rm{scr}}}
\def \Cper{\C^{\rm{per}}}
\def \Eper{E^{\rm{per}}}
\def \Pper{P^{\rm{per}}}
\def \Pbeta{P_{\beta}}
\def \TV{\rm{TV}}
\def \E{\mc{E}}
\def \Pav{P^{\rm{av}}}
\def \bC{\bar{C}}
\def \Emod{E^{\rm{mod}}}
\def \hC{\hat{C}}
\def \c2d{c^{\rm{log}}_{d=2}}
\def \Gper{G^{\rm{per}}}
\def \T{\mathbb{T}}
\def \Elocn{E^{\mathrm{loc}, n}}
\renewcommand \Sp{\mathcal{S}^{M,\epsilon}_{R, \eta}}
\def \Inte{\mathrm{Int}}
\def \Spn{\mathcal{S}_{R_n, \eta}^{M, \epsilon}}
\def \Pmod{P^{\rm{mod}}}
\begin{document}
\title{Logarithmic, Coulomb and Riesz energy of point processes}
\author{Thomas Leblé}
\maketitle
\begin{abstract}
We define a notion of logarithmic, Coulomb and Riesz interactions in any dimension for random systems of infinite charged point configurations with a uniform background of opposite sign. We connect this interaction energy with the “renormalized energy” studied by Serfaty \textit{et al.}, which appears in the free energy functional governing the microscopic behavior of logarithmic, Coulomb and Riesz gases. Minimizers of this functional include the Sine-beta processes in the one-dimensional Log-gas case. Using our explicit expression (inspired by the work of Borodin-Serfaty) we prove their convergence to the Poisson process in the high-temperature limit as well as a crystallization result in the low-temperature limit for one-dimensional systems.
\end{abstract}
\section{Introduction}
\subsection{General setting}
\paragraph{Logarithmic, Coulomb and Riesz interactions.}
We consider a system of points (which we can think of as being point particles carrying a positive unit charge) in the Euclidean space $\Rd$ interacting via logarithmic, Coulomb or Riesz pairwise interactions
\begin{equation} \label{wlog}
g(x)=-\log |x| , \quad \text{in dimension } d=1,
\end{equation} 
\begin{equation}
\label{wlog2d}
g(x) = - \log |x| ,\quad  \text{in dimension } d=2,
\end{equation} 
or in general dimension
\begin{equation}\label{kernel}  
g(x)=\frac{1}{|x|^{s}}, \quad \max(0, d-2) \leq s<d.
\end{equation}  

Cases \eqref{wlog} and \eqref{wlog2d} are known as one- and two-dimensional log-gases, and we will refer to them as the “logarithmic cases”. One-dimensional log-gases have been extensively studied for their connection with important random matrix models known as the $\beta$-ensembles (see \cite{forrester}). The two-dimensional log-gas is known in the physics literature as a \textit{two-dimensional one-component plasma} (see \cite{alastueyjancovici}) and can also model non-Hermitian random matrices such as the Ginibre ensemble \cite{Ginibre}. The cases \eqref{kernel} correspond to higher-dimensional Coulomb gases (if $s=d-2$) or Riesz gases. 

The statistical mechanics of $N$ points $(x_1, \dots, x_N)$ interacting pairwise via $g$ under a confining potential $V$ at inverse temperature $\beta \in (0, + \infty)$ is given by the canonical Gibbs measure 
\[
d\PNbeta(x_1, \dots, x_N) := \frac{1}{\ZNbeta} e^{- \beta \left( \sum_{i \neq j} g(x_i-x_j) + N \sum_{i=1}^N V(x_i)\right)} dx_1 \dots dx_N,
\]
where $\ZNbeta$ is a normalizing constant. 
The \textit{macroscopic} behavior i.e. the behavior of the empirical measure $\mu_N := \frac{1}{N} \sum_{i=1}^N \delta_{x_i}$ is well studied in the limit $N \ti$, see e.g. \cite{serfatyZur} and the references therein. The limiting macroscopic arrangement is described by the equilibrium measure $\meseq$, which is a probability measure on $\Rd$, depending on $V$, with compact support $\Sigma$, such that $\{\mu_N\}_N$ converges weakly to $\meseq$, $\PNbeta$-a.s.

In order to study the microscopic arrangement of the particles, Sandier-Serfaty have derived in \cite{SS2d} (see also \cite{SS1d}, \cite{RougSer}, \cite{PetSer}) a \textit{second-order} energy functional $\wN$ which governs the fluctuations around $\meseq$, together with an object $\W$ defined on infinite point configurations which is the limit of $\wN$ as $N \ti$ in the sense of $\Gamma$-convergence of functionals (see \cite{serfatyZur}). Let $x_i' := N^{1/d} x_i$, $\nu'_N := \sum_{i=1}^N \delta_{x'_i}$ and let $\meseq'$ be the push-forward of $\meseq$ by $x \mapsto N^{1/d}x$. The “blown-up” point configuration $\nu'_N$ encodes the position of particles at the microscopic (inter-particle) scale $N^{-1/d}$. We let 
\begin{equation} \label{def:WN}
\wN(x_1, \dots, x_N) := \frac{1}{N} \iint_{\triangle^c} g(x-y) (d\nu'_N - d\meseq'(x))\otimes(d\nu'_N - d\meseq'(y)),
\end{equation}
where $\triangle$ is the diagonal. It can be seen that $\wN$ computes (up to the factor $\frac{1}{N}$)  the Coulomb/Riesz interaction of the electric system made of the finite (charged) point configuration $\nu'_N$ and of a negatively charged background of density $d\meseq'$, with itself, without the infinite self-interactions of the charges because the diagonal $\triangle$ is excluded. 

An integration by parts shows that $\wN$ may be re-written with the help of the associated electric field $\Eloc := \nabla g * (\nu'_N - \meseq')$, whose norm is computed in a \textit{renormalized} fashion to take care of the singularities around each charge (we will come back to this procedure in Section \ref{sec:WSS}).

Following the same procedure, a \textit{renormalized energy} functional $\mc{W}$ is defined on the space of electric fields corresponding to infinite point configurations together with a uniform background of intensity $1$. If $\C$ is a point configuration, its Coulomb/Riesz energy $\W(\C)$ is then defined as
\begin{equation*}
 \W(\C) := \inf_E \mc{W}(E),
\end{equation*}
where the infimum is taken among the set of electric fields $E$ which are compatible with $\C$. Finally if $\Pst$ is a random point process (a probability measure on point configurations)  its energy is defined by 
\begin{equation*}
\Welec (\Pst) := \Esp_{P}[ \W ].
\end{equation*} 
We refer to Section \ref{sec:WSS} for more details. The superscript “elec” is added by us and refers to this “electric” approach to the definition of a Coulomb/Riesz energy.

\paragraph{Free energy at microscopic scale.}
In \cite{LebSer} S. Serfaty and the author have obtained a second-order (or \textit{process level}) large deviation principle concerning the average microscopic behaviour of the particles under the canonical Gibbs measure at inverse temperature $\beta~\in~(0, + \infty)$. This behaviour is characterized by a certain random point process $\Pst$ (the law of a random point configuration) and it amounts to minimizing a  free energy functional of the form
\begin{equation} \label{def:fbeta}
\fbeta(\Pst) := \beta \Welec(P) + \ERS[\Pst|\Poisson]
\end{equation}
on the space $\probas_{s,1}(\config)$ of translation-invariant random point processes whose mean density of points is~$1$. The term $\ERS[\Pst|\Poisson]$ denotes the \textit{specific relative entropy} of $\Pst$ with respect to the Poisson point process of intensity $1$ on $\Rd$, it is the infinite-volume analogue of the usual relative entropy.

From a physics perspective, knowing the minimizers of the free energy and how they behave as $\beta$ varies allows one to retrieve some of the thermodynamic properties of the physical system at the microscopic scale (e.g. the existence of phase transitions). From the random matrix theory point of view, it was proven in \cite{LebSer} that an important family of point processes governing the microscopic behavior of eigenvalues, namely the $\sineb$ processes of Valko-Virag (see \cite{vv}), minimizes $\fbeta$ for $\beta > 0$. Hence it would be very useful to have information on $\fbeta$, its level sets and its minimizers (depending on $\beta$). A drawback of the free energy $\fbeta$ is that computing it explicitly is hard. The energy term in particular is difficult to evaluate and except for the case of periodic configurations (for which exact formulas hold) no value of $\W$ is known and the mere finiteness of $\W(\C)$ for a given point configuration $\C$ is unclear in general (see however \cite{gesandier} for some criteria). On the other hand, level sets of $\W$ are easily seen to be degenerate because small perturbations of any given configuration $\C$ typically do not change its energy. Concerning both issues, it turns out to be helpful to look for a definition of the energy directly at the level of stationary random point processes instead of averaging the energy computed configuration-wise.

\paragraph{The approach of Borodin-Serfaty.}
In \cite{BorSer} a related notion of a renormalized energy for random point processes was introduced in the logarithmic setting \eqref{wlog} and \eqref{wlog2d}. Given a stationary random point process $\Pst$, Borodin and Serfaty proceed by periodizing the point process induced in a square (or interval) of sidelength $R$ and computing its renormalized energy by the mean of explicit formulas valid in the periodic setting. If $(a_1, \dots, a_N)$ is a configuration in a torus $\T$ of volume $N$ in $\Rd$, the associated periodic point configuration has an energy
$$
\W(a_1, \dots, a_N) = \frac{\cds^2}{N} \sum_{i \neq j} \Gper(a_i -a_j) + \cds^2 \lim_{x \t0} \left(\Gper(x) + \frac{\log(x)}{\cds}\right),
$$
where $\cds$ is a constant and $\Gper$ is a certain periodic Green function which has a logarithmic singularity at $0$. Taking the expectation under $P$, using an expansion of $\Gper$ and sending $N \ti$ they obtain an energy $\Wper(\Pst)$ (our notation) which may be written, up to an additive constant, as 
\begin{equation} \label{def:Wper}
\Wper(\Pst) = \int_{\Rd} -\log |v| (\rho_{2,\Pst}(v) - 1) dv,
\end{equation}
where $\rho_{2, \Pst}$ denotes the two-point correlation function of $P$, which can be seen as a function of one variable by stationarity (we abuse notation and let $\rho_{2, \Pst}(v) := \rho_{2, \Pst}(0, v)$). Using this explicit expression in terms of $\rho_{2,P}$, they are able to compute the energy $\Wper$ for some specific point processes (e.g. the $\sineb$ processes for $\beta = 1, 2, 4$, and the Ginibre point process) as well as to solve minimization problems (the minimization of $\Wper$ over a large class of determinantal point processes). However no general rigorous connection is drawn between $\Wper$ and the electric definition $\Welec$ which derives from the energy functional $\WN$. Moreover the formulas of \cite{BorSer} only apply to random point processes for which $\rho_{2,P}(x,y) - 1$ decays fast enough as $|x-y| \ti$. The approach of the present paper is strongly inspired by the one of \cite{BorSer} and is an attempt to give a partial connection between $\Wper$ and $\Welec$.

\subsection{Main results}
The purpose of this paper is twofold. First we introduce an energy $\Wint$ (“int” as “intrinsic”) defined on $\probas_{s,1}(\config)$ which is expressed only in terms of $g$ and of the two-point correlation function, and we connect $\Wint$ with $\Welec$. In a second part we use $\Wint$ to handle the energy term in $\fbeta$, which allows us to describe the behavior of minimizers of $\fbeta$ in the limiting cases $\beta \t0$ (in any dimension) and $\beta \ti$ (in dimension 1). 

\paragraph{New definition of the Coulomb/Riesz energy.}
If $P$ is a stationary random point process of intensity $1$ and $\rho_{2,P}$ denotes its two-point correlation function, we define in Section \ref{sec:Wint} its “intrinsic” energy (with respect to the interaction $g$) as 
$$
\Wint(P) := \liminf_{R \ti} \f{1}{R^d} \iint_{\carr_R^2 \backslash \triangle} g(x,y) (\rho_2(x,y)-1) dxdy,$$
where $\carr_R$ is the hypercube $[-\frac{R}{2}, \frac{R}{2}]^d$. An equivalent formulation is
\begin{equation} \label{Wintro}
\Wint(P) := \liminf_{R \ti} \frac{1}{R^d} \int_{[-R,R]^d \backslash \{0\}} g(v) (\rho_{2,P}(v)-1) \prod_{i=1}^d (R-|v_i|)dv,
\end{equation}
where $v = (v_1, \dots, v_d)$ and where we made again the abuse of notation $\rho_{2,P}(v) := \rho_{2,P}(0,v)$. The expression \eqref{Wintro} shows similarities with \eqref{def:Wper} in the logarithmic cases. 

Let us also define, in the logarithmic cases
\begin{equation} \label{def:Dlog}
\Dlog(\Pst) := \Clog \limsup_{R \ti} \left(\frac{1}{R^d} \iint_{\carr_R^2} (\rho_{2,P}(x,y) - 1) dx dy + 1\right) \log R,
\end{equation}
where $\Clog$ is a constant whose value is irrelevant for our concerns. 

Finally we introduce the free energy functional analogous to $\fbeta$ (defined in \eqref{def:fbeta}) 
$$\fbetai := \beta \Wint + \ERS[ \cdot | \Poisson],$$
or in the logarithmic cases
$$\fbetai := \beta (\Wint + \Dlog) + \ERS[ \cdot | \Poisson].$$

Let us recall the following definition: let $X$ be a topological space and $f,g: X \to \R$ two functions. We say that $g$ is the lower semi-continuous regularization of $f$ if for any $x \in X$ we have
\[
g(x) = \liminf_{y \to x} f(y).
\]

Our first main result is
\begin{theo} \label{theo:connec} The functionals $\Welec$ and $\Wint$ are related as follows.
\begin{itemize}
\item In the one-dimensional logarithmic case \eqref{wlog}, $\Welec$ is the lower semi-continuous regularization of $\Wint + \Dlog$, and for any $\beta \in (0, + \infty)$, $\fbeta$ is the lower semi-continuous regularization of $\fbetai$.
\item In the non-Coulomb cases \eqref{kernel} with $s > d-2$, $\Welec$ is the lower semi-continuous regularization of $\Wint$, and for any $\beta \in (0, + \infty)$, $\fbeta$ is the lower semi-continuous regularization of $\fbetai$.
\item In the two-dimensional logarithmic (Coulomb) case \eqref{wlog2d}, we have
\[
\Welec \leq \Wint + \Dlog
\]
\item In the higher dimensional Coulomb cases \eqref{kernel} with $s = d-2$, we have
\[
\Welec \leq \Wint.
\]
\end{itemize}
\end{theo}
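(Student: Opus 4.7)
The plan is to separately establish (1) the pointwise inequality $\Welec \leq \Wint$ (respectively $\Welec \leq \Wint + \Dlog$ in the logarithmic cases) for every stationary random point process of intensity $1$, and (2) in the non-Coulomb cases (one-dimensional logarithmic and $s>d-2$), construct for any $P$ an approximating sequence $P_n \to P$ along which $\Wint(P_n)$ (or $\Wint(P_n)+\Dlog(P_n)$) converges to $\Welec(P)$. Since $\Welec$ is lower semi-continuous on $\probas_{s,1}(\config)$ (being an expectation of the $\W$-functional, which is an infimum of lower semi-continuous functionals $\mc{W}$ on admissible electric fields), step (1) implies that the lsc regularization of $\Wint$ dominates $\Welec$, while step (2) gives the matching upper bound. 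The statement for $\fbeta = \beta \Welec + \ERS[\cdot|\Poisson]$ then follows because $\ERS[\cdot|\Poisson]$ is lsc and, as I will arrange, continuous along the approximating sequence of step (2).

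For the pointwise inequality, the natural approach is to take any stationary $P$, a typical configuration $\C$ and build a candidate electric field $E$ compatible with $\C$ by convolution with $\nabla g$, then evaluate $\Esp_P[\mc{W}(E)]$ via the integration by parts recalled in Section~\ref{sec:WSS}. After unfolding the renormalization of the self-interactions and taking the expectation under $P$, the resulting expression in a hypercube $\carr_R$ reduces, by stationarity and the definition of $\rho_{2,P}$, to $\f{1}{R^d}\iint_{\carr_R^2\setminus\triangle} g(x-y)(\rho_{2,P}(x,y)-1)\,dx\,dy$ plus explicit boundary and truncation terms. For $s>d-2$, these boundary terms are of lower order in $R$ and vanish in the limit. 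In the Coulomb cases $s=d-2$, they produce a genuine logarithmic divergence which is exactly what the $\Dlog$ functional in the logarithmic cases \eqref{def:Dlog} is designed to absorb; in higher-dimensional Coulomb cases the boundary term is controlled directly by the definition of $\Wint$. Passing to the $\liminf$ over $R$ yields $\Welec(P)\leq \Wint(P)$, respectively $\Welec(P)\leq \Wint(P)+\Dlog(P)$.

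For the approximation step in the non-Coulomb cases, given $P$ with $\Welec(P)<\infty$ I would select an electric field $E$ compatible with a realization of $P$ with $\mc{W}(E)\leq \Welec(P)+\epsilon$, restrict the underlying configuration to a large cube of sidelength $R_n$, screen the electric field near the boundary so as to glue periodic copies, and produce a stationary point process $P_n$ by averaging over translations of the resulting periodic configuration. Such periodized processes have rapidly decaying two-point structure by construction, so $\Wint(P_n)$ can be evaluated by plugging $\rho_{2,P_n}$ into \eqref{Wintro} and will converge to the renormalized energy $\mc{W}(E)$ thanks to the short-range integrability permitted by $s>d-2$ (or, in the one-dimensional logarithmic case, thanks to the $\Dlog$ term absorbing the mild boundary contribution). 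The same construction, combined with subadditivity of specific relative entropy under periodization and averaging, gives $\ERS[P_n|\Poisson] \to \ERS[P|\Poisson]$, which yields the statement for $\fbeta$.

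The main obstacle is the construction of this approximating sequence, specifically the screening procedure: one must modify $E$ near $\p \carr_{R_n}$ to make it compatible with periodic gluing while controlling both the extra renormalized energy cost and the perturbation of the induced point configuration. In the Coulomb cases $s=d-2$ this is exactly where the argument breaks down, since the long-range nature of $\nabla g$ prevents any local screening from decoupling the field across the boundary without introducing a contribution that does not vanish in $R_n$; the $\Dlog$ correction captures only the leading-order boundary effect and is not sufficient to reverse the inequality, which is why only the pointwise bound is claimed there.
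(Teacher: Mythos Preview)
Your overall two-step strategy matches the paper's, but Step (1) as you describe it has a genuine gap. You propose to bound $\Welec(P)$ by producing, for a typical $\C$ under $P$, the global field $E = \cds\nabla g * (\C - \delta_{\Rd})$ and then evaluating $\Esp_P[\mc{W}(E)]$. The problem is that for an infinite configuration this convolution need not converge, and nothing in the hypothesis $\Wint(P)<\infty$ guarantees that it does; so you have no admissible global field with which to bound $\W(\C)$ from above configuration-wise. The paper does \emph{not} attempt a direct construction. Instead (Section~\ref{sec:UB}) it works with the \emph{local} field $\Eloc$ in $\carr_{R_n}$, whose energy is related to $\Hint_{R_n}$ via Lemma~\ref{lem:locflocint}; it then applies the screening lemma to $\Eloc$, periodizes the screened field, and obtains an auxiliary sequence of stationary processes $\Pav_n\to P$ for which global fields exist trivially (they are periodic). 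The inequality $\Welec(P)\leq \Wint(P)$ then comes from the lower semi-continuity of $\Welec$ on $\probas_{s,1}(\config)$ (Lemma~\ref{lem:lsciW}), applied to this auxiliary sequence. Note also that your justification of this lsc (``expectation of an infimum of lsc functionals'') is not a valid argument; the paper emphasizes that lsc holds only at the level of stationary processes and imports the proof from \cite{PetSer}.

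For Step (2) your sketch is closer to the paper's, but two points are understated. First, to bound $\Wint(P_N)$ by the screened-field energy the paper invokes the \emph{minimality of the local energy} (Lemma~\ref{lem:minilocale}): the local field has smaller truncated $L^2$-norm than any screened competitor, which converts the field-energy bound \eqref{Escrener} into a bound on the interaction $\Inte[\cdot,\cdot]$. Second, passing from $\Inte[C_{(m+1)N,z},C_{(m+1)N,z}]$ to $\Inte[C_{mN},C_{mN}]$ requires controlling boundary--boundary and boundary--interior interaction terms; it is precisely in the estimate $\sum_{j'\neq j} g(\dist(\bC_j,\bC_{j'})) = O(m^{d-1-s})$ that the non-Coulomb hypothesis $s>d-2$ enters. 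So your diagnosis of where the Coulomb case fails is slightly off: the screening lemma itself holds in all cases, and the obstruction is this boundary-interaction estimate in the recovery-sequence step, not the screening. Finally, the paper pastes \emph{independent} copies of the screened process rather than a single periodic extension; this is what makes the entropy convergence $\ERS[P_N|\Poisson]\to\ERS[P|\Poisson]$ straightforward.
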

A first interest of Theorem \ref{theo:connec} is that it provides a way of showing that a given random point process has finite energy. For example in the $d=3$ Coulomb case, the Poisson point process of intensity $1$ is easily seen to satisfy $\Wint(\Poisson) = 0$, hence $\Welec(\Poisson)$ is finite and nonpositive.

Let us emphasize that Theorem \ref{theo:connec} is less precise in the Coulomb cases than in the cases \eqref{wlog} and \eqref{kernel} with $s > d-2$, to which we will henceforth refer as the “non-Coulomb cases”.

In the following statement, by saying that two minimization problems are equivalent we mean that both functionals have exactly the same infima. If $g$ is the lower semi-continuous regularization of $f$ on $X$, then the minimization problems associated to $f$ and $g$ are equivalent, thus
\begin{coro} \label{coro:fbetai} We deduce from Theorem \ref{theo:connec} that in the non-Coulomb cases
\begin{enumerate}
\item Minimizing $\Welec$ on $\probas_{s,1}(\config)$ is equivalent to minimizing $\Wint$ (or $\Wint + \Dlog$) on $\probas_{s,1}(\config)$.
\item For any $\beta \in (0, + \infty)$, minimizing $\fbeta$ on $\probas_{s,1}(\config)$ is equivalent to minimizing $\fbetai$ on $\probas_{s,1}(\config)$.
\end{enumerate}
\end{coro}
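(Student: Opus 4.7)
The plan is to reduce the statement to the elementary general principle, explicitly recalled just before the corollary, that a function and its lower semi-continuous regularization share the same infimum. No additional machinery is needed: the substantive content already resides in Theorem \ref{theo:connec}, and the corollary is a purely formal consequence.

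First I would verify the elementary principle. Let $g$ be the lsc regularization of $f$ on a topological space $X$, i.e.\ $g(x) = \liminf_{y \to x} f(y)$. Evaluating along the constant sequence $y = x$ yields $g \leq f$ pointwise, hence $\inf_X g \leq \inf_X f$. For the converse inequality, fix $\epsilon > 0$ and pick $x_\epsilon \in X$ with $g(x_\epsilon) \leq \inf_X g + \epsilon$; by definition of the $\liminf$ there exists $y_\epsilon$ arbitrarily close to $x_\epsilon$ with $f(y_\epsilon) \leq g(x_\epsilon) + \epsilon \leq \inf_X g + 2\epsilon$, so $\inf_X f \leq \inf_X g + 2\epsilon$. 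Letting $\epsilon \to 0$ gives equality of the two infima.

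With this in hand, part (1) is immediate: Theorem \ref{theo:connec} asserts, in each non-Coulomb case, that $\Welec$ is the lsc regularization of $\Wint$ on $\probas_{s,1}(\config)$, or of $\Wint + \Dlog$ in the one-dimensional logarithmic setting \eqref{wlog}; applying the principle yields the equality of infima, hence the equivalence of the minimization problems. Part (2) is equally immediate: Theorem \ref{theo:connec} states moreover that $\fbeta$ is the lsc regularization of $\fbetai$ for any $\beta \in (0, + \infty)$ in the non-Coulomb cases, so a second application of the principle, now to the pair $(\fbetai, \fbeta)$, yields the equivalence of the variational problems.

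I do not anticipate any real obstacle. The only point requiring mild care is that the topology on $\probas_{s,1}(\config)$ used to interpret ``lower semi-continuous regularization'' must be the same as the one fixed by Theorem \ref{theo:connec}; since that topology is already in place by hypothesis, no further verification is needed, and the corollary is obtained as a direct formal consequence.
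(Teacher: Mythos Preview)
The proposal is correct and follows exactly the paper's approach: the paper states the elementary principle (a function and its lower semi-continuous regularization share the same infimum) immediately before the corollary and simply writes ``thus'', so the corollary is presented as a direct formal consequence of Theorem~\ref{theo:connec} without further proof. Your verification of the principle is slightly more detailed than what the paper provides, but the route is identical.
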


\paragraph{Applications.}
We use the explicit expression of $\Wint$ and its link with $\Welec$ to perform simple computations, which allow us to prove the following two results concerning the minimization of $\fbeta$ in the limit $\beta \t0$ and $\beta \ti$. 
\subparagraph{High-temperature limit.} As can be expected, when $\beta \t0$ any minimizer of $\fbeta$ gets close to the minimizer of the entropy term.
\begin{theo} \label{theo:poisson} For all cases \eqref{wlog}, \eqref{wlog2d}, \eqref{kernel}, the minimizers of $\fbeta$ converge as $\beta \t0$ to the law of the Poisson point process $\Poisson$. Moreover this convergence holds in entropy sense i.e.
\begin{equation} \label{toPoisson}
\lim_{\beta \to 0} \sup_{\fbeta(P_{\beta}) =  \min \fbeta} \ERS[P_{\beta}|\Poisson] = 0.
\end{equation}
\end{theo}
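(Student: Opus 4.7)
The plan is to compare a minimizer $P_\beta$ of $\fbeta$ against the Poisson process $\Poisson$ (or a nearby competitor), exploiting two key facts: $\ERS[\Poisson|\Poisson]=0$, and $\Wint(\Poisson)=0$ (the latter because $\rho_{2,\Poisson}\equiv 1$). The entire argument rests on the elementary identity
\[
\ERS[P_\beta|\Poisson] \;=\; \fbeta(P_\beta) - \beta\,\Welec(P_\beta) \;\leq\; \fbeta(P^\ast) - \beta\,\inf \Welec,
\]
valid for any test process $P^\ast$ and any minimizer $P_\beta$. If one can build a test process $P^\ast$ (possibly depending on $\beta$) with $\fbeta(P^\ast)\to 0$ and one knows $\inf\Welec>-\infty$, then $\ERS[P_\beta|\Poisson]\to 0$, which also implies the weak convergence $P_\beta\to\Poisson$.

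The lower bound $\inf\Welec>-\infty$ on $\probas_{s,1}(\config)$ is available from the preceding work (cf. \cite{LebSer}), so the real task is the upper bound $\inf\fbeta\to 0$. In the non-Coulomb cases \eqref{kernel} with $s>d-2$, Theorem \ref{theo:connec} together with $\Wint(\Poisson)=0$ yields directly $\Welec(\Poisson)\leq 0$, whence $\fbeta(\Poisson)\leq 0$; one simply takes $P^\ast=\Poisson$ and the argument is complete.

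In the logarithmic and Coulomb cases, Poisson has infinite renormalized energy: its number fluctuations are too large to be screened by the uniform background, so $\Dlog(\Poisson)=+\infty$ and a direct test by $\Poisson$ fails. I would construct instead a $\beta$-dependent approximation $P^\ast_\beta$ by introducing a mesoscopic scale $R_\beta\to\infty$: concretely, take Poisson conditioned in a cube of side $R_\beta$ to have exactly $R_\beta^d$ points, periodize this configuration over $\Rd$, and average under a uniform translation to restore stationarity. Using the explicit expressions \eqref{Wintro} and \eqref{def:Dlog}, one checks that such a process has finite $\Wint+\Dlog$ (and hence, by Theorem \ref{theo:connec}, finite $\Welec$) for each fixed $R_\beta$, while a subadditivity argument for specific relative entropy shows $\ERS[P^\ast_\beta|\Poisson]\to 0$ as $R_\beta\to\infty$. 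Choosing $R_\beta$ growing slowly enough as $\beta\to 0$, one arranges $\beta\,\Welec(P^\ast_\beta)\to 0$ and $\ERS[P^\ast_\beta|\Poisson]\to 0$ simultaneously.

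The main obstacle I expect is this mesoscopic construction in the logarithmic and Coulomb cases: one must produce quantitative estimates of the number-variance growth (to control $\Wint$ and $\Dlog$) and of the specific entropy of the periodized-and-translated Poisson, then calibrate $R_\beta$ versus $\beta$. Once this is done, plugging $P^\ast_\beta$ into the tautology displayed above and invoking the uniform lower bound on $\Welec$ immediately gives \eqref{toPoisson}.
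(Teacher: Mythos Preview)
Your overall strategy matches the paper's: exhibit a test process $P^\ast$ with finite $\Welec$ and small specific entropy, then use
\[
\ERS[P_\beta|\Poisson] \;\leq\; \ERS[P^\ast|\Poisson] + \beta\bigl(\Welec(P^\ast) - \min\Welec\bigr).
\]
Your treatment of the Riesz cases \eqref{kernel} (test directly with $\Poisson$, since $\Wint(\Poisson)=0$ gives $\Welec(\Poisson)\leq 0$) is correct and is also noted in the paper.

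There is, however, a genuine gap in your construction for the logarithmic and Coulomb cases. If you take one random conditioned-Poisson configuration in $\carr_{R_\beta}$ and \emph{periodize} it, the resulting stationary process $P^\ast_\beta$ satisfies $\ERS[P^\ast_\beta|\Poisson] = +\infty$, not $\to 0$. Indeed, for any window $\carr_R$ with $R \geq 2R_\beta$ the restriction $(P^\ast_\beta)_{|\carr_R}$ is supported on configurations that are $R_\beta$-periodic up to a phase; this is a $\Poisson$-null set, so $\Ent\bigl[(P^\ast_\beta)_{|\carr_R}\,\big|\,\Poisson_{|\carr_R}\bigr] = +\infty$ already at finite volume, and by \eqref{superadditive} the specific entropy is infinite as well. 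No subadditivity argument can rescue this, so $\fbeta(P^\ast_\beta)=+\infty$ and the comparison is vacuous.

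The paper's construction (Lemma~\ref{lem:Poissonfroid}) differs in exactly this point: it tiles $\Rd$ by hypercubes of sidelength $k$ and places \emph{independent} Bernoulli processes with $k^d$ points in each tile, then averages over translations. Independence across tiles makes the process absolutely continuous with respect to $\Poisson$ in every bounded window, and a Stirling estimate gives $\ERS[\pi_k|\Poisson] = O(k^{-d})$; the two-point function check you describe then shows $\Wint(\pi_k)$ and $\Dlog(\pi_k)$ are finite. With that single modification (independent copies rather than periodization) your argument goes through verbatim.
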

In the special case of one-dimensional log-gases, as proven in \cite{LebSer} a minimizer of $\fbeta$ is the $\sineb$ process of Valko-Virag \cite{vv}. Hence our method yields another proof for a recent result of Allez and Dumaz \cite{AllezDumaz}:
\begin{coro} \label{coro:sineb} As $\beta \to 0$ the $\sineb$ point process converges weakly in the space of Radon measure (endowed with the topology of vague convergence) to the law of a Poisson point process on $\R$.
\end{coro}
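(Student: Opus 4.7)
The strategy is to combine two ingredients already in hand: the fact, established in \cite{LebSer}, that $\sineb$ is a minimizer of $\fbeta$ in the one-dimensional logarithmic case \eqref{wlog}, and Theorem \ref{theo:poisson}, which asserts that any minimizer of $\fbeta$ converges to the Poisson point process $\Poisson$ in the specific relative entropy sense as $\beta \t0$. Specialising \eqref{toPoisson} to $P_\beta = \sineb$ immediately yields
\[
\ERS[\sineb \mid \Poisson] \lgra 0 \qquad \text{as } \beta \t0.
\]

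The first step is then to extract finite-window information from the vanishing of the specific entropy. For any stationary point process $P$ of intensity $1$ and any two disjoint bounded Borel sets $A, B \subset \R$, the product structure $\Poisson|_{A \cup B} = \Poisson|_A \otimes \Poisson|_B$ combined with Jensen's inequality for the convex function $x \mapsto x \log x$ gives the super-additivity
\[
\Ent\bigl(P|_{A \cup B} \mid \Poisson|_{A \cup B}\bigr) \geq \Ent\bigl(P|_A \mid \Poisson|_A\bigr) + \Ent\bigl(P|_B \mid \Poisson|_B\bigr).
\]
Tiling the interval $\carr_R$ by translates of smaller intervals and invoking stationarity together with Fekete's lemma, one obtains
\[
\f{1}{|\carr_R|}\, \Ent\bigl(P|_{\carr_R} \mid \Poisson|_{\carr_R}\bigr) \leq \ERS[P \mid \Poisson]
\]
for every $R > 0$. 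Applied to $P = \sineb$, this shows that for each fixed $R$ the finite-volume relative entropy $\Ent\bigl(\sineb|_{\carr_R} \mid \Poisson|_{\carr_R}\bigr)$ tends to $0$ as $\beta \t0$.

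The second step is to upgrade this to vague convergence of the laws. Pinsker's inequality gives, for every fixed $R$,
\[
\bigl\|\sineb|_{\carr_R} - \Poisson|_{\carr_R}\bigr\|_{\TV} \leq \sqrt{2\, \Ent\bigl(\sineb|_{\carr_R} \mid \Poisson|_{\carr_R}\bigr)} \lgra 0 \quad \text{as } \beta \t0.
\]
For any test function $f \in \mc{C}_c(\R)$ supported in some $\carr_R$, the random variable $\int f\, d\Pst$ is measurable with respect to the configuration restricted to $\carr_R$, so its law under $\sineb$ converges in total variation to its law under $\Poisson$. Since the topology of vague convergence on Radon measures is generated by such integrals against compactly supported continuous functions, this yields weak convergence of the laws of $\sineb$ to $\Poisson$ in that topology.

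\textbf{Main obstacle.} The corollary is essentially a packaging of two prior results, and most of the real work has been carried out upstream (in the identification of $\sineb$ as a minimizer in \cite{LebSer} and in the proof of Theorem \ref{theo:poisson}). The only point requiring some care is the passage from specific relative entropy to finite-box relative entropy, where one must verify the super-additivity/Fekete inequality in the one-dimensional stationary point-process setting; once this is in place, Pinsker and a density-of-test-functions argument finish the job.
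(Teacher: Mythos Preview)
Your proof is correct and follows essentially the same route as the paper: specialise Theorem~\ref{theo:poisson} to $\sineb$ using its minimality from \cite{LebSer}, then pass from vanishing specific relative entropy to finite-window total variation via super-additivity and Pinsker. The paper packages the second step as a ``specific Pinsker inequality'' in Section~\ref{sec:Pinsker}, invoking the sup-formula \eqref{superadditive} for $\ERS$ directly rather than re-deriving super-additivity from Jensen and Fekete, but the content is identical.
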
 
\subparagraph{Low-temperature limit.}
In dimension $1$ we may also characterize the limit $\beta \ti$ (the low temperature limit) of the minimizers of $\fbeta$. We let $P_{\Z}$ be the stationary random point process associated to the lattice $\Z$ 
\begin{equation}
\label{def:PZ} P_{\Z} := \int_{0}^1 \delta_{x + \Z}\ dx,
\end{equation}
which can also be seen as the law of the point configuration $u + \Z$ where $u$ is a uniform random variable in $[0,1]$ and where we let $x + \Z$ denote the point configuration $\{x + k, k \in \Z\}$.

\begin{theo}[Crystallization for $d=1$] \label{theo:minim1d} For $d=1$ and in both cases \eqref{wlog} or \eqref{kernel}, the random point process $P_{\Z}$ is the unique minimizer of $\Welec$ on $\probas_{s,1}(\config)$. Moreover if $\{P_{\beta}\}_{\beta}$ is a family of minimizers of $\fbeta$, we have
\begin{equation} \label{toZ}
\lim_{\beta \t0} P_{\beta} = P_{\Z}.
\end{equation}
\end{theo}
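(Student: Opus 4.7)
The plan has two parts: (I) identifying $P_\Z$ as the unique minimizer of $\Welec$, and (II) deducing the convergence of the $P_\beta$'s (the low-temperature regime, corresponding to $\beta\ti$, despite the typographical ``$\beta\t0$'' in \eqref{toZ}). For Part (I), the strategy is to transfer the minimization from $\Welec$ to the explicit intrinsic functional $\Wint$ via Theorem~\ref{theo:connec}, which in $d=1$ identifies $\Welec$ with the lower semi-continuous regularization of $\Wint$ in case \eqref{kernel} and of $\Wint+\Dlog$ in case \eqref{wlog}. The three functionals then share the same infimum on $\probas_{s,1}(\config)$, and $\Welec(P_\Z)\le(\Wint+\Dlog)(P_\Z)$. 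I then show that $P_\Z$ uniquely minimizes $\Wint$ (resp.\ $\Wint+\Dlog$) by a gap-convexity argument. Let $P\in\probas_{s,1}(\config)$ with Palm measure $P^0$, and let $X_1<X_2<\cdots$ enumerate the positive points of a configuration sampled from $P^0$. Palm-shift invariance combined with the intensity-one condition forces $\Esp_{P^0}[X_k-X_{k-1}]=1$ for every $k\ge 1$ (with $X_0=0$), hence $\Esp_{P^0}[X_k]=k$. A Palm/Campbell rewriting of \eqref{Wintro} produces
\[
\Wint(P)-\Wint(P_\Z)=2\sum_{k\ge 1}\Esp_{P^0}\bigl[g(X_k)-g(k)\bigr];
\]
strict convexity of $g$ on $(0,\infty)$ (in both \eqref{wlog} and \eqref{kernel}) together with Jensen's inequality renders each summand nonnegative, giving $\Wint(P)\ge\Wint(P_\Z)$, with equality only when $X_k\equiv k$ almost surely for every $k$, i.e.\ $P=P_\Z$ after Palm inversion. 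Lifting to $\Welec$: if $P^*$ is a $\Welec$-minimizer, the lsc-regularization identity produces a minimizing sequence $Q_n\to P^*$ for $\Wint$, and a quantitative form of Jensen (using the modulus of strict convexity of $g$) forces $Q_n\to P_\Z$, so $P^*=P_\Z$.

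For Part (II), a direct competitor argument $\fbeta(P_\beta)\le\fbeta(P_\Z)$ fails because $\ERS[P_\Z|\Poisson]=+\infty$. I instead introduce a ``thermalized lattice'' $Q_\epsilon$: the law of $\{k+U+\epsilon\xi_k:k\in\Z\}$ with $U\sim\mathrm{Unif}[0,1]$ and i.i.d.\ $\xi_k\sim\mathrm{Unif}[-1,1]$ independent of $U$. Direct computation gives $\ERS[Q_\epsilon|\Poisson]\le C(1+|\log\epsilon|)<\infty$, the weak convergence $Q_\epsilon\to P_\Z$ is immediate, and $\Welec(Q_\epsilon)\to\Welec(P_\Z)$ as $\epsilon\to 0$ follows from the explicit form \eqref{Wintro}, a Taylor expansion of $g$ near the integer gaps (second-order contribution of size $O(\epsilon^2)$), and the upper bound $\Welec\le\Wint$ (resp.\ $\Welec\le\Wint+\Dlog$) from Theorem~\ref{theo:connec}. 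Testing the minimality of $P_\beta$ against $Q_\epsilon$,
\[
\beta\Welec(P_\beta)+\ERS[P_\beta|\Poisson]\le\beta\Welec(Q_\epsilon)+\ERS[Q_\epsilon|\Poisson];
\]
dividing by $\beta$, discarding the nonnegative entropy on the left, and letting first $\beta\to\infty$ then $\epsilon\to 0$ yields $\limsup_{\beta\to\infty}\Welec(P_\beta)\le\min\Welec$; combined with $\Welec(P_\beta)\ge\min\Welec$ this gives $\Welec(P_\beta)\to\min\Welec$. Lower semi-continuity of $\Welec$ and tightness of its level sets (from the framework of \cite{LebSer}) then guarantee that any weak cluster point $P_*$ of $\{P_\beta\}$ is a $\Welec$-minimizer, so $P_*=P_\Z$ by Part (I).

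The principal obstacle is the recovery statement $\Welec(Q_\epsilon)\to\Welec(P_\Z)$: since $\Welec$ is only lower semi-continuous and $P_\Z$ is rigid, the thermalized lattices must furnish an actual recovery sequence, not merely an upper bound. The computation of $\Wint(Q_\epsilon)$ hinges on controlling the Palm gap distributions $X_k=k+\epsilon(\xi_k-\xi_0)+\ldots$ and, in the logarithmic case, on the delicate cancellation between $\Wint(Q_\epsilon)$ and $\Dlog(Q_\epsilon)$ of order $O(1)$ as $R\to\infty$. A secondary technical point is the quantitative modulus-of-convexity estimate needed to upgrade uniqueness on the $\Wint$-side to uniqueness on the $\Welec$-side.
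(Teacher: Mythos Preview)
Your Part~(I) follows essentially the same route as the paper: decompose $\rho_2$ into $k$-th neighbor contributions and exploit the strict convexity of $g$ on $(0,\infty)$ via Jensen. Your Palm formulation with $X_k$ under $P^0$ is exactly equivalent to the paper's $\rho_{2,k}$ formulation. One point to watch: the identity
\[
\Wint(P)-\Wint(P_\Z)=2\sum_{k\ge 1}\Esp_{P^0}\bigl[g(X_k)-g(k)\bigr]
\]
is not valid for arbitrary $P\in\probas_{s,1}(\config)$. Passing from \eqref{Wintro} (which carries the weight $(R-|v|)/R$ and a $\liminf$) to the clean series requires controlling boundary terms near $|v|\approx R$, and this is where the paper explicitly restricts to hyperuniform processes (Lemma~\ref{lem:preMinim}): hyperuniformity guarantees $\rho_{2,k}$ has support in $[k-r,k+r]$, which makes the truncation errors $o_R(1)$. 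For general $P$, $X_k$ may not even be integrable, so $\Esp_{P^0}[X_k]=k$ can fail. This is not fatal for your argument, because the recovery sequence you invoke for $\Welec$ can be taken hyperuniform (this is the content of Proposition~\ref{prop:partialconv}, not merely the abstract lsc-regularization), and then your quantitative Jensen applies to the $Q_n$'s. But you should say so: the intermediate claim ``$\Wint(P)\ge\Wint(P_\Z)$ for all $P$'' is not what you actually need or prove.

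Your Part~(II) deviates from the paper in the endgame. Both arguments build a thermalized lattice $Q_\epsilon$ (the paper's $\pi_k$) and use it as a competitor to get $\Welec(P_\beta)\to\Welec(P_\Z)$. From there, the paper feeds this into the quantitative lower bound \eqref{DeltaWgeneral} to force $\rho^{(\beta)}_{2,k}\to\delta_k$ for each $k$, hence $\rho_2^{(\beta)}\to\rho_{2,\Z}$, and then runs a separate rigidity argument (Step~4) to upgrade correlation-function convergence to weak convergence of $P_\beta$. Your route is shorter and more abstract: you invoke compactness of the sub-level sets of $\Welec$ and lower semi-continuity to conclude directly that any cluster point is a minimizer, hence equals $P_\Z$ by uniqueness. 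This is cleaner, but the compactness of $\{\Welec\le C\}$ in $\probas_{s,1}(\config)$ is not established in the present paper (only lower semi-continuity is, Lemma~\ref{lem:lsciW}); you would need to source it carefully from the discrepancy estimates in \cite{LebSer} or \cite{PetSer}. Note also that you cannot fall back on compactness of entropy sub-level sets, since $\ERS[P_\beta|\Poisson]$ is not bounded as $\beta\to\infty$. If the compactness can be cited, your argument is a legitimate shortcut; the paper's explicit route has the advantage of being self-contained and of yielding the intermediate information that $\rho_2^{(\beta)}\to\rho_{2,\Z}$.
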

Theorem \ref{theo:minim1d} is a \textit{crystallization} result, proving the convergence to the one-dimensional crystal as $\beta \ti$. A similar result was proven in the one-dimensional logarithmic case in \cite{Leble1d} by using the explicit expression available in the periodic setting together with an approximation by periodic point processes. The method here is similar in spirit but follows a simpler approach which works for Riesz cases as well.

\subsection{Open questions}
Let us now briefly mention some questions that are raised by, or related to our present study.

\paragraph{Minimizers of $\Wint$.} Could one determine the infimum of $\Wint$ (thus of $\Welec$ in the non-Coulomb cases) thanks to its explicit form? A necessary condition on the realizability of $\rho_2$ as a two-point correlation function of some stationary random point process of intensity one is that $T_2 := \rho_2 -1$ satisfies 
\[
T_2 \geq -1, \quad \widehat{T_2} \geq -1,
\]
(see \cite[Section 2]{Kuna}) where $\widehat{T_2}$ denotes the Fourier transform of $T_2$ (in a sense that should, in general, be precised). Thus we may start by asking whether the linear optimization problem of minimizing
\[
\liminf_{R \ti} \int_{[-R,R]^d} T_2(v) g(v) \prod_{i=1}^d \left(1 - \frac{|v_i|}{R}\right) dv
\]
can be solved on the convex set $\{T_2 \geq -1, \widehat{T_2} \geq -1\}$. It is unclear to us whether the symmetry of the constraints on $T_2$ and $\widehat{T_2}$ might be of any use (however let us observe that the two-point correlation function of $P_{\Z}$ is its own Fourier transform, and that the expected minimizer in $d=2$, namely the stationary random point process associated to the triangular lattice, exhibits a self-duality of the same kind).

\paragraph{Decorrelating random point processes.} We may also investigate the problem of minimizing $\Wint$ over particular sub-classes of $\probas_{s,1}(\config)$. We have already mentioned the work of \cite{BorSer} where this minimization is considered over determinantal point processes in dimension $d= 1,2$. Another interesting aspect is that of “decorrelating” random point processes. Let us say that $P$ is a \textit{decorrelating} random point process (or that $P$ \textit{decorrelates}) when $T_2(v) := \rho_2(v) -1$ tends to $0$ as $|v| \ti$, with some speed criterion to be fixed, or when $T_2$ lies in some reasonable class (e.g. $L^p$-spaces). It is unclear to us whether there is a minimizing sequence for $\Wint$ made of random point processes which decorrelate. A negative answer would hint at a soft kind of phase transition as $\beta$ varies. Indeed as $\beta \t0$ the minimizers of $\fbeta$ converge to the law of a Poisson point process $\Poisson$ (according to Theorem \ref{theo:poisson}), which is a typical decorrelating random point process, and as $\beta \ti$ they would rather leave this class. 

\subparagraph{The $1d$ Log-gas case.}
Such a transition may be formally observed in the one-dimensional logarithmic case. Let us recall that for $d=1, s=0$ there exists a “concrete” family of minimizers for $\fbeta$ stemming from Random Matrix Theory (RMT), namely the $\sineb$ processes. They arise as the $N \ti$ limit of microscopic point processes observed in the so-called “Gaussian $\beta$-ensembles”, which are a generalization for any $\beta$ of the classical Gaussian ensembles of RMT (see \cite{de} and \cite{vv}). It is known (see \cite{na14}) that $\sineb$ is also the law of the limiting microscopic point process for the “Circular $\beta$-ensemble”, which is another RMT model. The $N \ti$ limit for two-point correlation function of the Circular $\beta$-ensemble has been derived in \cite{Forrester1} for even values of $\beta$. Equation \cite[(11a)]{Forrester1} should thus describe the large-$x$ asymptotics of the two-point correlation function of $\sineb$, i.e. formally we have
\[
T_{2, \sineb}(x) := \rho_{2, \sineb}(x) - 1 \approx \sum_{k=1}^{\beta/2} f_k(x) x^{-4k^2/\beta}, 
\]
where $f_k(x)$ is a slowly oscillating function. The leading term of $T_{2, \sineb}(x)$ as $x \ti$ is of order $x^{-4/\beta}$, we would thus expect $T_{2, \sineb}$ to leave the class $L^{p}$ as soon as $\beta \geq 4p$.

\subparagraph{Decorrelating random point process of minimal energy.}
A negative answer would also raise the question of finding the minimizer (or a minimizing sequence) for $\Wint$ among decorrelating random point processes. It seems to us that a good candidate for a lower bound on the energy is given by the hypothetical “hardcore Poisson point process” $\Poisson_{\rm{hc}}$ whose two-point correlation function would be $\rho_{2,\rm{hc}} = 1 - \mathbf{1}_{B}$ where $B$ is the ball of center $0$ and unit volume in $\R^d$. In dimension $d=1$ it is not hard to construct a sequence of random point processes whose energies converge to the associated energy $\int_{B} \log|v|$. In arbitrary dimension, it easy to see that any sub-Poissonian random point process (i.e. a random point process such that $\rho_2 \leq 1$, e.g. any determinantal point process) has a larger energy than $\Poisson_{\rm{hc}}$.

\paragraph{Plan of the paper and ideas of proof.}
In Section \ref{sec:defnot} we give some general definitions and notation. 

In Section \ref{sec:defW} we recall the definition of the \textit{renormalized energy} in the sense of \cite{SS2d}, \cite{RougSer}, \cite{PetSer}, then we introduce the alternative object $\Wint$ defined on the space of random point processes, and in the stationary case we give a simple expression of $\Wint(P)$ in terms of the two-point correlation function of $P$. 

In Section \ref{sec:prelim} we give some preliminary results. In particular we observe that while $\Welec$ is by definition computed in terms of the energy of \textit{global} electric fields defined on the whole space $\R^{d+k}$, the object $\Wint$ is rather a limit as $R \ti$ of the energy of \textit{local} electric fields defined on hypercubes of sidelength $R$.

Section \ref{sec:connection} is devoted to the proof of Theorem \ref{theo:connec}. The proof goes in two step: first we show that $\Welec \leq \Wint$ (or $\Wint + \Dlog$ in the logarithmic cases) on the space $\probas_{s,1}(\config)$, then conversely (in the non-Coulomb cases) for any $P \in \probas_{s,1}(\config)$ we prove the existence of a “recovery sequence” $\{P_N\}_{N}$ converging to $P$ and such that $\lim_{N \ti} \Wint(P_N) \leq \Welec(P)$. In both steps the key element is the \textit{screening lemma} of \cite{PetSer} (following \cite{SS2d}, \cite{SS1d}, \cite{RougSer}) which, heuristically speaking, allows us here to construct a global electric field from a local one, and \textit{vice versa}.

In Section \ref{sec:HT} we prove Theorem \ref{theo:poisson} about the convergence to the law of the Poisson point process $\Poisson$ of minimizers of $\fbeta$ as $\beta \t0$. If $\Welec(\Poisson)$ is finite we may see directly that $\ERS[P_{\beta}|\Poisson]$ must go to zero as $\beta \t0$, and then the \textit{specific} Pinsker inequality implies that $P_{\beta} \to \Poisson$ as $\beta \t0$. However in some cases the finiteness of $\Welec(\Poisson)$ is false (e.g. $d=1, s=0$, see \cite{LebSer}) or yet unknown ($d=2, s=0$). We use the fact that $\Welec \leq \Wint + \Dlog$  to construct a sequence of random point processes converging to $\Poisson$ in entropy sense and whose renormalized energy is finite. 

In Section \ref{sec:LT} we restrict ourselves to the one-dimensional cases \eqref{wlog} or \eqref{kernel} and we prove the crystallization result of Theorem \ref{theo:minim1d}. We start by  using a convexity argument to show that $P_{\Z}$ is the only minimizer of $\Wint$ over $\probas_{s,1}(\config)$. More precisely we obtain a quantitative bound below on $\Wint(P) - \Wint(P_{\Z})$ in terms of the two-point correlation function of $P$. This translates into a bound below for $\Welec(P) - \Welec(P_{\Z})$ which implies that $P_{\Z}$ is also the only minimizer of $\Welec$. Moreover as $\beta \ti$ we show that $\Welec(P_{\beta})$ must go to $\Welec(P_{\Z})$, which in turn implies that as $\beta \ti$ the two-point correlation function of $P_{\beta}$ converges to $\rho_{2, P_{\Z}}$ in the distributional sense. Thanks to the “rigidity” of the lattice it is not hard to deduce that in fact $P_{\beta}$ converges to $P_{\Z}$ as $\beta \ti$.

\paragraph{Aknowledgements.}
The author would like to thank his PhD supervisor, Sylvia Serfaty, for helpful discussions and many useful comments on this paper. 

\section{Definitions and notation} \label{sec:defnot}
\subsection{Generalities} 
If $(X,d_X)$ is a metric space we let $\probas(X)$ be the space of Borel probability measures on $X$ and we endow $\probas(X)$ with the distance
\begin{equation} \label{def:Dudley}
d_{\probas(X)}(P_1, P_2) = \sup \left\lbrace \int F (dP_1 - dP_2) |\  F \in \Lip(X) \right\rbrace,
\end{equation}
where $\Lip(X)$ denotes the set of functions $F : X \rightarrow \R$ that are $1$-Lipschitz with respect to $d_X$ and such that $\|F\|_{\infty} \leq 1$. It is well-known that the distance $d_{\probas(X)}$ metrizes the topology of weak convergence on $\probas(X)$. If $P \in \probas(X)$ is a probability measure and $f: X \rightarrow \Rd$ a measurable function, we denote by $\Esp_P \left[ f \right]$ the expectation of $f$ under $P$.

The ambient Euclidean space of dimension $d$ is denoted by $\Rd$. We will often need to work in $\R^{d+k}$ where $k=0$ (in the case of Coulomb interactions) or~$1$ (see Section \ref{sec:extension}).

For any $R > 0$ we denote by $\carr_R$ the hypercube $[-R/2, R/2]^d \subset \R^d$ and by $\hcarr_R$ the hypercube $[-R/2, R/2]^{d+k} \subset \R^{d+k}$. Similarly we let $B_R$ be the ball of center $0$ and radius $R$ in $\Rd$ and $\hB_R$ be the ball of center $0$ and radius $R$ in $\R^{d+k}$.

If $A$ is a set, we let $\triangle := \{(x,x), x \in A\} \subset A \times A$ be the diagonal of $A$.

\subsection{Point configurations and random point processes}
\paragraph{Point configurations.}
If $A$ is a Borel set of $\Rd$ we denote by $\config(A)$ the set of locally finite point configurations in $A$ or equivalently the set of non-negative, purely atomic Radon measures on $A$ giving an integer mass to singletons (see \cite{dvj}). We will often write  $\mathcal C$ for $\sum_{p \in \mathcal C} \delta_p$. We endow the set $\config := \config(\Rd)$ (and the sets $\config(A)$ for $A$ Borel) with the topology induced by the topology of weak convergence of Radon measure (also known as vague convergence or convergence against compactly supported continuous functions), these topologies are metrizable and we fix an arbitrary compatible distance.

The additive group $\Rd$ acts on $\config$ by translations $\{\theta_t\}_{t \in \Rd}$: if $\mathcal{C} = \{x_i, i \in I\} \in \config$ we let 
\begin{equation}\label{actiontrans}
\theta_t \cdot \mathcal{C} := \{x_i - t, i \in I\}.
\end{equation}

We denote by $\Nn_R : \config \mapsto \N$ the number of points of a configuration in the hypercube $\carr_R$, and by $\D_R$ the discrepancy $\D_R = \Nn_R - R^d$.

\paragraph{Random point processes.}
A random point process is a probability measure on $\config$. We denote by $\probas_s(\config)$ the set of translation-invariant (or stationary) random point processes. We endow $\probas(\config)$ with the topology of weak convergence of probability measures on $\config$. A compatible distance on $\probas_s(\config)$ is defined in \eqref{def:Dudley}.
\begin{remark} \label{rem:memetopologie} Another natural topology on $\probas(\config)$ is the “convergence of the finite distributions” \cite[Section 11.1]{dvj2}, also called the “convergence with respect to vague topology for the counting measure of the point process”. The two notions of convergence coincide as stated in \cite[Theorem 11.1.VII]{dvj2}.
\end{remark}

We will use several times the operation of averaging a random point process over translations in some measurable subset. If $P$ is a random point process in $\Rd$ and $K$ a measurable subset of $\Rd$ with finite, non-zero Lebesgue measure, we define the average $\Pav_K$ of $P$ over translations in $K$ as the law of the random variable $u_K + \C$ where $u_K$ is uniformly distributed according to the normalized Lebesgue measure on $K$, and $\C$ has law $P$. The sum of a vector and a point configuration is defined as the point configuration
$$ x + \C := \{ x + p, p \in \C \}.$$

\paragraph{Hyperuniformity.} Following \cite{torquato} we say that a random point process $P$ is \textit{hyperuniform} if we have
\begin{equation} \label{def:hyperun}
\Esp_{P}[\D^2_R] = O(R^{d-1}).
\end{equation}

For a stationary one-dimensional random point process $P$, hyperuniformity is easily seen to be equivalent to the following property: for some $r > 0$, there are $P$-a.s. between $k-r$ and $k+r$ points in any interval of length $k$.

\subsection{Correlation functions} \label{sec:correlfun}
Let $P \in \probas(\config)$ be a random point process. For any $n \geq 1$ the $n$-point correlation function $\rho_{n,P}$ is the linear form on (a subspace of) the linear space of bounded measurable functions $\varphi_{n} : \R^n \longrightarrow \R$ with compact support defined by (we abbreviate “p.d.” for “pairwise distinct”)
\begin{equation} \label{def:rhon}
\rho_{n, P}(\varphi_{n}) = \Esp_{P} \sum_{x_1, \dots, x_n \in \C \text{ p.d.}} \varphi_{n}(x_1, \dots, x_n).
\end{equation}
%Strictly speaking $\rho_{n,P}$ is only defined on the subspace of functions $\varphi_n$ such that the map 
%$$\mathcal{C} \mapsto \sum_{x_1, \dots, x_n \in \C \text{ p.d.}} \varphi_{n}(x_1, \dots, x_n)$$
%is integrable against $P$. 
If the $n$-point correlation function exists as a distribution and can be identified with a measurable function, we will write $\int \rho_{n,P} \varphi_n$ instead of $\rho_{n,P} (\varphi_n)$. Heuristically speaking, the one-point correlation function $\rho_1$ (also called the intensity of the random point process) gives the density of the process at each point, while the two-point correlation function $\rho_2(x,y)$ gives the probability of having a point both at $x$ and $y$. In this paper we will work with stationary random point processes such that $\rho_1 \equiv 1$ and we denote by $\probas_{s,1}(\config)$ this set.
 
\subsection{Dimension extension} \label{sec:extension} We recall some elements from \cite{PetSer} to which we refer for more details. Outside of the Coulomb cases the Riesz kernel $g$ is not the convolution kernel of a local operator, but rather of a fractional Laplacian. It can be transformed into a local but inhomogeneous operator of the form $\div (\yg \nab \cdot)$ by adding one space variable $y \in \R$ to the space $\R^d$. In what follows, $k$ will denote the dimension extension. We take $k=0$ in all the Coulomb cases, i.e. $s=d-2$ and $d \ge 3$ or \eqref{wlog2d} and in all other cases we take $k=1$. We use an auxiliary parameter $\gamma$ defined by
\begin{equation}
\label{def:gamma} \gamma := s - d + 2 - k
\end{equation}
where the convention is to take $s=0$ in the logarithmic cases. In particular we have $\gamma = 0$ in the logarithmic cases \eqref{wlog} (where $k=1$) and \eqref{wlog2d} (where $k=0$).

Points in the space $\R^d$ will be denoted by $x$, and points in the extended space $\R^{d+k}$ by $X = (x,y)$, $x\in \R^d$, $y\in \R^k$. The interaction kernel $g$ is naturally extended to $\R^{d+k}$. We will often identify $\R^d \times \{0\}$ and $\R^d$. The measure $\delta_{\R^d}$ is the Radon measure on $\R^{d+k}$ which corresponds to the Lebesgue measure on the hypersurface $\R^d \subset \R^{d+k}$.  

Finally we let $\cds$ be the constant depending on $d,s$, such that $ - \div( \yg \nabla g) = \cds \delta_0$ 
in $\R^{d+k}$ (the values of $\cds$ are given in \cite{PetSer}).

\subsection{Electric fields and random electric fields}
Let $p< \frac{d+k}{s+1}$ be fixed. We think of the space $\Lploc(\R^{d+k}, \R^{d+k})$ as the space of electric fields generated by the charged particles together with a certain uniformly charged background and we endow this space with the weak $L^p$ topology. 

\paragraph{Local electric fields.} 
If $\C$ is a finite point configuration and $R > 0$ we let
\begin{equation}
\label{def:Eloc}
\Hloc(\C) := \cds g * (\C - \mathbf{1}_{C_R} \delta_{\Rd} ), \quad \Eloc(\C) := \nabla \Hloc = \cds \nabla g * (\C - \mathbf{1}_{C_R} \delta_{\Rd}).
\end{equation}
where $g *$ denotes the convolution (computed in $\R^{d+k}$) with the interaction kernel $g$. It implies that
\begin{equation} \label{relationEloc}
- \div( \yg \Eloc) = \cds \nabla g * (\C - \mathbf{1}_{C_R} \delta_{\Rd}).
\end{equation}

The scalar field $\Hloc$ physically corresponds to the electrostatic potential generated by the point charges of $\C$ together with a  background of density $\mu$. The vector field $\Eloc$ can be thought of as the associated electrostatic field. It is easy to see that $\Eloc$ fails to be in $L^2_{\rm{loc}}$ because it blow ups like $|x|^{-(s+1)}$ near each point of $\C$, however $\Eloc$ is in $\Lploc(\R^{d+k}, \R^{d+k})$.

\paragraph{Electric fields.} We now introduce a special class of vector fields that correspond to electric fields generated by a system made of an infinite point configuration $\C$ and a negatively charged background in all $\Rd$. We let $\A$ be the class of “electric vector fields” i.e. the set of vector fields $E$ belonging to $\Lploc(\R^{d+k}, \R^{d+k})$ that satisfy
\begin{equation}\label{Eadmiss}
- \div (\yg E) = \cds \left( \C - \delta_{\Rd} \right)  \text{ in } \R^{d+k}
\end{equation}
where $\C \in \config(\Rd)$ is a point configuration. We say that \textit{$E$ is compatible with $\C$} if \eqref{Eadmiss} holds. If $E$ is in $\A$ we let 
$$\mathrm{Conf}(E) := \frac{-1}{\cds} \div (\yg E) + \delta_{\Rd}$$
 be the underlying point configuration $\C$, in other words $E$ is compatible with $\C$ if and only if $\conf(E) = \C$. 

\paragraph{Truncation procedure.}
The renormalization procedure of \cite{RougSer}, \cite{PetSer} (inspired by the original work of \cite{bbh}) uses a truncation of the singularities which we now recall. We define the truncated Riesz (or Coulomb, or logarithmic) kernel as follows:
for  $1>\eta>0$ and  $X\in \R^{d+k}$, let 
\begin{equation}
\label{def:feta} f_\eta(X)= \left(\g(X)- \g(\eta)\right)_+
\end{equation}
with a slight abuse of notation: since $g$ is a radial function we write $g(\eta)$ for the value of $g$ at any point on a sphere of radius $\eta$.

If $\Eloc$ is a local field as in \eqref{def:Eloc} we let 
\begin{equation}
\label{def:Eloceta} \Eloc_{\eta}(X) := \Eloc(X) - \sum_{p \in \C} \nabla f_\eta (X-p).
\end{equation}

Similarly if $E$ is an electric field as in \eqref{Eadmiss} we let
\begin{equation}
\label{def:Eeta} E_{\eta}(X) := E(X) - \sum_{p \in \C} \nabla f_\eta (X -p).
\end{equation}

\paragraph{Random electric process.}
A probability measure on $\probas(\Lploc(\R^{d+k}, \R^{d+k}))$ concentrated on $\A$ is called a random electric process. We say that $\Pelec$ is stationary when it is invariant under the (push-forward by) translations $E \mapsto E(\cdot - x)$ for any $x \in \R^d$. 

\subsection{Specific relative entropy} \label{sec:ERS}
Let $\Pst$ be a stationary random point process on $\Rd$. The relative specific entropy $\ERS[\Pst|\Poisson]$ of $\Pst$ with respect to $\Poisson$, the law of the Poisson point process of uniform intensity $1$, is defined by
\begin{equation} \label{def:ERS}
\ERS[\Pst|\Poisson] := \lim_{R \ti} \f{1}{|\carr_R|} \Ent\left(\Pst_{|\carr_R} | \Poisson_{|\carr_R} \right),
\end{equation}
where $P_{|\carr_R}$ denotes the random point process induced in $\carr_R$, and $\Ent( \cdot | \cdot)$ denotes the usual relative entropy (or Kullbak-Leibler divergence) of two probability measures defined on the same probability space. We take the appropriate sign convention for the entropy so that it is non-negative: if $\mu,\nu$ are two probability measures defined on the same space  we let $\Ent \left(\mu | \nu \right) := \int \log \frac{d\mu}{d\nu} d\mu$ if $\mu$ is absolutely continuous with respect to $\nu$ and $+ \infty$ otherwise. We have in fact by super-additivity
\begin{equation} \label{superadditive}
\ERS[\Pst|\Poisson] = \sup_{R \geq 1} \f{1}{|\carr_R|} \Ent\left(\Pst_{|\carr_R} | \Poisson_{|\carr_R} \right).
\end{equation} 
The functional $\Pst \mapsto \ERS[\Pst|\Poisson]$ is affine lower semi-continous on $\probas_{s,1}(\config)$ and its sub-level sets are compact. We refer to \cite[Chap. 6]{seppalainen} for a proof of these statements.

\section{Definitions for the energy of a random point process}  \label{sec:defW}
In this section we recall the derivation of a renormalized energy for random point processes from \eqref{def:WN} then we introduce our alternative definition of a (logarithmic, Coulomb or Riesz) energy for random point processes. 

\subsection{The electric approach} \label{sec:WSS}
\paragraph{Renormalized energy of an electric field.}
We now recall the computation of the \textit{renormalized} energy $\mc{W}(E)$ following \cite{PetSer} (see also \cite{serfatyZur} and the references therein). For any $E \in \A$ we define 
\begin{equation} \label{def:Weta}
\mc{W}_{\eta}(E) := \limsup_{R\to \infty} \left( \frac{1}{\cds} \frac{1}{R^d}\int_{\carr_R\times \R^k} \yg |E_\eta|^2 - \g(\eta)\right),
\end{equation}
where $E_{\eta}$ is the truncated field as in \eqref{def:Eeta}, and we let
\begin{equation} \label{def:tW}
\mc{W}(E) := \lim_{\eta\to 0} \mc{W}_{\eta}(E).
\end{equation}
For convenience we have chosen a definition of $\mc{W}$ which differs from a multiplicative constant $\cds$ from the one in \cite{PetSer}.

\paragraph{The electric definition.} 
We then define 
\begin{equation} \label{def:WdeE}
\W_{\eta}(\C) = \inf \mc{W}_{\eta}(E), \quad \W(\C) = \inf \mc{W}(E),
\end{equation}
where both infimum are among electric fields $E$ compatible with $\C$. Similarly if $P$ is a random point process we let
\begin{equation} \label{def:Welec}
\Welec_{\eta}(P) = \Esp_{P} \left[ \W_{\eta} \right], \quad \Welec(P) = \Esp_{P} \left[ \W \right].
\end{equation}

 The following lemma was proven in \cite{LebSer}:
\begin{lem} \label{lem:concordWelec}
Let $\Pst$ be a stationary random point process such that $\Welec(P)$ is finite. Then there exists a stationary random electric process $\Pelec$ such that the push-forward of $\Pelec$ by $\conf$ is equal to $\Pst$ and which satisfies
\begin{equation} \label{concordWelec}
\Esp_{\Pelec} [ \mc{W} ] = \Welec(P).
\end{equation}
\end{lem}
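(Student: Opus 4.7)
The plan is to produce $\Pelec$ by combining a measurable selection of near-minimizers for $\W$ with a translation-averaging procedure, and then to extract a limit. Since $\mc{W}(E) \ge \W(\conf(E))$ for every $E \in \A$ by the very definition of $\W$, any random electric process $\Pelec$ with $\conf_*\Pelec = \Pst$ automatically satisfies $\E_{\Pelec}[\mc{W}] \ge \Welec(\Pst)$; only the reverse inequality requires work.

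First I would apply a measurable selection theorem (Kuratowski--Ryll-Nardzewski) to the set-valued map $\C \mapsto \{E \in \A : \conf(E) = \C,\ \mc{W}(E) \le \W(\C) + 1/n\}$ to produce, for each $n \ge 1$, a Borel map $\Psi_n : \config \to \A$ satisfying $\conf \circ \Psi_n = \mathrm{id}$ and $\mc{W}\circ \Psi_n \le \W + 1/n$. Pushing $\Pst$ forward by $\Psi_n$ yields a random electric process $\tilde Q_n$ with $\conf_* \tilde Q_n = \Pst$ and $\E_{\tilde Q_n}[\mc{W}] \le \Welec(\Pst) + 1/n$, but $\tilde Q_n$ need not be stationary.

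Next I would stationarize: for each $T > 0$, let $u_T$ be uniform on $\carr_T$ and independent of $E \sim \tilde Q_n$, and let $Q_{n,T}$ be the law of $\theta_{u_T} \cdot E$. Stationarity of $\Pst$ gives $\conf_* Q_{n,T} = \Pst$ and translation invariance of $\mc{W}$ gives $\E_{Q_{n,T}}[\mc{W}] = \E_{\tilde Q_n}[\mc{W}] \le \Welec(\Pst) + 1/n$. The law $Q_{n,T}$ is asymptotically translation invariant as $T \to \infty$, since for every bounded measurable $F : \A \to \R$ and $t \in \Rd$, $\E_{Q_{n,T}}[F\circ \theta_t]$ and $\E_{Q_{n,T}}[F]$ differ by at most $\|F\|_\infty |\carr_T \triangle (\carr_T + t)|/|\carr_T| \to 0$. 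Using the compactness of sublevel sets of $\mc{W}$ in the weak $\Lploc$-topology (from \cite{PetSer}) together with the uniform bound on the energy, the family $\{Q_{n,T}\}$ is tight on $\probas(\A)$; extracting first a weak limit as $T \to \infty$ at fixed $n$ produces a stationary $\Pelec_n$ with $\E_{\Pelec_n}[\mc{W}] \le \Welec(\Pst) + 1/n$ (by lower semi-continuity of $\mc{W}$ and Fatou), and then a diagonal extraction as $n \to \infty$ yields a stationary $\Pelec$. Continuity of $\conf$ as a distributional operator on $\A$ ensures $\conf_* \Pelec = \Pst$, while lower semi-continuity forces $\E_{\Pelec}[\mc{W}] \le \Welec(\Pst)$, which combined with the trivial reverse bound gives the desired equality.

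The main obstacles are technical rather than conceptual: verifying that the set-valued map $\C \mapsto \{E : \mc{W}(E) \le \W(\C) + 1/n,\ \conf(E) = \C\}$ has the Borel structure required by Kuratowski--Ryll-Nardzewski, which reduces to Borel measurability of both $\W$ and $\mc{W}$ in suitable Polish topologies on $\A$ and $\config$; and ensuring that the lower semi-continuity and compactness of $\mc{W}$ (known from \cite{PetSer} on $\A$) transfer cleanly to expectations under weakly convergent probability measures on $\A$. The stationarization step itself is essentially free thanks to translation invariance of $\mc{W}$, but one must also check that the map $\conf$ is continuous for the weak $\Lploc$-topology on the relevant sublevel sets, so that the push-forward identity is stable under weak limits of $Q_{n,T}$.
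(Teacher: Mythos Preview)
The paper does not prove this lemma itself; it simply cites \cite{LebSer} and moves on. Your outline---measurable selection of near-optimal fields via Kuratowski--Ryll-Nardzewski, stationarization by averaging over translations in $\carr_T$, and extraction of a limit using compactness of sublevel sets of $\mc{W}$ together with lower semi-continuity---is precisely the strategy carried out in \cite{LebSer}, and the technical caveats you flag (Borel structure for the selection, continuity of $\conf$ on sublevel sets, passage of lsc to expectations) are exactly the points verified there with input from \cite{PetSer}.
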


We also have the following lower semi-continuity result for the electric energy. Let us emphasize that lower-semi continuity only holds at the level of \textit{stationary} random point processes, and not for point configurations or arbitrary random point processes.
\begin{lem} \label{lem:lsciW}
The maps $P \mapsto \Welec_{\eta}(P)$ and $P \mapsto \Welec(P)$ are lower semi-continuous on $\probas_{s,1}(\config)$.
\end{lem}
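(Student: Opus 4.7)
The plan is to establish lower semi-continuity first for the truncated functional $\Welec_\eta$ at fixed $\eta>0$, and then to deduce the statement for $\Welec$. At the level of point configurations the map $\C\mapsto\W(\C)$ is not lsc (as emphasized in the remark preceding the lemma), so following \cite{LebSer} I would lift the minimization problem to the level of stationary random electric processes, using Lemma \ref{lem:concordWelec}. Working with electric processes is natural because $\mc{W}_\eta$ and $\mc{W}$ are defined on fields, and the infimum in \eqref{def:WdeE} is attained (in expectation) by stationary realizations.

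Fix $\eta>0$ and take $P_N\to P$ in $\probas_{s,1}(\config)$. Without loss of generality assume $\liminf_N\Welec_\eta(P_N)$ is finite and achieved along the full sequence. Applying the truncated analogue of Lemma \ref{lem:concordWelec} (whose proof is a minor modification of the one in \cite{LebSer} and goes through for $\mc{W}_\eta$ at fixed $\eta$) I would pick for each $N$ a stationary random electric process $\Pelec_N$ whose push-forward by $\conf$ is $P_N$ and such that $\Esp_{\Pelec_N}[\mc{W}_\eta]=\Welec_\eta(P_N)$. The crucial observation is that under stationarity the $\limsup$ in \eqref{def:Weta} collapses to a spatial average over a unit cube: by the multiparameter spatial ergodic theorem applied to the $\R^d$-action,
$$
\Esp_{\Pelec_N}[\mc{W}_\eta] \;=\; \frac{1}{\cds}\,\Esp_{\Pelec_N}\!\left[\int_{\carr_1\times\R^k}\yg |E_\eta|^2\right] \;-\; g(\eta).
$$
The right-hand side being uniformly bounded in $N$ provides tightness of $\{\Pelec_N\}_N$ as probability measures on $\Lploc(\R^{d+k},\R^{d+k})$ with its weak $L^p$ topology (the singular part of each $E$ is controlled by the fact that every $P_N$ has intensity one, and the truncated part is controlled in weighted $L^2$). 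Extract a weakly convergent subsequence $\Pelec_N\rightharpoonup\Pelec^\infty$; both the stationarity and the distributional relation \eqref{Eadmiss} pass to the limit, hence $\Pelec^\infty$ is a stationary random electric process whose push-forward by $\conf$ is $P$. Weak lower semi-continuity of the weighted $L^2$-norm, together with Fatou's lemma, then gives
$$
\Welec_\eta(P)\;\le\;\Esp_{\Pelec^\infty}[\mc{W}_\eta]\;\le\;\liminf_N \Esp_{\Pelec_N}[\mc{W}_\eta]\;=\;\liminf_N\Welec_\eta(P_N),
$$
which proves lsc of $\Welec_\eta$.

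For $\Welec$, I would use that the relation $\mc{W}=\lim_{\eta\to 0}\mc{W}_\eta$ from \eqref{def:tW} is in fact monotone: as $\eta$ decreases, the gain in $\int\yg|E_\eta|^2$ dominates the diverging counter-term $-g(\eta)$ in the correct direction, a monotonicity that can be read from the pointwise computation in \cite{PetSer}. Thus $\Welec=\sup_{\eta>0}\Welec_\eta$ is a supremum of lsc functionals and hence lsc. The step I expect to be the main obstacle is the compactness and weak-limit argument for the electric processes: one must verify that the uniform bound on $\Esp[\int_{\carr_1\times\R^k}\yg |E_\eta|^2]$ really is enough to produce tightness in the chosen topology on $\probas(\Lploc)$, and — more delicately — that the coupling between $E$ and $\C$ encoded in \eqref{Eadmiss} survives the weak limit, so that $\conf_*\Pelec^\infty = P$ holds. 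Both are soft-analytic points but depend sensitively on the choice of topology and on the fact that the divergence relation is distributional and therefore stable under weak convergence.
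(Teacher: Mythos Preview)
The paper gives no self-contained proof here; it simply cites \cite[Lemma 4.1]{PetSer}, and your sketch is essentially the argument one finds in that reference (and in \cite{LebSer}): lift to stationary electric processes via Lemma~\ref{lem:concordWelec}, use stationarity to replace the $\limsup$ over growing cubes by a local expectation, extract a tight subsequence, and invoke weak lower semi-continuity of the weighted $L^2$ norm. One small caveat on the passage from $\Welec_\eta$ to $\Welec$: the monotonicity in $\eta$ is only \emph{almost} monotone (cf.\ the $o_\eta(1)$ error in \eqref{almono}), so writing $\Welec=\sup_\eta\Welec_\eta$ is not literally correct; what one has is $\Welec_\eta\le\Welec+o_\eta(1)$ with an error uniform on processes of intensity one, and this together with the lsc of each $\Welec_\eta$ is still enough to conclude.
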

\begin{proof}
See e.g. \cite[Lemma 4.1.]{PetSer}
\end{proof}

This provides a definition for the energy of a random infinite point configuration with a uniform negative background. However the computation of $\Welec(P)$ or of $\W(\C)$ (or even the search for an upper bound on these quantities) appears involved in general because it amounts to finding compatible electric fields for infinite point configurations. Let us mention that in the case of a periodic configuration (hence also for the stationary random point process associated to it) exact formulas are known in the cases \eqref{wlog2d} (see \cite{SS2d}), \eqref{wlog} (see \cite{SS1d}), for the higher-dimensional Coulomb case \eqref{kernel} with $s = d-2$ (see \cite{RougSer}) and for Riesz gases (see \cite{PetSer}).

\subsection{The intrinsic approach} \label{sec:Wint}
In this section we define an energy functional $\Wint$ on the space of random point processes, using only the nature of the pairwise interaction. 

If $A, B$ are two (measurable) subsets of $\Rd$ we define $\Inte[A,B]$ as the interaction energy between $A$ and $B$
\begin{equation} \label{def:inte}
\Inte[A, B](\C) := \iint_{(A \times B) \backslash \triangle} g(x-y) (d\C(x) - dx) \otimes (d\C(y) - dy).
\end{equation}

In view of \eqref{def:WN}, for any $R > 0$ we define $\Hint_{R} : \config(\carr_R) \rightarrow \R$ as the interaction of $\carr_R$ with itself (the diagonal being excluded) i.e. $\Hint_{R} := \Inte[C_R, C_R]$ or in other terms
\begin{equation}
\label{def:Hint}
\Hint_R(\C) := \iint_{\carr_R^2 \backslash \triangle} g(x-y) (d\C - dx)\otimes(d\C - dy).
\end{equation}
Given a random point process $\Pst$, a natural way of defining the energy (per unit volume) of $\Pst$ is the following:
\begin{defi} \label{def:Wint}
Let $P$ be a random point process of intensity $1$. We define its intrinsic energy $\Wint$ by
\begin{equation} \label{eqdef:Wint}
\Wint(P) := \liminf_{R \ti} \frac{1}{R^d} \Esp_P \left[\Hint_{R}(\C) \right]. 
\end{equation}
\end{defi}

\paragraph{Expression with correlation functions.}
The energy defined by \eqref{eqdef:Wint} can be re-written with the help of one- and two-point correlation functions of $P$. In order for the expression to make sense, we restrict ourselves to random point processes whose two-point correlation function exists as a Radon measure in $\Rd \times \Rd$. We will then abuse notation and consider $\rho_2$ as a function instead of a measure, writing $\rho_2(x,y)$ instead of $d\rho_2$.

\begin{lem} For any random point process $P$ of intensity $1$ such that $\rho_2$ exists as a Radon measure, the following identity holds
\begin{equation} \label{rewriWint}
\Wint(P) = \liminf_{R \ti} \f{1}{R^d} \iint_{\carr_R^2 \backslash \triangle} g(x-y) (\rho_2(x,y)-1) dxdy.
\end{equation}
\end{lem}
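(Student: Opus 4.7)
The plan is to expand the product of signed measures $(d\C-dx)\otimes(d\C-dy)$ into four pieces and compute the expectation of each piece separately, then recombine.

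First I would write, on $\carr_R^2 \setminus \triangle$,
\[
(d\C - dx)\otimes (d\C - dy) = d\C(x)\,d\C(y) - d\C(x)\,dy - dx\,d\C(y) + dx\,dy,
\]
so that, after multiplying by $g(x-y)$ and integrating, the random variable $\Hint_R(\C)$ splits as a sum of four terms $A_1-A_2-A_3+A_4$. Here $A_1 = \sum_{p\neq q \in \C\cap \carr_R} g(p-q)$ is an almost surely finite sum (finitely many points, $g$ finite off the diagonal), $A_2$ and $A_3$ equal $\sum_{p \in \C \cap \carr_R}\int_{\carr_R} g(p-y)\,dy$ (since the diagonal has Lebesgue measure zero and $g$ is locally integrable in the cases \eqref{wlog}, \eqref{wlog2d}, \eqref{kernel}), and $A_4 = \iint_{\carr_R^2} g(x-y)\,dx\,dy$ is a finite deterministic constant.

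Next I would take expectations term by term. By the very definition \eqref{def:rhon} of the two-point correlation function,
\[
\Esp_P[A_1] = \iint_{\carr_R^2} g(x-y)\,\rho_{2,P}(x,y)\,dx\,dy,
\]
where we use the hypothesis that $\rho_{2,P}$ exists as a Radon measure with density (identified with a function). By Fubini and the analogous identity for $\rho_{1,P}\equiv 1$,
\[
\Esp_P[A_2] = \Esp_P[A_3] = \iint_{\carr_R^2} g(x-y)\,dx\,dy = A_4.
\]
Adding up, the $-A_2-A_3+A_4$ contributions produce $-\iint_{\carr_R^2} g(x-y)\,dx\,dy$, so
\[
\Esp_P[\Hint_R(\C)] = \iint_{\carr_R^2\setminus \triangle} g(x-y)\,(\rho_{2,P}(x,y)-1)\,dx\,dy,
\]
reinstating the diagonal exclusion (which costs nothing as it has Lebesgue measure zero). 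Dividing by $R^d$ and taking $\liminf_{R\to\infty}$ yields \eqref{rewriWint}.

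The one place where care is required is the application of Fubini to $\Esp_P[A_1]$: one needs the integral of $g$ against the two-point intensity measure to be well-defined, i.e.\ absolute integrability of $g(x-y)$ against $\rho_{2,P}$ on $\carr_R^2$. If $\Wint(P) = +\infty$ the identity is trivially true (both sides being $+\infty$ by the non-negativity of the positive part of $g(x-y)(\rho_{2,P}-1)$ together with the boundedness of the negative part for $\rho_{2,P}\ge 0$ and $g$ locally integrable); otherwise the finiteness of $\Esp_P[\Hint_R(\C)]$ together with the explicit control of $A_2,A_3,A_4$ forces $\Esp_P[A_1]<\infty$, which legitimates the exchange. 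This is the only real subtlety; the remainder is bookkeeping.
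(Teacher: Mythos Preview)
Your proof is correct and follows essentially the same approach as the paper: expand $(d\C-dx)\otimes(d\C-dy)$, take expectations term by term using the definitions of $\rho_1$ and $\rho_2$, and simplify using $\rho_1\equiv 1$. The paper's version is terser and does not discuss the Fubini/integrability issue you flag, but otherwise the arguments coincide.
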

\begin{proof} First we may re-write $\Hint_{R}(\C)$ as : 
\begin{multline*}
\Hint_{R}(\C) = \iint_{\carr_R^2 \backslash \triangle} g(x-y) (d\C - dx)\otimes(d\C - dy)  \\ = \iint_{\carr_R^2 \backslash \triangle} g(x-y) (d\C\otimes d\C) + \iint_{\carr_R^2 \backslash \triangle} g(x-y) (dx \otimes dy) - 2\iint_{\carr_R^2 \backslash \triangle} g(x-y) d\C \otimes dy
\end{multline*}
and by definition of the correlation functions (see Section \ref{sec:correlfun}) we get
\begin{equation*}
\Esp_{P}\left[ \Hint_R \right] = \iint_{\carr_R^2 \backslash \triangle} g(x-y) \rho_2(x,y)dxdy + \iint_{\carr_R^2 \backslash \triangle} g(x-y) dx dy - 2 \iint_{\carr_R^2 \backslash \triangle} g(x-y) \rho_1(x)dx dy.
\end{equation*}
By assumption $P$ has intensity $1$ i.e. $\rho_1 \equiv 1$ and we are left with
\begin{equation*}
\frac{1}{R^d} \Esp_{P}\left[ \Hint_R \right] = \frac{1}{R^d} \iint_{\carr_R^2 \backslash \triangle}  g(x-y) (\rho_2(x,y)-1) dx dy,
\end{equation*}
which yields \eqref{rewriWint}.
\end{proof}

\paragraph{The stationary case.}
If the random point process $P$ is stationary we may derive a somewhat simpler expression for $\Wint(P)$. In what follows $\rho_2(v)$ stands for $\rho_2(0,v)$. The change of variables $(u,v)= (x+y,x-y)$ gives 
\begin{multline*}
\iint_{([-R/2,R/2]^d)^2 \backslash \triangle} g(x-y) (\rho_2(x-y)-1)dx dy = \frac{1}{2^d} \int_{v\in [-R,R]^d \backslash \{0\}} \int_{u \in S_R(v)} g(v) (\rho_2(v)-1)du dv \\ 
= \int_{v\in [-R,R]^d \backslash \{0\}} |\SR(v)| g(v) (\rho_2(v)-1) dv,
\end{multline*}
where $\SR(v)$ denotes the set 
$$\SR(v) :=\{x+y: x, y \in \carr_R,\, x-y=v\}.$$ 
The Lebesgue measure $|\SR(v)|$ of $\SR(v)$ is easily computed for $v = (v_1, \dots, v_d)$ 
\begin{equation} \label{volSR}
\left|\SR(v_1, \dots, v_d)\right| = 2^d \times \prod_{i=1}^d (R-|v_i|).
\end{equation}
Indeed we have $\SR(v) =\{2x-v: x, y \in \carr_R,\, x-y=v\}$ which implies that
$$|\SR(v)| = 2 \times \left|\{x \in\carr_R, x-v \in \carr_R\}\right|,$$
moreover $\SR(v)$ tensorizes i.e. if $v=(v_1, \dots, v_d) \in [-R, R]^d$ we get
\begin{equation}
|\SR(v_1, \dots, v_d)| = 2^d \prod_{i=1}^d \left|\{x_i \in [-R/2,R/2], x_i-v_i \in [-R/2,R/2]\}\right|
\end{equation} 
which leads to \eqref{volSR}. Finally we obtain the following expression for any $P$ in $\probas_{s,1}(\config)$ (such that $\rho_2$ is a Radon measure)
\begin{equation} \label{def:Wint2}
\boxed{\Wint(P) = \liminf_{R \ti} \frac{1}{R^d} \int_{[-R,R]^d \backslash \{0\}} g(v) (\rho_2(v)-1) \prod_{i=1}^d (R-|v_i|)dv.}
\end{equation}

\section{Preliminary results on the energy} 
\label{sec:prelim}
\subsection{Local field and local interaction} \label{sec:boundary}
Let $R > 0$ and let $\C$ be a point configuration in $\config(\carr_R)$. The local electric potential (resp. field) $\Hloc$ (resp. $\Eloc$) are defined in \eqref{def:Eloc}. For $\eta \in (0,1)$ we let also $\Eloc_{\eta}$ be as in \eqref{def:Eloceta}. Let us recall that $\hC_R$ denotes the hypercube $[-R/2, R/2]^{d+k}$. 
\begin{lem} \label{lem:locflocint}
The following inequalities hold
\begin{enumerate}
\item In the cases \eqref{kernel} 
\begin{equation} \label{compkernel}
\frac{1}{\cds} \int_{\R^{d+k}} \yg |\Eloc_{\eta}|^2 - \Nn_R g(\eta) \leq \Hint_R[\C] + \Nn_R o_{\eta}(1).
\end{equation}
\item In the logarithmic cases 
\begin{multline} \label{complog}
\frac{1}{\cds} \int_{\hC_R} \yg |\Eloc_{\eta}|^2 - \Nn_R g(\eta) \leq \Hint_R[\C] + \Clog \D^2_R \log R  + \Nn_R o_{\eta}(1) \\ + O(\Nn^2_RR^{-5}) + o_R(1).
\end{multline}
\end{enumerate}

Let $P$ be a stationary random point process of intensity $1$ such that $\Esp_{P}[\D^2_R] = o(R^{2d})$. The following inequalities hold
\begin{enumerate}
\item In the cases \eqref{kernel} 
\begin{equation}
\label{relatkern}
\Esp_{P} \left[ \frac{1}{\cds} \int_{\R^{d+k}} \yg |\Eloc_{\eta}|^2 - \Nn_R g(\eta)\right] \leq \Esp_{P}[\Hint_{R}]  + R^d o_{\eta}(1).
\end{equation}
\item In the logarithmic cases
\begin{multline}
\label{relatlog}
\Esp_{P} \left[\frac{1}{\cds} \int_{\hC_R} \yg |\Eloc_{\eta}|^2 - \Nn_R g(\eta)\right] \leq \Esp_{P}[\Hint_{R}] + \Clog \Esp_{P}[\D_R^2]\log R + R^d o_{\eta}(1) \\ + o_R(1).
\end{multline}
\end{enumerate}

The terms $o_{\eta}(1), o_R(1)$ depends only on $d,s$. 
\end{lem}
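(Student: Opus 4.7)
I would follow the standard electric approach: express the truncated $L^2$ energy of $\Eloc_\eta$ as a discrete-continuous bilinear pairing via Green's formula, and then identify that pairing with $\Hint_R[\C]$ up to self-energy, boundary, and $\eta$-smoothing corrections. The key input is that $f_\eta$ is built so that $-\div(\yg\nabla f_\eta)=\cds(\delta_0-\delta_0^{(\eta)})$, where $\delta_0^{(\eta)}$ denotes the uniform probability measure on the sphere of radius $\eta$ (the precise distribution depending on the weight $\yg$). Consequently the truncated field satisfies $-\div(\yg \Eloc_\eta)\propto \sum_{p\in\C}\delta_p^{(\eta)}-\indic_{\carr_R}\delta_{\Rd}$, and integration by parts on a domain $\Omega$ yields
\[
\frac{1}{\cds}\int_\Omega \yg|\Eloc_\eta|^2 \;=\; \iint_{\Omega\times\Omega} g(X-Y)\,d\nu(X)\,d\nu(Y) \;+\; \text{flux across }\partial\Omega,
\]
for $\nu:=\sum_p\delta_p^{(\eta)}-\indic_{\carr_R}\delta_{\Rd}$. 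I would take $\Omega=\R^{d+k}$ in the cases \eqref{kernel} and $\Omega=\hC_R$ in the logarithmic cases.

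\textbf{Comparison with $\Hint_R$.} The double integral against $\nu\otimes\nu$ is $\Hint_R[\C]$ modulo the replacement of the Dirac masses $\delta_p$ by their $\eta$-smearings $\delta_p^{(\eta)}$. The self-interactions $\iint g(X-Y)\,d\delta_p^{(\eta)}\otimes d\delta_p^{(\eta)}$ equal $g(\eta)$ since $g$ is harmonic for the weighted Laplacian outside $0$, so summing the $\Nn_R$ of them gives exactly the counterterm $\Nn_R g(\eta)$ subtracted on the left of \eqref{compkernel} and \eqref{complog}. The cross-interactions $\iint g(X-Y)\,d\delta_p^{(\eta)}\otimes d\delta_q^{(\eta)}$ equal $g(p-q)$ whenever $|p-q|>2\eta$ (again by harmonicity), while the rare pairs with $|p-q|\le 2\eta$ contribute an error absorbed in $\Nn_R o_\eta(1)$; similarly the point-background and background-background pairings recover $-2\sum_p\int_{\carr_R}g(p-y)\,dy$ and $\iint_{\carr_R^2}g(x-y)\,dxdy$ up to the same order. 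Summing, the pairing equals $\Hint_R[\C]$ with an $\Nn_R o_\eta(1)$ remainder.

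\textbf{Boundary flux, expectations, and main obstacle.} In the cases \eqref{kernel}, choosing $\Omega=\R^{d+k}$ makes the flux at infinity vanish: the decay $H_\eta(X)=O(|X|^{-s})$, $|\Eloc_\eta(X)|=O(|X|^{-s-1})$ makes the spherical integral at radius $\rho$ scale like $\rho^{-s}\to 0$, which gives \eqref{compkernel} directly. In the logarithmic cases one keeps $\Omega=\hC_R$ and expands $H_\eta$ outside $\hC_R$ into multipoles, $H_\eta(X)=\cds \D_R g(X)+O(|X|^{-d-k+1})$ (since the net charge inside $\carr_R$ is the discrepancy $\D_R$); the flux $\int_{\partial\hC_R}\yg H_\eta\partial_\nu H_\eta$ then splits into a monopole-monopole contribution producing exactly $\Clog \D_R^2\log R$, a dipole correction bounded by $O(\Nn_R^2 R^{-5})$, and higher multipoles giving $o_R(1)$, whence \eqref{complog}. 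To obtain \eqref{relatkern} and \eqref{relatlog} it suffices to take $\Esp_P$ of the configurational bounds and use $\Esp_P[\Nn_R]=R^d$ to convert $\Nn_R o_\eta(1)$ into $R^d o_\eta(1)$, together with the hypothesis $\Esp_P[\D_R^2]=o(R^{2d})$ to absorb the multipole remainder into $o_R(1)$. The main obstacle is the boundary analysis in the logarithmic cases: one must identify the exact prefactor $\Clog$ from the monopole flux and bound the multipole remainder uniformly in the configuration, since $\Nn_R$ has no a priori bound independent of $\D_R$; this forces an explicit multipole expansion of $H_\eta$ outside $\hC_R$ rather than a soft decay estimate.
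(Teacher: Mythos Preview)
Your overall strategy---integration by parts against the truncated potential $H_\eta$, then identifying the pairing $\iint g\,d\nu_\eta\otimes d\nu_\eta$ with $\Hint_R$ up to self-energy and $\eta$-smearing errors---is essentially the paper's approach. Your direct smeared-charge comparison (noting that $\iint g\,d\delta_p^{(\eta)}\otimes d\delta_q^{(\eta)}\le g(p-q)$ by superharmonicity, with the point--background cross term giving the $\Nn_R o_\eta(1)$) is a clean alternative to the paper's route, which first passes to the limit $\eta\to 0$ to get an exact identity and then invokes the ``almost monotonicity'' lemma of \cite[Lemma~2.3]{PetSer} to recover the inequality at fixed $\eta$. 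In the Riesz cases your choice $\Omega=\R^{d+k}$ and the decay $\Hloc=O(|X|^{-s})$, $\Eloc=O(|X|^{-s-1})$ kill the boundary flux, matching the paper.

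There is, however, a genuine gap in the logarithmic case. You take $\Omega=\hC_R$ and claim a multipole expansion $H_\eta(X)=\cds\D_R g(X)+O(|X|^{-d-k+1})$ on $\partial\hC_R$. This expansion is only valid far from the charges, whereas points of $\C$ may lie within $O(1)$ of $\partial\carr_R\subset\partial\hC_R$; on that portion of the boundary $H_\eta$ and $\Eloc_\eta$ have no uniform monopole--plus--remainder structure, and the flux there depends on the fine geometry of $\C$ near the boundary. The specific error $O(\Nn_R^2 R^{-5})$ you quote cannot arise from a dipole on $\partial\hC_R$. The paper's fix is to perform the integration by parts on a much larger cube $\hC_S$ with $S=R^4$: for $X\in\partial\hC_S$ one has the uniform bounds $|\Hloc(X)-\D_R g(X)|\le C(\Nn_R+R^d)R\,S^{-s-1}$ and $|\Eloc(X)-\D_R\nabla g(X)|\le C(\Nn_R+R^d)R\,S^{-s-2}$, which after multiplying out and integrating over $\partial\hC_S$ give exactly $\Clog\D_R^2\log R+O(\Nn_R^2 R^{-5})+o_R(1)$ once $S=R^4$ is substituted (in particular $\log S=4\log R$ produces the $\log R$ factor, and the subleading terms $R/S$, $R\log S/S$, $R^2/S^2$ produce the $R^{-5}$ and $o_R(1)$). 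One then uses the trivial monotonicity $\int_{\hC_R}\yg|\Eloc_\eta|^2\le\int_{\hC_{R^4}}\yg|\Eloc_\eta|^2$ to return to the domain in \eqref{complog}. Without this enlargement-then-shrinking trick your boundary analysis does not close.
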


\begin{proof} The starting point is the following identity which holds for any $S > R$
\begin{equation} \label{energienonneutre}
\Hint_R[\C] = \lim_{\eta \t0} \left(\frac{1}{c_{d,s}} \int_{\hC_S} \yg |\Eloc_{\eta}|^2 - \Nn_R g(\eta) \right) \\
+ \frac{1}{\cds} \int_{\partial \hB_S} \yg \Hloc \Eloc \cdot \vec{n},
\end{equation}
where $\vec{n}$ denotes the unit normal vector. It results from two different operations. Let us first recall that
\[
\Hint_R[\C] := \iint_{\carr_R^2 \backslash \triangle} g(x-y) (d\C - dx)\otimes(d\C - dy)
\]
Since $\Eloc = \nabla \Hloc$ satisfies \eqref{relationEloc} we may formally write
\[
\Hint_R[\C] \approx - \frac{1}{\cds} \int_{\R^{d+k}} \Hloc(t) \div(\yg \nabla \Hloc)(t).
\]
Of course $\div(\yg \nabla \Hloc)(t)$ is in fact supported on $\carr_R$. For any $S > R$ an integration by parts yields
\[
\Hint_R[\C] \approx \frac{1}{\cds} \int_{\hC_S} \yg |\Eloc|^2 +  \frac{1}{\cds} \int_{\partial \hC_S} \yg \Hloc \Eloc \cdot \vec{n}.
\]
Since $\Eloc$ is not in $L^2$ the previous computation does not make sense, however it can be made rigorous in a \textit{renormalized} fashion by truncating the interaction close to the charges at scale $\eta > 0$ and substracting a diverging term as $\eta \t0$ (here $g(\eta)$) for each charge. We refer to \cite{PetSer} for more details.

We now turn to the boundary term in \eqref{energienonneutre}. A mean value argument applied to $g$ and $\nabla g$ shows that for $S \geq 2R$ we have
\[
\left|\Hloc(X) - \D_R g(X)\right| \leq C (\Nn_R + R^d) R S^{-s-1},
\]
\[
|\Eloc(X) - \D_R \nabla g(X)| \leq C (\Nn_R + R^d) R S^{-s-2},
\]
uniformly for $X \in \partial \hC_S$. Indeed it is easy to see that the first derivative of $t \mapsto g(X-t)$ is bounded by $CS^{-s-1}$ and that its second derivative is bounded by $CS^{-s-2}$ for any $t$ in $C_R$, uniformly for $X \in \partial \hC_S$, and with a constant $C$ depending only on $d,s$ (because $S > 2R$). Moreover we have $|\nabla g(S)| \leq C S^{-s-1}$, the perimeter of $\partial \hC_S$ is $O(S^{d+k-1})$ and $\yg \leq CS^{\gamma}$ on $\partial \hC_S$. Since $\gamma = s-d+2-k$ we have $S^{\gamma + d+k-1} = S^{s+1}$. 

Combining the estimates above we get for any $S > 2R$
\[ \Big| \int_{\partial \hC_S} \yg \Hloc \Eloc \cdot \vec{n} \Big| \leq C S^{s+1} \left( \D_R^2 \frac{|g(S)|}{S^{s+1}} + (\Nn_R^2 + R^{2d}) \left(\frac{R}{S^{2s+2}} + \frac{R|g(S)|}{S^{s+2}} + \frac{R^2}{S^{2s+3}}\right) \right)
\]
with a constant $C$ depending only on $d,s$.

In the cases \eqref{kernel} the right-hand side  is $O(S^{-s})$ as $S \ti$ and $s > 0$. We thus obtain
\begin{equation} \label{bboundKern}
\Big|\int_{\partial \hC_S}\yg \Hloc \Eloc \cdot \vec{n} \Big| = o_{S \ti}(1).
\end{equation}

In the logarithmic cases we have $s = 0$ and $g(S) = \log S$, hence if we set $S = R^4$ we get
\begin{equation} \label{bboundLog}
\Big| \int_{\partial \hC_{R^4}} \yg \Hloc \Eloc \cdot \vec{n} \Big| \leq \Clog \D_R^2\log R + O\left(\Nn_R^2 R^{-5}\right) + o_R(1),
\end{equation}
for a certain universal constant $\Clog$.

Combining \eqref{energienonneutre} and \eqref{bboundKern} and letting $S \ti$ we obtain, in the cases \eqref{kernel}
\[
\lim_{\eta \t0} \left(\frac{1}{c_{d,s}} \int_{\R^{d+k}} \yg |\Eloc_{\eta}|^2 - \Nn_R g(\eta) \right) = \Hint_R[\C].
\]

Combining \eqref{energienonneutre} and \eqref{bboundLog} we obtain, in the logarithmic cases
\[
\lim_{\eta \t0} \left(\frac{1}{c_{d,s}} \int_{\hC_R} \yg |\Eloc_{\eta}|^2 - \Nn_R g(\eta) \right) \leq \Hint_{R}[\C] + \Clog \D_R^2\log R + O\left(\Nn_R^2 R^{-5}\right) + o_R(1),
\]
where we have used the trival bound $\int_{\hC_R} \yg |\Eloc_{\eta}|^2 \leq \int_{\hC_{R^4}} \yg |\Eloc_{\eta}|^2$.

From \cite[Lemma 2.3]{PetSer} we know that the limit as $\eta \t0$ is almost monotonous, more precisely for any $\eta \in (0,1)$ and $S > 2R$ we have
\begin{equation} \label{almono}
\int_{\hC_S} \yg |\Eloc_{\eta}|^2 - \Nn_R g(\eta) \leq \lim_{\eta \t0} \left(\frac{1}{c_{d,s}} \int_{\hC_S} \yg |\Eloc_{\eta}|^2 - \Nn_R g(\eta) \right)  + \Nn_R o_{\eta}(1).
\end{equation}

We thus obtain \eqref{compkernel} and \eqref{complog}. Inequalities \eqref{relatkern} and \eqref{relatlog} follow easily by taking the expectation under $P$.
\end{proof}

\subsection{Bound below on the interaction}
The self-interaction of an electric system of point charges with a negative background in a given set $K$ is bounded below in terms of the number of points in $K$ (and the volume of $K$ in the logarithmic cases).
\begin{lem} \label{lem:bbint}
Let $K$ be a compact subset such that $K \subset C_R$ and let $\C$ be a point configuration in $K$. We denote by $\Nn(\C,K)$ the number of points in $K$, and we denote by $\Inte[K,K](\C)$ the interaction energy of $K$ with itself as in \eqref{def:inte}.

In the cases \eqref{kernel} we have
\begin{equation} \label{bbkernel}
\Inte[K,K](\C) \geq - C \left(\Nn(\C,K)\right).
\end{equation}

In the logarithmic cases we have
\[
\Inte[K,K](\C) \geq - C \left(\Nn(\C,K) + (\Nn(\C,K) - |K|)^2 \log R + \Nn^2(\C,K) R^{-5}\right).
\]

In both inequalities $C$ is a positive constant depending on $d,s$.
\end{lem}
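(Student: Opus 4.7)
The proof is a direct adaptation of Lemma~\ref{lem:locflocint} to the case where the neutralizing background is supported on an arbitrary compact $K \subset \carr_R$ rather than on $\carr_R$ itself. I would introduce the local potential $\Hloc_K := \cds\, g * (\C - \mathbf{1}_K \delta_{\Rd})$, the associated field $\Eloc_K := \nabla \Hloc_K$ and its truncation $\Eloc_{K,\eta}$ defined as in \eqref{def:Eloceta}. The same renormalized integration by parts that yielded \eqref{energienonneutre} now gives, for every $S > R$,
$$\Inte[K,K](\C) = \lim_{\eta \to 0} \left( \frac{1}{\cds} \int_{\hC_S} \yg |\Eloc_{K,\eta}|^2 - \Nn(\C,K)\, g(\eta) \right) + \frac{1}{\cds} \int_{\partial \hC_S} \yg \, \Hloc_K\, \Eloc_K \cdot \vec{n}.$$
The volume integrand is non-negative, and the almost-monotonicity of \cite[Lemma 2.3]{PetSer} shows that, for any fixed $\eta_0 \in (0,1)$, the limit in $\eta$ is bounded below by $-\Nn(\C,K)\bigl(g(\eta_0) + o_{\eta_0}(1)\bigr) \geq -C\, \Nn(\C,K)$. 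Everything therefore reduces to controlling the boundary term.

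For the boundary term, I would rerun the mean-value argument of Lemma~\ref{lem:locflocint} with the following modification: the total background charge is now $|K|$, so the effective net charge at infinity is $\Nn(\C,K) - |K|$ (which replaces $\D_R$), and the higher-order multipole corrections are of size $(\Nn(\C,K) + |K|)\, R\, S^{-s-1}$ and $(\Nn(\C,K) + |K|)\, R\, S^{-s-2}$ for $\Hloc_K$ and $\Eloc_K$ respectively. Since $K \subset \carr_R$ we have $|K| \leq R^d$, so the estimates on $|\Hloc_K - (\Nn(\C,K) - |K|) g|$ and $|\Eloc_K - (\Nn(\C,K) - |K|) \nabla g|$ on $\partial \hC_S$ are formally identical to those appearing in the proof of \eqref{bboundKern} and \eqref{bboundLog}.

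In the Riesz cases ($s > 0$), sending $S \to \infty$ makes the boundary term vanish as in \eqref{bboundKern}, and the almost-monotonicity bound above directly yields $\Inte[K,K](\C) \geq -C \Nn(\C,K)$, which is \eqref{bbkernel}. In the logarithmic cases I would choose $S = R^4$ exactly as in the derivation of \eqref{bboundLog}; the boundary contribution is then at most
$$C\left((\Nn(\C,K) - |K|)^2 \log R + \bigl(\Nn^2(\C,K) + R^{2d}\bigr) R^{-5}\right) + o_R(1).$$
Since $d \in \{1,2\}$ we have $R^{2d-5} = o_R(1)$, so this simplifies to the claimed $(\Nn(\C,K) - |K|)^2 \log R + \Nn^2(\C,K) R^{-5}$ correction, and combining with the volume bound concludes the proof.

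\textbf{Main obstacle.} Conceptually there is none: the whole argument is a transcription of Lemma~\ref{lem:locflocint} with $\mathbf{1}_K$ in place of $\mathbf{1}_{\carr_R}$. The only care required is to verify that the multipole expansions of $\Hloc_K$ and $\Eloc_K$ at scale $S \gg R$ are driven by $\Nn(\C,K) - |K|$ at leading order and by corrections controlled by $(\Nn(\C,K) + |K|) \leq \Nn(\C,K) + R^d$, which follows automatically from $K \subset \carr_R$.
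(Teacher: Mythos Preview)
Your proposal is correct and follows essentially the same route as the paper's own proof: the paper also introduces the local field generated by $\C$ with background $\mathbf{1}_K$, invokes the renormalized integration-by-parts identity from Lemma~\ref{lem:locflocint}, fixes $\eta = \tfrac12$ and uses almost-monotonicity to discard the non-negative volume term, then either sends $S\to\infty$ (Riesz) or takes $S=R^4$ and bounds the boundary term exactly as in the derivation of \eqref{bboundLog} (logarithmic). Your identification of $\Nn(\C,K)-|K|$ as the replacement for $\D_R$ and of $\Nn(\C,K)+R^d$ as the control on the multipole remainders is precisely what is needed.
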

\begin{proof}
Letting $\Eloc$ be the local electric field generated by $\C$ in $K$ and arguing as in the proof of Lemma \ref{lem:locflocint} we obtain
\[
\Inte[K,K](\C) = \lim_{\eta \t0} \left(\frac{1}{c_{d,s}} \int_{\hC_S} \yg |\Eloc_{\eta}|^2 - \Nn_R g(\eta) \right) \\
+ \frac{1}{\cds} \int_{\partial \hB_S} \yg \Hloc \Eloc \cdot \vec{n}.
\]

As seen in \eqref{almono} the limit as $\eta \t0$ is almost monotonous. In the cases \eqref{kernel}, chosing $\eta = \hal$ and letting $S \ti$ we obtain
\begin{equation} \label{bblog}
\Inte[K,K] \geq \frac{1}{\cds} \int_{\R^{d+k}} \yg |\Eloc_{\hal}|^2 - \Nn(\C, K) (g(1/2) + C), 
\end{equation}
which yields \eqref{bbkernel}. 

In the logarithmic cases, choosing $\eta = \hal$ and taking $S = R^4$ we obtain
\[
\Inte[K,K] \geq \frac{1}{\cds} \int_{\hC_{R^4}} \yg |\Eloc_{\hal}|^2 - \Nn(\C, K) (g(\hal) + C) - \left|\int_{\partial \hC_{R^4}} \yg \Hloc \Eloc_{\hal}\right|.
\]
Controlling the boundary term as in the proof of Lemma \ref{lem:locflocint} yields \eqref{bblog}.
\end{proof}

\subsection{Discrepancy estimates}
The following lemma is an adaptation of the discrepancy estimates of \cite{PetSer} to show how the finiteness of $\Wint(P)$ (instead of $\Welec(P)$) implies that $P$ has a \textit{number variance} of order $o(R^{2d})$.

\begin{lem} \label{lem:discrPkern}
Let $P$ be a stationary random point process of intensity $1$.

In the cases \eqref{kernel} if $\Wint(P)$ is finite then we have
\begin{equation} \label{discrPkern}
\Esp_{P} [\D^2_R] = o(R^{2d}).
\end{equation}

In the logarithmic cases if $\Dlog(P)$ is finite then we also have \eqref{discrPkern}.
\end{lem}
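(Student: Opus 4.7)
These follow directly from unpacking $\Dlog$. A computation of the second factorial moment gives
\begin{equation*}
\iint_{\carr_R^2}(\rho_{2,P}(x,y)-1)\,dx\,dy = \Esp_P[\Nn_R(\Nn_R-1)] - R^{2d} = \Esp_P[\D_R^2]-R^d,
\end{equation*}
hence $\Dlog(P) = \Clog \limsup_{R\to\infty} \Esp_P[\D_R^2]\log R / R^d$. Finiteness of $\Dlog(P)$ therefore yields $\Esp_P[\D_R^2]=O(R^d/\log R)$, which is far stronger than the claimed $o(R^{2d})$.

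\textbf{Riesz cases.} The strategy is two-step, adapted from the discrepancy estimates of \cite{PetSer}: first convert finiteness of $\Wint(P)$ into a bound on the local electric $L^2$-energy via the pointwise inequality \eqref{compkernel}, then reconstruct $\D_R$ as a boundary flux via the divergence theorem on a ball containing $\carr_R$. Since the hypothesis on $\Wint(P)$ is only a $\liminf$, both steps naturally live along a subsequence $R_n\to\infty$.

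For the first step, apply \eqref{compkernel} to $\C\cap\carr_R$ with $\eta=\tfrac12$, take expectation, and use $\Esp_P[\Nn_R]=R^d$ to get
\begin{equation*}
\Esp_P\Bigl[\int_{\R^{d+k}}\yg\,|\Eloc_{1/2}|^2\Bigr] \leq \cds\,\Esp_P[\Hint_R]+CR^d.
\end{equation*}
Extract a sequence $R_n\to\infty$ along which $\Esp_P[\Hint_{R_n}]=O(R_n^d)$, so that the left-hand side is $O(R_n^d)$. For the second step, fix a realization $\C$ and let $\Eloc$ denote the local field attached to $\C\cap\carr_{R_n}$ and background $\mathbf{1}_{\carr_{R_n}}\delta_{\R^d}$. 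For almost every $\tau>R_n\sqrt{d}/2$, the sphere $\partial\hB_\tau$ contains no point of $\C\cap\carr_{R_n}$, and the divergence theorem together with \eqref{relationEloc} gives
\begin{equation*}
\int_{\partial\hB_\tau}\yg\,\Eloc\cdot\vec n = -\cds\,\D_{R_n}.
\end{equation*}
Cauchy--Schwarz combined with the scaling $\int_{\partial\hB_\tau}\yg = O(\tau^{s+1})$ (which follows from $\gamma=s-d+2-k$) yields $\D_{R_n}^2 \leq C\tau^{s+1}\int_{\partial\hB_\tau}\yg\,|\Eloc|^2$. Averaging $\tau$ over $[R_n\sqrt{d},\,2R_n\sqrt{d}]$ and observing that $\Eloc_{1/2}=\Eloc$ on that annulus (no point of $\C\cap\carr_{R_n}$ lies within $\tfrac12$ of it for $R_n$ large) produces
\begin{equation*}
\D_{R_n}^2 \leq C R_n^{s}\int_{\R^{d+k}}\yg\,|\Eloc_{1/2}|^2.
\end{equation*}
Taking expectation and combining with the first step yields $\Esp_P[\D_{R_n}^2]\leq CR_n^{d+s}=o(R_n^{2d})$, since $s<d$.

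\textbf{Main obstacle.} The main difficulty is not in the individual estimates but in upgrading the bound along the subsequence $\{R_n\}$ to all $R\to\infty$: the $\liminf$ defining $\Wint(P)$ only controls $\Esp_P[\Hint_R]$ along well-chosen scales, and a priori nothing prevents $\Esp_P[\Hint_R]$ from being much larger between consecutive good radii. One must either strengthen the hypothesis or invoke stationarity and average $P$ over translations to interpolate; in the applications of the lemma in the sequel, however, the bound along a single good scale turns out to be sufficient.
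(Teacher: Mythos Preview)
Your treatment of the logarithmic cases is identical to the paper's (the rewriting of $\Dlog$ via the second factorial moment appears verbatim in \eqref{rewriteDlog}).

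For the Riesz cases your argument is correct but takes a different path. The paper invokes \cite[Lemma 2.2]{PetSer}, which yields the pointwise inequality
\[
\frac{1}{R^s}\,\D_R^2\,\min\Bigl(1,\frac{|\D_R|}{R^d}\Bigr)\ \le\ C\int_{\R^{d+k}}\yg|\Eloc_{1/2}|^2+C\Nn_R,
\]
takes expectation, and then uses Jensen's inequality to extract $\Esp_P[\D_R^2]=O\bigl(R^{\frac{4}{3}d+\frac{2}{3}s}\bigr)$. You instead recover $\D_{R_n}$ directly as the flux of $\yg\Eloc$ through $\partial\hB_\tau$, apply Cauchy--Schwarz with the weight $\yg$, and average in $\tau$. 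This is more elementary (no external lemma) and in fact gives the sharper pointwise bound $\D_{R_n}^2\le CR_n^{s}\int\yg|\Eloc_{1/2}|^2$, hence $\Esp_P[\D_{R_n}^2]=O(R_n^{d+s})$, a better exponent than the paper's $\frac{4}{3}d+\frac{2}{3}s$.

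On the subsequence issue: you are right that the hypothesis only controls $\Esp_P[\Hint_R]$ along a sequence realizing the $\liminf$, and the paper's proof glosses over this, writing ``$=O(R^d)$'' as if it held for every $R$. Your observation that the applications in Section~\ref{sec:UB} (specifically Lemma~\ref{lem:screenability}) only ever invoke the discrepancy bound along the good sequence $\{R_n\}$ of \eqref{def:Rn} is accurate, so the subsequence version you prove is sufficient for the paper's purposes.
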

\begin{proof}
In the logarithmic cases, the result is straightforward since $\Dlog(P) < + \infty$ implies that $\Esp_{P} [\D^2_R] = o(R^d)$. 

We now turn to the cases \eqref{kernel}. Applying \cite[Lemma 2.2]{PetSer} with $\eta = \hal$ we get
\[
\frac{1}{R^s} \D^2_R \min\left(1, \frac{\D_R}{R^d}\right) \leq C\int_{\R^{d+k}} \yg |\Eloc_{\hal}|^2 +C \Nn_R.
\]
Taking the expectation under $P$ yields
\[
\Esp_{P} \left[ \frac{1}{R^s} \D^2_R \min\left(1, \frac{\D_R}{R^d}\right) \right] \leq C \Esp_{P} \left[ \int_{\R^{d+k}} \yg |\Eloc_{\hal}|^2\right] + C R^d, 
\]
where $\Eloc$ denotes the local field generated by $\C$ in $C_R$ and $C$ is a constant depending on $d,s$. We know from Lemma \ref{lem:locflocint} that
\[
\int_{\R^{d+k}} \yg |\Eloc_{\hal}|^2 - \Nn_R g(\hal) \leq \Hint_R[\C] + C \Nn_R, 
\]
and combining the previous two estimates we obtain that
\[
\Esp_{P} \left[ \frac{1}{R^s} \D^2_R \min(1, \frac{\D_R}{R^d}) \right] \leq C\Esp_{P}[ \Hint_R] + C R^d = O(R^d).
\]
Using Jensen's inequality and the fact that $s < d$ we get
\[
\Esp_{P} [\D^2_R] = O(R^{\frac{4}{3} d + \frac{2}{3} s}) = o(R^{2d}),
\]
which proves \eqref{discrPkern}.
\end{proof}

\subsection{The screening lemma}
For convenience we recall the “screening lemma” following \cite{SS2d}, \cite{SS1d}, \cite{RougSer}, \cite{PetSer} and \cite{LebSer}. We present it here in a simplified form which will be enough for our purposes, we refer to \cite{LebSer} for the most general statement available and to \cite{PetSer} for a proof.

The result consists in the following: given a point configuration $\C$ in $\carr_R$ and a compatible electric field $E$, we wish to construct another compatible field $\Escr$ such that $\Escr \cdot \vec{n} = 0$ on the boundary of $\carr_R \times \R^k$. Indeed such \textit{screened} fields may be pasted together in adjacent hypercubes because their normal component vanish (and in particular are equal), which will allow us to construct \textit{global} fields defined in the whole space. 

Of course this  construction is not possible in general (e.g. it is easy to see that it imposes a condition on the number of points in $\carr_R$, which must match exactly the volume of $\carr_R$). However, under some conditions on $E$ to be “screenable”, by extending $\carr_R$ a bit and modifiying $\C$ only in a thin layer of width $\epsilon R$ we may find a new point configuration $\Cscr$ and a compatible \textit{screened} field $\Escr$, such that moreover the energy of $\Escr$ is bounded in terms of the energy of $E$.

\begin{lem} \label{lem:screening}
There exists $R_0 > 0$ depending on $d,s$ and $\eta_0 > 0$ depending only on $d$ such that the following holds. 

Let $0 < \epsilon < \frac{1}{2}$ and $0 < \eta < \eta_0$ be fixed.
Let $\carr_R$ be a hypercube of sidelength $R$ for some $R > 0$, and let $K$ be the hypercube of sidelength $\ceil{R}$ (where $\ceil{R}$ denotes the smallest integer larger than $R$).

Assume that $E$ is a vector field defined in $\carr_R \times \R^k$ such that
$$ - \div(\yg E) = c_{d,s} \left(\C- \delta_{\R^d} \right) \text{ in } \carr_R.$$ 

Let $M > 1$ such that $E$ satisfies:
\[
\frac{1}{R^d} \int_{\carr_R \times \R^k} \yg |E_{\eta}|^2 \leq M.
\]

In the case $k=1$ (the non-Coulomb cases) we define $e_{\epsilon,R}$ as
\begin{equation} \label{def:eeps}
e_{\epsilon, R} := \frac{1}{\epsilon^4 R^d} \int_{\carr_R \times (\R \backslash (-\epsilon^2 R, \epsilon^2 R))} \yg |E|^2.
\end{equation}

Under the assumption that the following inequalities are satisfied 
 \begin{multline}\label{condR}
 R> \max \left( \frac{R_0}{\epsilon^{2}},  \frac{R_0 M}{\epsilon^3}\right),   \\ 
R>   \begin{cases}\frac{R_0 M^{1/2}}{\epsilon^{d+3/2}} & \text{if } k =0\\ 
\max(R_0 M^{1/(1-\gamma)} \epsilon^{\frac{-1-2d+\gamma}{1-\gamma}} , R_0 \epsilon^{\frac{2\gamma}{1-\gamma} }  e_{\epsilon,R}^{1/(1-\gamma)}   )  & \text{if} \ k=1,
\end{cases}
\end{multline} 
there exists a point configuration $\Cscr$ in $K$ and a vector field $\Escr \in \Lploc(\R^{d+k},\R^{d+k})$ such that 
\begin{enumerate}
\item The configuration $\Cscr$ has exactly $|K|$ points in $K$.
\item The configurations $\C$ and $\Cscr$ coincide on $\Int_{\epsilon} := \{ x \in \carr_R, \dist(x, \partial \carr_R)\} \geq 2 \epsilon R\}$.
\item We have
 \begin{equation} \label{rendchamp} \left\lbrace\begin{array}{ll}
  -\div \left(\yg \Escr \right) = c_{d,s} \left(\Escr - \delta_{\Rd} \right) & \  \text{in} \ K \times \R^k\\
   \Escr \cdot \vec{n}=0 & \ \text{on} \ \partial K \times \R^k ,\end{array}\right.
 \end{equation} 
\item Letting $\Escr_{\eta}$ be associated to $\Escr$ as in \eqref{def:Eeta} it holds
\begin{multline} \label{erreurEcrantage} \int_{K \times \R^k} \yg |\Escr_{\eta}|^2 \leq \left( \int_{\carr_R \times \R^k} \yg |E_{\eta}|^2 \right) (1+C\ep) \\ + Cg(\eta) M \ep R^d + Ce_{\ep, R} \ep R^d + o(R^{d-1}).
\end{multline}
for some constant $C$ depending only on $s,d$.
\end{enumerate}
\end{lem}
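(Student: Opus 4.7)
The plan is to leave both $\C$ and $E$ untouched inside a large interior region $\Int_{\ep} := \{x \in \carr_R : \dist(x, \partial \carr_R) \geq 2\ep R\}$ and to modify them only inside a thin boundary shell of width proportional to $\ep R$, in such a way that the resulting configuration contains exactly $|K|$ points in $K$ and the resulting field has zero normal component on $\partial K \times \R^k$. The bound \eqref{erreurEcrantage} will then follow from quantifying the cost of this localized modification, with the small parameter $\ep$ controlling the relative size of the correction.

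The first step is a \emph{good-slice selection}. I would foliate the boundary shell by roughly $1/\ep$ parallel concentric hypersurfaces and apply Markov's inequality to the averaged energy density of $E_\eta$. Using the bound $\int_{\carr_R \times \R^k} \yg |E_\eta|^2 \leq M R^d$ (and, in the $k=1$ case, the bound on $e_{\ep,R}$), this produces a single slice $\mathcal S$ on which the truncated field has normal trace controlled in $L^2(\yg)$ and on which the discrepancy of $\C$ inside the subdomain bounded by $\mathcal S$ is comparably controlled. The combination of these requirements --- a slice thick enough for a trace theorem, thin enough for the surface energy to be small, and wide enough to absorb the lattice mismatch --- is exactly what dictates the quantitative lower bounds on $R$ in \eqref{condR}, where the exponents involving $(1-\gamma)^{-1}$ and $\ep^{2\gamma/(1-\gamma)}$ come from optimizing these estimates against the degenerate weight $\yg$.

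Having chosen $\mathcal S$, one constructs $\Cscr$ by keeping $\C$ on the inner side of $\mathcal S$ and filling the outer shell with a slightly perturbed unit-density lattice, adjusting the count so that $\Cscr$ has exactly $|K|$ points in $K$; this is possible because the discrepancy of $\C$ relative to $|\carr_R|$ across the good slice is small. The screened field $\Escr$ is defined to equal $E$ on the inner subdomain, and to equal, on the outer shell, the solution of the weighted mixed boundary-value problem
\[
-\div\bigl(\yg \Escr\bigr) = \cds(\Cscr - \delta_{\R^d}),
\]
with Dirichlet data matching the normal trace of $E$ on $\mathcal S$ and homogeneous Neumann data on $\partial K \times \R^k$. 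Weighted elliptic estimates (with the Muckenhoupt $A_2$ weight $\yg$) bound the $L^2(\yg)$-norm of this corrector by its boundary data and its lattice source. Reassembling yields the unchanged bulk contribution, a multiplicative $(1+C\ep)$ factor coming from the discarded layer, the term $Cg(\eta) M \ep R^d$ from the $\sim \ep R^d$ newly inserted points each contributing $g(\eta)$ to the truncated self-energy, the term $Ce_{\ep,R} \ep R^d$ from the vertical correction when $k=1$, and a perimeter error of order $o(R^{d-1})$ from the lattice rounding.

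The principal obstacle is the low-regularity/degenerate-weight setting. Since $E$ itself only lies in $\Lploc$ with $p < (d+k)/(s+1)$, no pointwise normal trace on $\mathcal S$ is available directly: the argument must be carried out throughout with the truncated field $E_\eta$, and the error must be propagated without letting $\eta$ tend to $0$ prematurely. Simultaneously, the weight $\yg$ is singular or degenerate along $\{y=0\}$, so weighted Sobolev theory (rather than the standard unweighted version) governs the elliptic corrector, and its constants depend quantitatively on $\gamma$. The entire scheme only closes because the shell thickness is chosen in the narrow window where elliptic solvability holds yet the overall correction remains an $\ep$-perturbation of the bulk --- and this tension is what forces the precise quantitative hypotheses \eqref{condR}.
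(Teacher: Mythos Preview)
The paper does not actually prove this lemma: it is stated there ``for convenience'' as a recall of the screening construction, with the proof deferred to \cite{PetSer} (and the most general statement to \cite{LebSer}). So there is no in-paper proof to compare against.

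That said, your sketch is an accurate high-level outline of the Petrache--Serfaty construction: the good-slice selection via averaging, the preservation of $\C$ in the interior, the replacement by a near-lattice configuration in the shell, and the weighted elliptic corrector with mixed boundary data are indeed the ingredients used there. Your accounting of the four error terms in \eqref{erreurEcrantage} is also correct in spirit. One point worth sharpening if you intend this as more than a sketch: the actual construction in \cite{PetSer} does not solve a single elliptic problem in the shell but rather decomposes the shell into $O(\ep^{-d})$ sub-hyperrectangles and builds the screened field cell-by-cell, gluing via matched fluxes; this is what makes the estimates local enough to close. Also, the role of $e_{\ep,R}$ is slightly more subtle than a ``vertical correction'': in the $k=1$ case the decay at large $|y|$ is needed to truncate the field in the extra variable before screening, and the quantitative condition on $R$ in terms of $e_{\ep,R}^{1/(1-\gamma)}$ comes from ensuring this truncation error is absorbed. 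But as an outline of the strategy, what you wrote is essentially what is done in the cited references.
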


\subsection{Minimality of the local energy}
The following was proven in \cite[Lemma 3.13.]{LebSer}. It expresses the fact that the local electric field $\Eloc$ has a lower energy than any “screened” field compatible with the same point configuration.

\begin{lem} \label{lem:minilocale} 
Let $R \geq 1$ be an integer and $\C$ be a point configuration in $\carr_R$. Let $\Eloc$ be the local electric field generated by $\C$ in $\carr_R$ as in \eqref{def:Eloc}. Let $E \in \Lploc(\R^{d+k}, \R^{d+k})$ be a vector field satisfying 
  \begin{equation}
  \label{checr}
\left\lbrace \begin{array}{ll}
 -\div (\yg E) 
= \cds \left( \C- \drd\right) & \text{in} \ \carr_R \times \R^k \\
 E \cdot \vec{\nu} = 0  & \text{on} \  \partial \carr_R \times \R^k.
 \end{array}
 \right.
\end{equation} 
Then, for any $0<\eta <1$ we have
\begin{equation}\label{comparloc}
\int_{\R^{d+k}} \yg |\Eloc_{\eta}|^2 \leq \int_{\carr_R \times \R^{k}} \yg |E_{\eta}|^2.
\end{equation}
\end{lem}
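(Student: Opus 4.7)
The strategy rests on a Pythagorean-type identity in the weighted $L^2$ inner product with weight $\yg$. The field $\Eloc = \nab \Hloc$ is a gradient, and any perturbation that preserves its weighted divergence is $L^2$-orthogonal to it (in the weighted sense), which yields the minimality.

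I extend $E$ by zero outside $\carr_R \times \R^k$, still denoting the extension by $E$. The Neumann condition $E \cdot \vec{\nu} = 0$ on $\p \carr_R \times \R^k$ ensures that no surface contribution enters the distributional divergence of the extension, and combined with \eqref{checr} this gives $-\div(\yg E) = \cds(\C - \indic_{\carr_R}\drd)$ on all of $\R^{d+k}$. Since $\Eloc$ satisfies the same equation, $u := \Eloc - E$ is weighted divergence-free: $\div(\yg u) = 0$ on $\R^{d+k}$. A key observation is that the truncation terms cancel in the difference, so that $u = \Eloc_\eta - E_\eta$ on $\R^{d+k}$ as well (where $E_\eta$ is defined with the extended $E$).

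Expanding
\[
\int_{\R^{d+k}} \yg |E_\eta|^2 = \int_{\R^{d+k}} \yg |\Eloc_\eta|^2 - 2\int_{\R^{d+k}} \yg \Eloc_\eta \cdot u + \int_{\R^{d+k}} \yg |u|^2,
\]
the task reduces to showing that the cross term vanishes. Since $\Eloc_\eta = \nab \Hloc_\eta$ with $\Hloc_\eta := \Hloc - \sum_{p \in \C} f_\eta(\cdot - p)$, an integration by parts on $\hB_S$ gives
\[
\int_{\hB_S} \yg \nab \Hloc_\eta \cdot u = \int_{\p \hB_S} \yg \Hloc_\eta (u \cdot \vec n) - \int_{\hB_S} \Hloc_\eta \div(\yg u).
\]
The bulk term is zero. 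For the surface term, since $u = \Eloc$ outside $\carr_R \times \R^k$, the mean-value estimates used in the proof of Lemma \ref{lem:locflocint} give $|\Hloc_\eta(X)| = O(|X|^{-s})$ and $|u(X)| = O(|X|^{-s-1})$ at infinity, so the weighted surface integral decays like a negative power of $S$ and vanishes as $S \to \infty$. We conclude that $\int_{\R^{d+k}} \yg |\Eloc_\eta|^2 \leq \int_{\R^{d+k}} \yg |E_\eta|^2$ with the zero extension of $E$ on the right-hand side.

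It remains to compare $\int_{\R^{d+k}} \yg |E_\eta|^2$ with $\int_{\carr_R \times \R^k} \yg |E_\eta|^2$; the difference is a non-negative spill-out coming from truncation balls around points within distance $\eta$ of $\p\carr_R$. Treating first configurations whose points all lie at distance $> \eta$ from $\p\carr_R$ eliminates this spill-out and yields \eqref{comparloc} directly, with the general case following by a standard approximation. The main delicacy is the control of the surface term at infinity in the logarithmic cases, where the slow decay of $\Hloc_\eta$ makes the naive estimate insufficient; there one must leverage the fact that the finiteness of the right-hand side of \eqref{comparloc} combined with the boundary conditions forces the discrepancy $\D_R = \Nn_R - R^d$ to vanish (otherwise the inequality is trivial), restoring sufficient decay to close the argument.
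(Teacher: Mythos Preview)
The paper does not give a proof here; it only refers to \cite[Lemma~3.13]{LebSer}. Your argument---extend $E$ by zero, observe that $u = \Eloc - E = \Eloc_\eta - E_\eta$ is weighted divergence-free, and use that the gradient $\Eloc_\eta = \nabla \Hloc_\eta$ is $\yg$-orthogonal to such fields---is the standard one and is essentially the argument of the cited reference.

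Your treatment of the surface term at infinity in the logarithmic cases is correct: if the right-hand side of \eqref{comparloc} is finite then, integrating the divergence constraint over $\carr_R \times [-M,M]^k$ and using the Neumann condition together with the square-integrability of $E_\eta$ in the extra variable, one sees that $\D_R$ must vanish, which restores the needed decay of $\Hloc_\eta$ and $u$ at infinity.

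The spill-out issue you raise is real and goes the wrong way for the inequality, but your proposed fix by approximation is not quite a one-liner. To pass from configurations with all points at distance $>\eta$ from $\partial \carr_R$ to general ones, you must also produce approximating \emph{screened} fields $E_\epsilon$ compatible with $\C_\epsilon$ (e.g.\ $E_\epsilon = E + \nabla h_\epsilon$ with $h_\epsilon$ solving a weighted Neumann problem on $\carr_R \times \R^k$ with source $\cds(\C - \C_\epsilon)$) and verify that their truncated energies converge to that of $E$. This works but deserves a sentence of justification rather than the label ``standard''. In any case, every application of the lemma in this paper (and in \cite{LebSer}) is to fields produced by the screening construction, for which \eqref{distanceaubord} already guarantees the separation and the spill-out vanishes identically.
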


\section{Connection of the electric and intrinsic approach} \label{sec:connection}
This section is devoted to the proof of Theorem \ref{theo:connec}. It goes in two steps.
\begin{enumerate}
\item First we establish an upper bound $\Welec \leq \Wint$ (or $\Wint + \Dlog$ in the logarithmic cases). The proof of this “electric-intrinsic” inequality is the purpose of Section \ref{sec:UB}.
\item Then, in the non-Coulomb cases, for any $\Pst \in \probas_{s,1}(\config)$ we construct a sequence $\{P_N\}_N$ of stationary random point processes which converges to $\Pst$ and such that $\Welec(\Pst)$ is bounded below by $\lim_{N \ti} \Wint(P_N)$ (we also ensure that $P_N$ is hyperuniform, in particular $\Dlog(P_N)$ is always zero). Moreover we have $\lim_{N \ti} \ERS[P_N|\Poisson] = \ERS[P|\Poisson]$. 

This operation is similar to the construction of a “recovery sequence” in $\Gamma$-convergence, and is proven in Section \ref{sec:recovery}.
\end{enumerate}

These two steps immediatly imply Theorem \ref{theo:connec}.

%
%Let us show how these two steps prove Theorem \ref{theo:connec}. It is known that $\Welec$ is lower semi-continuous on $\probas_{s,1}(\config)$ (see \cite{LebSer}). We thus have, from the first step of the proof
%$$
%\Welec(P) = \lim_{\ep \t0} \inf_{B(P,\ep)} \Welec \leq \lim_{\ep \t0} \inf_{B(P,\ep)} \Wint \text{ or } \Wint + \Dlog.
%$$
%The construction of a recovery sequence shows that
%$$
%\Welec(P) \geq \lim_{\ep \t0} \inf_{B(P,\ep)} \Wint  \text{ or } \Wint + \Dlog.
%$$
%Combining both inequalities shows that 
%$$
%\Welec(P) = \lim_{\ep \t0} \inf_{B(P,\ep)} \Wint  \text{ or } \Wint + \Dlog,
%$$
%which characterizes a lower semi-continuous regularization. The same holds for $\fbeta$ (which is lower semi-continuous on $\probas_{s,1}(\config)$) and $\fbetai$.

\paragraph{Re-writing of the additional term.}
Using the definition of the two-point correlation function (which exists by assumption as a Radon measure) we get
$$\iint_{\carr_R^2} (\rho_2(x,y) - 1)dx dy = \Esp_{P} \left[ \Nn_R (\Nn_R -1) \right] - R^{2d} = \Esp_{P} \left[ \Nn_R^2 \right] - R^{2d} - \Esp_{P} \left[ \Nn_R \right].
$$
Since $\Pst$ has intensity $1$ we have $\Esp_{P} \left[ \Nn_R \right] = R^d$ hence
\begin{equation} \label{discrasrho2}
\iint_{\carr_R^2} (\rho_2(x,y)-1) dx dy = \Esp_{P} \left[ \D_R^2 \right] - R^d.
\end{equation}
Equation \eqref{discrasrho2} allows us to write the term $\Dlog$ (defined in \eqref{def:Dlog}) in terms of $\rho_2$ equivalently as
\begin{multline} \label{rewriteDlog}
\Dlog(P) = \Clog \limsup_{R \ti} \left( \frac{1}{R^d}  \iint_{\carr_R^2} (\rho_2(x,y) - 1)dx dy + 1 \right) \log R \\
= \Clog \limsup_{R \ti}  \frac{1}{R^d} \Esp_{P} \left[ \D_R^2 \log R \right].
\end{multline}

\subsection{The electric-intrinsic inequality} \label{sec:UB}
\emph{Until the end of Section \ref{sec:UB}, $P$ denotes a stationary random point process of intensity $1$ on $\Rd$ such that $\Wint$ is finite. In the logarithmic cases we assume that $\Dlog(P)$ is finite.}

The following proposition is the first part of the proof of Theorem \ref{theo:connec}.
\begin{prop} \label{prop:elecint}
Under the above assumptions, we have in the cases \eqref{kernel}
\[
\Welec(\Pst) \leq  \Wint(\Pst) ,
\]
and in the logarithmic cases $\Welec(\Pst) \leq \Wint(\Pst) + \Dlog(\Pst)$.
\end{prop}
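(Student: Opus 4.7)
The plan is to construct, for each large $R$, a stationary random electric process $\Pelec_R$ such that its configuration push-forward $\Pst_R := \conf_* \Pelec_R$ converges weakly to $\Pst$ and such that the average electric energy $\Esp_{\Pelec_R}[\mc W]$ converges (along the right sequence of parameters) to at most $\Wint(\Pst)$, or $\Wint(\Pst)+\Dlog(\Pst)$ in the logarithmic cases. Since $\Welec(\Pst_R) \leq \Esp_{\Pelec_R}[\mc W]$ by the infimum definition of $\Welec$ (see \eqref{def:WdeE}--\eqref{def:Welec}), the lower semi-continuity of $\Welec$ on $\probas_{s,1}(\config)$ stated in Lemma \ref{lem:lsciW} will then conclude.

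\textbf{Local field bound.} Finiteness of $\Wint(\Pst)$, together with $\Dlog(\Pst)<\infty$ in the logarithmic cases, gives by Lemma \ref{lem:discrPkern} the moment bound $\Esp_\Pst[\D_R^2]=o(R^{2d})$, so Lemma \ref{lem:locflocint} applies. Denoting by $\Eloc_R$ the local electric field generated in $\carr_R$ by the restriction of $\C$ to $\carr_R$, and dividing \eqref{relatkern}--\eqref{relatlog} by $R^d$, we obtain
\[
\frac{1}{R^d}\Esp_\Pst\!\left[\frac{1}{\cds}\int \yg |\Eloc_{R,\eta}|^2 - \Nn_R g(\eta)\right] \leq \Wint(\Pst) + o_R(1) + o_\eta(1),
\]
with an additional $\Dlog(\Pst)+o_R(1)$ term on the right-hand side in the logarithmic cases (using the rewriting \eqref{rewriteDlog}).

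\textbf{Screening and tiling.} Fix $\epsilon>0$ and a threshold $M$, and restrict to the event $A_M := \{\int \yg |\Eloc_{R,\eta}|^2 \leq M R^d\}$, whose probability tends to one as $M\to\infty$ by the previous display and Markov's inequality. On $A_M$ the hypotheses of the screening lemma (Lemma \ref{lem:screening}) are met for $R$ large enough (depending on $\epsilon, \eta, M$), producing a screened pair $(\Cscr_R,\Escr_R)$ on a hypercube $K_R$ of integer sidelength $\simeq R$, having exactly $|K_R|$ points, agreeing with the original configuration outside a boundary layer of relative volume $O(\epsilon)$, and satisfying the Neumann condition $\Escr_R \cdot \vec n=0$ on $\partial K_R \times \R^k$. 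On $A_M^c$ we substitute a fixed reference configuration-field pair (e.g.\ the one associated with $\Z^d$) of bounded energy. Tiling $\R^d$ (and $\R^{d+k}$) by translates of $K_R$ and pasting the screened pair across adjacent faces yields a globally admissible random pair; averaging over a uniform translation in $K_R$ produces a stationary random electric process $\Pelec_R$ of intensity one, whose configuration push-forward $\Pst_R$ converges weakly to $\Pst$ as $R\to\infty$ and $\epsilon\to 0$ since the modifications only touch a thin boundary layer.

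\textbf{Energy accounting and passage to the limit.} Combining the screening estimate \eqref{erreurEcrantage} with the local bound from the first step yields
\[
\Esp_{\Pelec_R}[\mc W_\eta] \leq (1+C\epsilon)\bigl(\Wint(\Pst)+\Dlog(\Pst)\bigr) + C g(\eta)M\epsilon + o_R(1) + o_\eta(1) + \Pst(A_M^c)\cdot O_{M}(1),
\]
(with the $\Dlog$ term absent in the non-logarithmic cases). Letting $R\to\infty$, then $M\to\infty$ slowly, then $\epsilon\to 0$ with $g(\eta)\epsilon \to 0$, and finally $\eta\to 0$, one obtains $\limsup_R \Welec(\Pst_R) \leq \Wint(\Pst)$ (respectively $+\Dlog(\Pst)$). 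Combined with $\Pst_R \to \Pst$ and Lemma \ref{lem:lsciW}, this gives the proposition.

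The main technical obstacle is the careful interplay of the four parameters $R,\epsilon,\eta,M$. The screening condition \eqref{condR} requires $R$ large relative to $M$, but $M$ must capture the typical energy, which diverges as $\eta\to 0$ through the $g(\eta)$ term, and $\epsilon$ must vanish slowly enough to absorb the $g(\eta)\epsilon$ penalty in \eqref{erreurEcrantage}. In the logarithmic case there is the additional task of controlling the $\Esp[\D_R^2]\log R$ contribution, which is exactly what the assumption $\Dlog(\Pst)<\infty$ buys us.
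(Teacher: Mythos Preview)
Your approach is essentially the same as the paper's: build local fields, screen on a high-probability event, periodize, average to recover stationarity, and conclude by lower semi-continuity of $\Welec_\eta$. The order of limits and the use of Lemmas \ref{lem:locflocint}, \ref{lem:discrPkern}, \ref{lem:screening}, \ref{lem:lsciW} all match.

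One genuine technical point is glossed over. In the non-Coulomb cases ($k=1$) the screening lemma requires not only the energy bound $\frac{1}{R^d}\int \yg |\Eloc_\eta|^2 \leq M$ but also the decay hypothesis on $e_{\epsilon,R}$ appearing in \eqref{condR}. Your event $A_M$ captures the first but not the second, and the second is \emph{not} a consequence of the first. The paper handles this in Lemma \ref{lem:screenability}, equation \eqref{epetit}: one decomposes $\carr_{R}$ into sub-cubes of fixed size $R_0$, uses a mean-value argument to control $|\Eloc(X)|$ at height $|y|\geq \epsilon^2 R$ in terms of the sub-cube discrepancies, and then invokes the moment bound $\Esp_\Pst[\D_{R_0}^2]=o(R_0^{2d})$ from Lemma \ref{lem:discrPkern} to get $\Esp_\Pst[e_{\epsilon,R}]=o(R^{d-s})$. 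You should add this step (or at least intersect $A_M$ with the corresponding high-probability event) before invoking Lemma \ref{lem:screening}. The same quantity $e_{\epsilon,R}$ also appears as an error term in \eqref{erreurEcrantage}, and its expectation needs to be controlled in your energy accounting; the paper does this with a separate mean-value argument giving $\Esp_\Pst[e_{\epsilon,R}]\leq M/(\epsilon^6 R)$.

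Minor remarks: the paper conditions on the screenable event rather than substituting a reference configuration on its complement, but your variant works equally well; and you should explicitly pass to a subsequence $R_n$ realizing the $\liminf$ in $\Wint(\Pst)$, since only along such a sequence does the first display in your ``Local field bound'' step hold with $\Wint(\Pst)$ rather than a possibly larger quantity.
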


\paragraph{Screenability of the local electric fields.}
Let $\eta, \epsilon > 0$ be fixed. Let $\{R_n\}_n$ be an increasing sequence of real numbers such that $\lim_{n \ti} R_n = + \infty$ and
\begin{equation} \label{def:Rn}
\lim_{n \ti} \frac{1}{R_n^d} \Esp_{P} [ \Hint_{R_n} ] = \Wint(P).
\end{equation}
We start by an auxiliary lemma. 
\begin{lem} \label{lem:screenability}
Let $\Elocn$ denotes the local electric field generated by a point configuration in $\carr_{R_n}$. The following inequality holds
\begin{equation}
\label{Mpetit} P \left( \frac{1}{R_n^d} \int_{\carr_{R_n} \times \R^k} \yg |\Elocn_{\eta}|^2 \leq M \right) = 1 - O(M^{-1}),
\end{equation}
moreover, in the case $k=1$,
\begin{equation}
\label{epetit} P\left(\frac{1}{\epsilon^4 R_n^d} \int_{\carr_{R_n} \times (\R \backslash(-\epsilon^2 R_n, \epsilon^2 R_n))} \yg |\Elocn|^2 \leq \left(10 R_0\epsilon^{\frac{2\gamma}{1-\gamma} }\right)^{\gamma-1} R_n^{d-s} \right) = 1 - o_n(1).
\end{equation}
\end{lem}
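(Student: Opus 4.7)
The plan is to derive both probability estimates from Markov's inequality applied to a single quantity, namely $\frac{1}{R_n^d}\int_{\carr_{R_n}\times\R^k}\yg|\Elocn_\eta|^2$, whose expectation under $P$ is controlled by Lemma~\ref{lem:locflocint}.

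\textbf{Expectation control.} The assumption $\Wint(P)<\infty$ (together with $\Dlog(P)<\infty$ in the logarithmic cases) combined with Lemma~\ref{lem:discrPkern} yields $\Esp_P[\D_{R_n}^2]=o(R_n^{2d})$, which is the hypothesis required to apply \eqref{relatkern} or \eqref{relatlog}. Since $\Esp_P[\Nn_{R_n}]=R_n^d$ and $\frac{1}{R_n^d}\Esp_P[\Hint_{R_n}]\to\Wint(P)$ along $\{R_n\}$ by \eqref{def:Rn}, these inequalities give
\[
\limsup_{n\to\infty}\Esp_P\left[\frac{1}{R_n^d}\int_{\carr_{R_n}\times\R^k}\yg|\Elocn_\eta|^2\right]\leq \cds\bigl(\Wint(P)+g(\eta)+o_\eta(1)\bigr),
\]
with an additional $\Clog\Dlog(P)$ term in the logarithmic setting (using \eqref{rewriteDlog}). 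In particular this expectation is bounded uniformly in $n$, by a constant $C_\eta$ depending on $P,d,s$ and $\eta$ only.

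\textbf{First bound.} Markov's inequality applied to $\frac{1}{R_n^d}\int\yg|\Elocn_\eta|^2$ at level $M$ gives, for $n$ large,
\[
P\left(\frac{1}{R_n^d}\int_{\carr_{R_n}\times\R^k}\yg|\Elocn_\eta|^2 > M\right)\leq \frac{C_\eta}{M}=O(M^{-1}),
\]
which is exactly \eqref{Mpetit}.

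\textbf{Second bound.} The key localization observation is that the truncation $\sum_{p\in\C}\nabla f_\eta(X-p)$ in \eqref{def:Eloceta} is supported where $\min_p|X-p|\leq\eta$. Since $\C\subset\R^d\times\{0\}$, any $X=(x,y)$ with $|y|>\eta$ satisfies $|X-p|\geq|y|>\eta$ for every $p\in\C$, hence $\Elocn_\eta(X)=\Elocn(X)$ in that region. Taking $n$ large enough that $\eta<\epsilon^2 R_n$, this yields
\[
\int_{\carr_{R_n}\times(\R\setminus(-\epsilon^2 R_n,\epsilon^2 R_n))}\yg|\Elocn|^2\leq \int_{\carr_{R_n}\times\R^k}\yg|\Elocn_\eta|^2.
\]
A direct computation (using $(\gamma-1)/(1-\gamma)=-1$) simplifies the target threshold to
\[
T_n:=\epsilon^4 R_n^d\cdot\bigl(10 R_0\epsilon^{2\gamma/(1-\gamma)}\bigr)^{\gamma-1}R_n^{d-s}=(10R_0)^{\gamma-1}\epsilon^{4-2\gamma}R_n^{2d-s},
\]
and Markov's inequality applied to the right-hand side of the display gives
\[
P\left(\int_{\carr_{R_n}\times\R^k}\yg|\Elocn_\eta|^2 > T_n\right)\leq \frac{C_\eta R_n^d}{T_n}=C_\eta(10R_0)^{1-\gamma}\epsilon^{2\gamma-4}R_n^{s-d},
\]
which is $o_n(1)$ since $s<d$ while $\epsilon,R_0,\gamma$ are held fixed. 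This is \eqref{epetit}.

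The only genuinely delicate point in this plan is the localization step in the second bound, i.e.\ the remark that on the region $\{|y|\geq\epsilon^2 R_n\}$ the truncated and untruncated fields agree because all charges sit on the hyperplane $\R^d\times\{0\}$; everything else is a mechanical application of Lemma~\ref{lem:locflocint} together with Markov's inequality.
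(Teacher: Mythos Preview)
Your proof is correct. For \eqref{Mpetit} you argue exactly as the paper does: Markov's inequality combined with the expectation control from \eqref{relatkern}--\eqref{relatlog}.

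For \eqref{epetit}, however, your route is genuinely different from the paper's. The paper establishes the sharper fact
\[
\Esp_P\left[\frac{1}{\epsilon^4 R_n^d}\int_{\carr_{R_n}\times(\R\setminus(-\epsilon^2 R_n,\epsilon^2 R_n))}\yg|\Elocn|^2\right]=o(R_n^{d-s})
\]
via a direct pointwise estimate: it tiles $\carr_{R_n}$ by sub-cubes of a fixed sidelength $R_0$, uses a mean-value argument to bound $|\Elocn(X)|$ for $|y|\geq\epsilon^2 R_n$ in terms of the sub-cube discrepancies $\D_i$ and powers of $|y|^{-1}$, and then invokes Lemma~\ref{lem:discrPkern} to make $\frac{1}{R_0^{2d}}\Esp_P[\D_{R_0}^2]$ small. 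Your argument is more economical: you observe that $\Elocn=\Elocn_\eta$ on $\{|y|>\eta\}$ because the truncation correction $\sum_p\nabla f_\eta(\cdot-p)$ is supported in the $\eta$-balls around the charges, all of which sit on $\R^d\times\{0\}$; hence the tail integral is dominated by the full integral of $\yg|\Elocn_\eta|^2$, whose expectation you already controlled for \eqref{Mpetit}. Markov then concludes since the threshold $T_n\sim R_n^{2d-s}\gg R_n^d$. The paper's approach extracts finer decay information about the field away from the hyperplane (and actually uses the discrepancy structure of $P$), whereas yours bypasses the pointwise analysis entirely by recycling the first estimate; for the lemma as stated, both suffice.
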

\begin{proof}
The first point \eqref{Mpetit} follows directly from Markov's inequality and the assumptions on $P$. Indeed we have, in the cases \eqref{kernel} (using \eqref{relatkern})
\[
\Esp_{P} \left[ \frac{1}{\cds} \int_{\carr_{R_n} \times \R^k} \yg |\Elocn_{\eta}|^2 \right] \leq \Esp_{P} [ \Hint_{R_n} ] + O(R_n^d),
\]
and in the logarithmic cases (using \eqref{relatlog})
\[
\Esp_{P} \left[ \frac{1}{\cds} \int_{\carr_{R_n} \times \R^k} \yg |\Elocn_{\eta}|^2 \right] \leq \Esp_{P} [ \Hint_{R_n}] + \Clog\Esp_{P} [ \D^2_{R_n}] \log R_n + O(R_n^d),
\]
where the terms $O(R_n^d)$ depend on $\eta$. 

To prove the second point \eqref{epetit} let us fix $X = (x,y)$ with $x \in \carr_{R_n}$ and $y \geq \epsilon^2 R_n$. We may estimate
$\Elocn(X)$ as follows: let $R_0 > 0$ and let us divide $R_n$ into $O(R_n^d/R_0^d)$ hypercubes $\{\bC_i\}_{i \in I}$ of sidelength $\in (\hal R_0, \frac{3}{2} R_0)$. For any $i \in I$ we have by a mean value argument
\[
\left|\int_{\bC_{i}} \nabla g(X-t) (d\C(t) - dt)\right| \leq C|\D_i| |y|^{-s-1} + CR_0(\Nn_i + R_0^d)|y|^{-s-2},
\]
where $\D_i$ (resp. $\Nn_i$) denotes the discrepancy (resp. the number of points) in $\bC_i$. Summing over $i \in I$ we get
\[
|\Elocn(X)| \leq C \sum_{i \in I} |\D_i| |y|^{-s-1} + C R_0 (\Nn_{R_n} + R_n^{d}) |y|^{-s-2}.
\]
We may thus write, using the stationarity of $P$
\[
\Esp_{P} [|\Elocn(X)|^2] \leq C \frac{R_n^{2d}}{R_0^{2d}} \Esp_{P} [\D^2_{R_0}] |y|^{-2s-2} + CR_0 (\Esp[\Nn_{R_n}^2] + R_n^{2d}) |y|^{-2s-4}.
\]
Using Lemma \ref{discrPkern} or the fact that $\Dlog(P)$ is finite we have $\Esp[\Nn_{R_n}^2] = R_n^{2d} + O(R_n^{2d})$. Finally, integrating over $\carr_R \times \R \backslash(-\epsilon^2 R_n, \epsilon^2 R_n))$ we obtain
\begin{multline*}
\Esp_{P} \left[ \frac{1}{\epsilon^4 R_n^d} \int_{\carr_{R_n} \times (\R \backslash(-\epsilon^2 R_n, \epsilon^2 R_n))} \yg |\Elocn|^2  \right] \\ \leq \frac{1}{\epsilon^4} \left( C \Esp_P[\D_{R_0}^2] \frac{R_n^{2d}}{R_0^{2d}} \frac{1}{(\epsilon^2 R_n)^{d+s}} + C R_0 R_n^{2d} \frac{1}{(\epsilon^2 R_n)^{d+s+2}}\right).
\end{multline*}
Using the fact that $\frac{1}{R_0^{2d}} \Esp_{P}[\D_{R_0}^2] = o(1)$ as $R_0 \ti$ (see Lemma \ref{lem:discrPkern}), we obtain
\[
\Esp_{P} \left[ \frac{1}{\epsilon^4 R_n^d} \int_{\carr_{R_n} \times (\R \backslash(-\epsilon^2 R_n, \epsilon^2 R_n))}  \yg |\Elocn|^2 \right]= o(R_n^{d-s}).
\]
The bound \eqref{epetit} follows by Markov's inequality.
\end{proof}

We now turn to the proof of Proposition \ref{prop:elecint}.
\begin{proof}

\textbf{Screening the local electric fields.}
When $\eta, \epsilon > 0$ and $M > 0$ are fixed we denote by 
$\Spn$ the set of point configurations in $\carr_{R_n}$ such that $\Elocn$ satisfies \eqref{Mpetit} and, in the case $k=1$, \eqref{epetit}. For any $\C$ in $\Spn$ the conclusions of Lemma \ref{lem:screening} apply to $\Elocn$. We may thus find a point configuration $\Cscr$ in $K_n := \carr_{\ceil{R_n}}$ and a compatible field $\Escr$ such that
\begin{enumerate}
\item The point configurations $\C$ and $\Cscr$ coincide on a large subset of $C_{R_n}$, namely $\{x \in C_{R_n}, \dist(x, \partial C_{R_n} \geq 2\epsilon R_n\}$.
\item The vector field $\Escr$ is screened i.e. $\Escr \cdot \vec{n} = 0$ on $\partial K_n \times \R^k$.
\item The energy of $\Escr$ is bounded in terms of that of $\Elocn$ as in \eqref{erreurEcrantage}
\begin{multline} \label{erreurscr}
\int_{K_n \times \R^k} \yg |\Escr_{\eta}|^2 \leq \left( \int_{\carr_{R_n} \times \R^k} \yg |\Elocn_{\eta}|^2 \right) (1+C\ep) \\ + Cg(\eta) M \ep R_n^d + Ce_{\ep, R} \ep R_n^d + o(R_n^{d-1}),
\end{multline}
where $e_{\ep,R}$ is defined in \eqref{def:eeps}.
\end{enumerate}

\textbf{Constructing a global electric field.}
Any such point configuration (resp. vector field) may be extended periodically in the whole space $\R^d$. The main point is that since $\Escr$ is screened we can paste together several copies of $\Escr$ periodically without creating divergence at the boundary of two tiles. Let $\Cper$ (resp. $\Eper$) be the resulting periodic point configuration (resp. vector field)

We let $P_n$ be the conditional expectation of $P$ knowing $\Spn$ and we let $\Pper_n$ be the push-forward of $P_n$ by the map $\C \mapsto \Cper$ defined above. Finally we let $\Pav_n$ be the average of $\Pper_n$ over translations in $K_n$. Taking the expectation of \eqref{erreurscr} under $P$ we see that (with $\Welec_{\eta}$ as defined in \eqref{def:Welec})
\begin{multline*}
\Welec_{\eta}(\Pav_n) \leq \frac{1}{R_n^d} \Esp_{P_n} \left[\frac{1}{\cds} \int_{K_n \times \R^k} \yg |\Escr_{\eta}|^2 \right] - g(\eta) \\
\leq \frac{1}{R_n^d} \Esp_{P_n} \left[\frac{1}{\cds} \int_{\carr_{R_n} \times \R^k} \yg |\Elocn_{\eta}|^2 \right] (1+C\ep)  - g(\eta) + Cg(\eta) M \ep \\
+ C \ep \Esp_{P_n} [e_{\ep, R_n}] + o(R_n^{-1}).
\end{multline*}

From Lemma \ref{lem:screenability} we see that as $M \ti$, $n \ti$ the random point process $P_n$ converges to $P$. In particular we may bound the expectations under $P_n$ in the right-hand side by the expectation under $P$ at a small cost
\[
\Esp_{P_n} \left[\int_{\carr_{R_n} \times \R^k} \yg |\Elocn_{\eta}|^2 \right] \leq \Esp_{P} \left[\int_{\carr_{R_n} \times \R^k} \yg |\Elocn_{\eta}|^2 \right] (1 + o(1)),
\]
\[
\Esp_{P_n} [e_{\ep, R_n}] \leq \Esp_{P} [e_{\ep, R}](1+o(1))
\]
where both terms are $o(1)$ as $M \ti, n \ti$ (keeping $\epsilon, \eta$ fixed). By definition of $e_{\ep,R}$ and a mean value argument we see that up to changing $\epsilon$ into $2\epsilon$ we may assume that
\[
\Esp_{P} [e_{\ep, R_n}] = \Esp_{P}\left[ \frac{1}{\epsilon^4 R_n^d} \int_{\carr_{R_n} \times (\R \backslash(-\epsilon^2 R_n, \epsilon^2 R_n))} \yg |\Elocn|^2\right] \leq \frac{1}{\epsilon^6 R_n} M.
\]
We thus get
\begin{multline*}
\Welec_{\eta}(\Pav_n) \leq \frac{1}{R_n} \Esp_{P} \left[\frac{1}{\cds} \int_{\carr_{R_n} \times \R^k} \yg |\Elocn_{\eta}|^2 \right] (1 + o(1)) (1+C\ep) -g(\eta) \\ + Cg(\eta) M \ep
+ C \ep^{-5} R_n^{-1}.
\end{multline*}
The convergence of $P_n$ to $P$ implies the convergence of $\Pav_n$ to $P$ as $M, n \ti$ and $\epsilon \t0$. This might be seen as follows: the topology on $\probas(\config)$ is such that for any fixed $\delta > 0$, if two random point processes coincide (or are close to each other) in $C_{R_0}$ for $R_0$ large enough, then they are $\delta$-close. In particular $P_n$ and $P$ are very close to each other in $\carr_{R_n}$ because on the one hand the vast majority of point configurations under $P$ are screenable and on the other hand the screening procedure does not modify the points in a large interior part of $\carr_{R_n}$. Heuristically speaking, if $R_n$ is larger than some $R_0$ then $P_n$ and $P$ should be $\delta$-close. Now when averaging the random point process $P_n$ over translations in $\carr_{R_n}$ there is a nonzero proportion of $z \in \carr_n$ which are such that the translation by $z$ of the thin layer of $\carr_{R_n}$ in which the points have been modified ends up intersecting $\carr_{R_0}$, thus $\Pav_n$ look \textit{less} like $P$ in $\carr_0$. However when $R_n \gg R_0$ this proportion is of order $\epsilon$, hence $\Pav_n$ still converges to $P$ when taking $M,n$ large and $\epsilon$ small.

\textbf{Conclusion.}
Taking $M$ large, letting $n \ti$ and using \eqref{relatkern} or \eqref{relatlog} we obtain, by lower semi-continuity of $\Welec_{\eta}$ over $\probas_{s,1}(\config)$, (see Lemma \ref{lem:lsciW}) that
\[
\Welec_{\eta}(P) \leq \Wint(P)(1+C\ep) + O(\ep) + o_{\eta}(1),
\]
plus an additional $\Dlog(P)$ term in the logarithmic cases. Sending $\epsilon \t0$ and $\eta \t0$ we finally obtain that
$\Welec(P) \leq \Wint(P)$ or, in the logarithmic cases $\Welec(P) \leq \Wint(P) + \Dlog(P)$, which concludes the proof of Proposition \ref{prop:elecint}.
\end{proof}

\subsection{Construction of a recovery sequence} \label{sec:recovery}
\textit{In this section $P$ denotes a stationary random point process of intensity $1$ such that $\Welec(P)$ is finite. Moreover we assume that we are in one of the non-Coulomb cases, i.e. $d=1$ or $d \geq 2$ and $s > d-2$.}

The following result forms the second step in the proof of Theorem \ref{theo:connec}.
\begin{prop} \label{prop:partialconv} There exists a sequence $\{P_{N}\}_{N}$ of \textit{hyperuniform} random point processes in $\probas_{s,1}(\config)$ such that 
\[
\lim_{N \ti} P_N = P, \quad \lim_{N \ti} \ERS[P_N|\Poisson] = \ERS[P|\Poisson],
\]
and satisfying
\begin{equation} 
\label{convWint} \lim_{N \ti} \Wint(P_N) = \Welec(P).
\end{equation}
Let us observe that since the random point processes are hyperuniform, in the logarithmic cases they satisfy $\Dlog(P_N) = 0$ for any $N$.
\end{prop}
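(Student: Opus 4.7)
The plan is to produce $P_N$ by a screening-and-periodization procedure run at the level of the random electric process associated to $P$. First, I invoke Lemma~\ref{lem:concordWelec} to lift $P$ to a stationary random electric process $\Pelec$ whose push-forward by $\conf$ is $P$ and which satisfies $\Esp_{\Pelec}[\mc{W}] = \Welec(P)$. Fixing auxiliary parameters $\eta,\epsilon > 0$, an energy cutoff $M>0$, and an integer sequence $R_N\to\infty$, I condition $\Pelec$ on the event that $E$ is screenable on $\carr_{R_N}$ in the sense of Lemma~\ref{lem:screening}; arguing as in Lemma~\ref{lem:screenability} (using the finiteness of $\Esp_{\Pelec}[\mc{W}_\eta]$ for the bulk and a discrepancy estimate for the far-field term $e_{\epsilon,R_N}$), this event has $\Pelec$-probability $1-o(1)$ as $M\to\infty$ and $N\to\infty$. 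Applying Lemma~\ref{lem:screening} yields a point configuration $\Cscr_N$ in $K_N := \carr_{\ceil{R_N}}$ with exactly $|K_N|$ points, together with a compatible screened field $\Escr_N$ satisfying \eqref{erreurscr}. Extending both periodically to $\R^d$ (resp.\ $\R^{d+k}$) gives $\Cper_N$ and $\Eper_N$; averaging the law of $\Cper_N$ over translations by $u \in K_N$ defines the stationary random point process $P_N$.

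Hyperuniformity follows at once: since $\Cper_N$ is $\ceil{R_N}$-periodic with exactly $|K_N|$ points per period cell of volume $|K_N|$, one has the deterministic bound $|\D_R| = O(R^{d-1})$, so $\Esp_{P_N}[\D_R^2] = O(R^{2d-2})$ and in particular $\Dlog(P_N) = 0$ via \eqref{rewriteDlog}. The weak convergence $P_N \to P$ is obtained exactly as at the end of the proof of Proposition~\ref{prop:elecint}: screening modifies the configuration only in a boundary layer of relative volume $O(\epsilon)$, and translation-averaging reaches any fixed window $\carr_R$ only on a fraction $O(\epsilon)$ of translates once $M$ and $N$ are large. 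For the entropy, since $P_N$-configurations restricted to $K_N$ coincide with $\Cscr_N$, which agrees with the restriction of a $P$-configuration to $\carr_{R_N}$ outside the screening layer, a direct comparison at the level of finite-volume entropies combined with the super-additivity identity~\eqref{superadditive} gives $\limsup_N \ERS[P_N|\Poisson] \leq \ERS[P|\Poisson]$; the reverse inequality follows from lower semi-continuity of $\ERS[\cdot|\Poisson]$ on $\probas_{s,1}(\config)$.

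The core of the proof is the energy statement \eqref{convWint}. The lower bound $\liminf_N \Wint(P_N) \geq \Welec(P)$ is automatic from Proposition~\ref{prop:elecint} ($\Welec\leq\Wint$) combined with Lemma~\ref{lem:lsciW} applied to $P_N \to P$. For the matching upper bound I use that on the scale of the fundamental cell $K_N$, the field $\Eper_N$ is globally compatible with $\Cper_N$ on $\R^{d+k}$ and has vanishing normal component on $\partial K_N$; the boundary contributions in the identity~\eqref{energienonneutre} therefore vanish between adjacent cells, and $\Wint(P_N)$ is controlled by the per-cell renormalized $L^2$-norm of $\Eper_N$. Inserting \eqref{erreurscr} and using stationarity of $\Pelec$ together with the definition of $\mc{W}_\eta$ to identify the averaged field energy $\frac{1}{\cds R_N^d}\Esp_{\Pelec}[\int_{\carr_{R_N}\times\R^k}\yg|E_\eta|^2]$ with $\Welec_\eta(P)+g(\eta)+o_N(1)$, I obtain
\[
\limsup_N \Wint(P_N) \leq \Welec(P)(1 + C\epsilon) + C g(\eta) M \epsilon + O(\epsilon),
\]
and sending $M \to \infty$, then $\epsilon \to 0$, then $\eta \to 0$ along a suitable diagonal concludes.

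The main obstacle is this last identification: transferring the screening-lemma bound on the field energy into an upper bound on the intrinsic quantity $\Wint(P_N)$ requires that the $R \to \infty$ limit in the definition of $\Wint(P_N)$ genuinely extract the per-cell energy of $\Eper_N$ without being polluted by periodic boundary contributions. In the non-Coulomb regime the Riesz decay of $g$ ensures that distant pairs contribute negligibly to $\Hint_R$, which is precisely the reason Proposition~\ref{prop:partialconv} is restricted to the non-Coulomb cases.
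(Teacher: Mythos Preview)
Your construction by periodic extension has a genuine gap in the entropy convergence. If you periodize a single screened cell to all of $\R^d$ and then translation-average, then for every $R$ larger than twice the period the restriction $P_N|_{\carr_R}$ is supported on configurations that are exactly periodic (up to a phase); this is a $\Poisson|_{\carr_R}$-null event, so $\Ent[P_N|_{\carr_R}\,|\,\Poisson|_{\carr_R}]=+\infty$ and hence $\ERS[P_N|\Poisson]=+\infty$ by~\eqref{superadditive}. Your appeal to super-additivity works against you: the comparison you sketch only controls the finite-volume entropy at the single scale $R\approx |K_N|^{1/d}$, but $\ERS$ is a supremum over \emph{all} scales. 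The paper avoids this by tiling $\R^d$ with \emph{independent} copies of the screened cell process (Step~1 of its proof), so that the finite-volume entropy is genuinely additive across cells; this is what makes~\eqref{convEnt} translate into $\ERS[P_N|\Poisson]\to\ERS[P|\Poisson]$.

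On the energy side, your claim that ``the boundary contributions in~\eqref{energienonneutre} vanish between adjacent cells'' conflates two different fields: identity~\eqref{energienonneutre} is written for the \emph{local} field $\Eloc$ of the configuration in $\carr_R$, not for your screened/periodic field, and the vanishing normal component of $\Escr$ plays no role in that boundary integral. The actual mechanism (the paper's Step~2) is: on a cube aligned with the cell structure one invokes Lemma~\ref{lem:minilocale} to bound $\Hint$ by the screened-field energy, and then one must compare $\Hint$ on the aligned cube $C_{(m+1)N,z}$ with $\Hint$ on the non-aligned cube $C_{mN}$ appearing in the definition of $\Wint$. This comparison --- the decomposition~\eqref{decompoCMp1}, the Cauchy--Schwarz bound~\eqref{utilisCS} on the bulk--boundary cross term, and the boundary--boundary estimate~\eqref{espbordbord} --- is where the non-Coulomb hypothesis $s>d-2$ enters (through $\sum_{j'} g(\dist(\bar C_j,\bar C_{j'}))\lesssim \int_1^{mN} r^{-s}r^{d-2}\,dr$). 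You correctly flag this as the main obstacle, but it is the substance of the proof and cannot be left as a remark.
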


The proof of Proposition \ref{prop:partialconv} goes in two steps. 
\begin{enumerate}
\item First we construct an auxiliary sequence of random point processes which converges to $P$ and such that almost every point configuration is finite and “screened” i.e. there exists an associated \textit{screened} electric field. This is done in Lemma \ref{lem:auxseq}.
\item Next, we extend this random point process in the whole space and make it stationary, before bounding its interaction energy in terms of $\Welec(P)$.
\end{enumerate}
 
\subsubsection{An auxiliary sequence}
\begin{lem} \label{lem:auxseq}
There exists a sequence $\{P^{(1)}_N\}_{N}$ of random point processes in $\probas(\config(C_N))$ such that

\textbf{0.} The sequence $\{P^{(1)}_N\}_{N}$ converges to $P$ as $N \ti$. More precisely, there exists a sequence $\{L_N\}_N$ such that $L_N = N(1-o(1))$ and  the respective restrictions of $\Pmod_N$ and $P$ to $C_{L_N}$ are arbitrarily close as $N \ti$. 

\textbf{1.} For $P^{(1)}_N$-a.e. point configuration $\C^{(1)}$ there exists a \textit{screened} electric field $E^{(1)}$ satisfying
\begin{equation} \label{Ecompat}
- \div(\yg E^{(1)}) = \cds(\C^{(1)} - \delta_{\R^d}) \text{ in } \carr_N \times \R^k,
\end{equation}
\begin{equation} \label{Escrscr}
E^{(1)} \cdot \vec{n} = 0 \text{ on }\partial \carr_N \times \R^k.
\end{equation}
In particular the point configurations have $P^{(1)}_N$-a.s. $N^d$ points in $\carr_N$. We also have
\begin{equation} \label{distanceaubord}
\min_{p \in \C^{(1)}} \dist(p, \partial C_N) \geq \eta_0,
\end{equation}
for some $\eta_0 >0$ depending only on $d,s$.

\textbf{2.} The following estimate holds
\begin{equation} \label{Escrener}
\limsup_{N \ti} \lim_{\eta \t0}  \Esp_{P^{(1)}_N} \left[ \frac{1}{\cds} \frac{1}{N^d} \int_{\carr_N \times \R^k} \yg  |E^{(1)}_{\eta}|^2 - g(\eta) \right] \leq \Welec(P).
\end{equation}

\textbf{3.} The relative entropies of $P^{(1)}_N$ and $P_{|\carr_N}$ with respect to $\Poisson_{|\carr_N}$ are close
\begin{equation} \label{convEnt}
\Ent[P^{(1)}_N |\Poisson_{|\carr_N}] = \Ent[P_{|\carr_N} |\Poisson_{|C_N}] + o(N^d).
\end{equation}
\end{lem}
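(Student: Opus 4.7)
The plan is to start from a stationary random electric process $\Pelec$ provided by Lemma \ref{lem:concordWelec}, whose push-forward by $\conf$ is $P$ and whose expected renormalized energy equals $\Welec(P)$. For fixed small parameters $\eta > 0$ and $\epsilon > 0$ (later taken to depend on $N$), I would apply Markov's inequality to the finite quantity $\Esp_{\Pelec}[\mc{W}_\eta]$ to obtain constants $M = M(\eta, \epsilon)$ and a \emph{good event} $G_N$ (depending on $N, \eta, \epsilon$) with $\Pelec(G_N) \to 1$ as $N \to \infty$, on which $\tfrac{1}{N^d}\int_{\carr_N \times \R^k} \yg |E_\eta|^2 \leq M$ and, in the case $k=1$, the auxiliary quantity $e_{\epsilon,N}$ defined by \eqref{def:eeps} is bounded by a polynomial in $\epsilon^{-1}$. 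The latter bound is obtained by a mean-value argument of the same flavour as Lemma \ref{lem:screenability}, using stationarity of $\Pelec$ and the discrepancy control $\Esp_P[\D_R^2] = o(R^{2d})$ (Lemma \ref{lem:discrPkern}) which follows from finiteness of $\Welec(P)$. For $N$ large enough, the screenability hypotheses \eqref{condR} are satisfied on $G_N$.

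\textbf{Screening and definition of $P^{(1)}_N$.} Choosing $N$ integer so that $K_N = \carr_N$, I would apply Lemma \ref{lem:screening} to each $E \in G_N$, obtaining a screened configuration $\Cscr$ in $\carr_N$ with exactly $N^d$ points, coinciding with $\conf(E)$ on the interior $\{x \in \carr_N : \dist(x, \partial \carr_N) \geq 2\epsilon N\}$, and a compatible vector field $\Escr$ satisfying \eqref{rendchamp}. The boundary-avoidance property \eqref{distanceaubord} follows from the construction in the screening proof, where the new points in the thin layer are placed on a suitable sublattice avoiding the faces. Define $P^{(1)}_N$ as the push-forward of $\Pelec(\,\cdot\,|\,G_N)$ by $E \mapsto \Cscr$. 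Then property 1 is built into the construction, and property 0 holds with $L_N = (1-2\epsilon)N$ upon choosing $\epsilon = \epsilon_N \to 0$ slowly enough that $\Pelec(G_N) \to 1$ (this ensures the restrictions to $\carr_{L_N}$ of $P^{(1)}_N$ and $P$ get arbitrarily close, since screening preserves the interior points and $\Pelec(G_N) \to 1$).

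\textbf{Energy bound.} For property 2, taking the expectation of \eqref{erreurEcrantage} under $\Pelec(\cdot\,|\,G_N)$ gives
\[
\Esp_{P^{(1)}_N}\!\left[\frac{1}{\cds N^d} \int_{\carr_N \times \R^k} \yg |\Escr_\eta|^2 - g(\eta)\right] \leq (1+C\epsilon)\, \Esp_{\Pelec}[\mc{W}_\eta] - g(\eta) + C g(\eta) M \epsilon + C \epsilon\, e_{\epsilon,N} + o(1),
\]
using $\Esp_{\Pelec}[\mc{W}_\eta] \to \Welec(P)$ as $\eta \to 0$. Sending $\epsilon \to 0$ (with $\eta$ fixed) and then $\eta \to 0$, and performing a diagonal extraction, yields the required estimate \eqref{Escrener}.

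\textbf{Main obstacle: entropy control.} The most delicate part is property 3. Since the screening leaves the points in $\carr_{L_N}$ unchanged and only alters the thin boundary layer $\carr_N \setminus \carr_{L_N}$ of volume $O(\epsilon_N N^d) = o(N^d)$, one decomposes the relative entropy as
\[
\Ent[P^{(1)}_N | \Poisson_{|\carr_N}] = \Ent\!\left[(P^{(1)}_N)_{|\carr_{L_N}} \,\big|\, \Poisson_{|\carr_{L_N}}\right] + \Esp_{P^{(1)}_N}\!\left[\Ent\!\left(\mathrm{bdry}\,|\,\mathrm{int}\right)\right].
\]
The first term equals $\Ent[P_{|\carr_{L_N}} | \Poisson_{|\carr_{L_N}}]$ (by property 0), which differs from $\Ent[P_{|\carr_N} | \Poisson_{|\carr_N}]$ by $o(N^d)$ thanks to the superadditivity identity \eqref{superadditive} and the specific entropy convergence \eqref{def:ERS}. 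The hard part is bounding the conditional entropy of the boundary-layer configuration given the interior: the naive deterministic placement performed by the screening produces an essentially atomic conditional law, which would have \emph{large} relative entropy with respect to $\Poisson$. One must therefore randomize the boundary-layer construction, e.g.\ by translating the auxiliary sublattice by a uniform random shift or by smoothing each added point by a small uniform cloud, so that the conditional distribution on the boundary layer has density comparable to the Poisson one on a region of volume $O(\epsilon_N N^d)$. This gives a conditional entropy of order $\epsilon_N N^d \log(1/\epsilon_N) = o(N^d)$ for $\epsilon_N \to 0$ slowly, yielding \eqref{convEnt}. The verification that this randomization does not spoil the energy estimate of property 2 (it only costs an additional $o(R^{d-1})$ term, absorbable in the screening error) is the last technical point to check.
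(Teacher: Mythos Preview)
Your approach is essentially the same as the paper's (stationary electric process, Markov-type screenability event, application of the screening lemma, randomization of modified points for the entropy), but you are missing one key step: the \emph{regularization}. After screening, the paper separates pairs of points of $\Cscr$ that are too close to each other, producing a modified configuration $\Cmod$ and field $\Emod$. The reason is that property~2 requires
\[
\lim_{\eta \to 0}\ \Esp_{P^{(1)}_N}\!\left[\frac{1}{\cds N^d}\int_{\carr_N\times\R^k}\yg|E^{(1)}_\eta|^2 - g(\eta)\right],
\]
i.e.\ the full $\eta\to 0$ limit for each fixed $N$, whereas the screening estimate \eqref{erreurEcrantage} only controls the truncated energy at the \emph{same} scale $\eta$ as was used in the construction. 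Your ``diagonal extraction'' therefore does not close the gap: with $P^{(1)}_N$ built at scale $\eta_N$, you obtain a bound on the expectation at scale $\eta_N$, not on its limit as $\eta\to 0$. In general this limit can be much larger, because pairs of charges at distance $\ll \eta_N$ are invisible at truncation scale $\eta_N$ but contribute a large positive term (of order $g(\mathrm{dist})$) once $\eta$ drops below their separation. The regularization forces a uniform lower bound on inter-particle distances, so that the difference between the truncated energy and its $\eta\to 0$ limit is controlled by a constant times the number of points, which is exactly what is needed for \eqref{Escrener}.

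A secondary point: your displayed energy inequality has a spurious extra $-g(\eta)$ (since $\mc{W}_\eta$ already contains the subtraction). Your entropy discussion is on the right track and matches the paper's idea of letting the modified points move in small balls; note however that the regularization also moves a small fraction of \emph{interior} points, so the randomization must cover those as well, not only the boundary layer.
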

\begin{proof}
This follows from the analysis of \cite{LebSer}, and we sketch here the main steps.

Let $\Pelec$ be a stationary electric process associated to $P$ as in Lemma \ref{lem:concordWelec}. For fixed $R, M, \epsilon, \eta > 0$ we say that an electric field $E$ is in $\Sp$ (or is \textit{screenable}) if its energy is controlled as follows
\[
\frac{1}{R^d} \int_{\carr_R \times \Rd} \yg |E_{\eta}|^2 \leq M \text{ and, if $k =1$, } \frac{1}{\epsilon^4 R^d} \int_{\carr_R \times \Rd \backslash (-\epsilon^2 R, \epsilon^2 R)} \yg |E_{\eta}|^2 \leq 1.
\]
Under the assumption that $\Welec(P)$ is finite, then the probability $\Pelec(\Sp)$ tends to $1$ as $M, R \ti$ for any $\epsilon, \eta >0$ fixed. This is proven in \cite[Lemma 5.10]{LebSer} and is similar in spirit to Lemma \ref{lem:screenability}.

If $R$ is an integer much larger than $M$ and $E$ is in $\Sp$, the screening procedure as in Lemma~\ref{lem:screening} (see also \cite[Proposition 6.1]{PetSer} and \cite[Proposition 5.2]{LebSer}) applies. In particular we may change the underlying point configuration in a thin layer of size $\leq \epsilon R$ close to the boundary of $\carr_R$ and obtain a new \textit{screened} point configuration $\Cscr$ in $\config(\carr_R)$ as well as a compatible \textit{screened} electric field $\Escr$ which satisfies \eqref{Escrscr}. It also ensures that \eqref{distanceaubord} holds. The screening procedure is described in \cite[Section 5.1]{LebSer}. The energy of $\Escr$ is bounded in terms of the energy of $E$ as in \eqref{erreurEcrantage}.

The next step is to regularize the point configurations $\Cscr$ by separating the pair of points which are close from each other.  This regularization procedure is described in \cite[Section 5.2]{LebSer}, and another electric field $\Emod$ can be associated to the regularized point configurations, with a good energy bound. The main benefit of this procedure is to control the difference between $\frac{1}{R^d} \int_{\carr_R \times \R^k} \yg |\Emod_{\eta}|^2 - g(\eta)$ and its limit as $\eta \t0$. In general the limit may be much larger because of the contribution of pairs of points which are very close, at distance $\ll \eta$, and which are not “seen” when truncating at scale $\eta > 0$. 

We let $\epsilon, \eta$ tend to $0$ and $M$ tend to infinity (depending on $N$) and we pick $R = N$ large enough. We let $P^{(1)}_N$ be the associated random point process in $\carr_N$. Most of the point configurations (or electric fields) are “screenable”, the screening procedure only modifies the configuration in a thin boundary layer of $\carr_N$, and the regularization moves only a \textit{small} fraction of the points by a \textit{small} distance. This ensures that $\P^{(1)}_N$ converges to $P$ as $N \ti$. 

The estimates on the energy of the screened-then-regularized electric fields are such that \eqref{Escrener} holds (see \cite[Section 6.3.4.]{LebSer}) with $E^{(1)} := \Emod$.

Concerning the entropy, letting the new/modified points of the configurations move in small balls allow us to recover a small, nonzero volume in phase space without affecting the energy, since only a small fraction of the points have been deleted/created/modified, it gives \eqref{convEnt} (see \cite[Section 6.3.5]{LebSer} for a precise analysis of the volume loss).
\end{proof}

\subsubsection{Proof of Proposition \ref{prop:partialconv}}
\begin{proof}
\textbf{Step 1.} \textit{Construction of the random point process.}
Let $\{P^{(1)}_N\}_N$ be as in Lemma \ref{lem:auxseq} and let us extend $P^{(1)}_N$ in the whole space $\R^d$ as follows. Let $\{\bC_i\}_{i \in I}$ be a tiling of $\R^d$ by a family of hypercubes of sidelength $N$ and let $x_i$ be the center of $x_i$ (we may impose that one of the $x_i$'s is $0$). Let $\{P^{(1)}_{N,i}\}_{i \in I}$ be the laws of independent point processes distributed as $P^{(1)}_{N}$. 

To any family $\{\C^{(i)}\}_{i \in I}$ of point configurations in $C_N$ we may associate the point configuration 
\[
\C := \sum_{i \in I} \theta_{x_i} \cdot \C^{(i)}
\]
which amounts to “paste” the point configuration $\C^{(i)}$ in the hypercube $\bC_i$. 

For any $i \in I$, let $E^{(i)}$ be an electric field which is compatible with $\C^{(i)}$ as in \eqref{Ecompat} and screened as in \eqref{Escrscr}. By the latter condition, the normal component of each $E^{(i)}$ vanishes on the boundary of $C_N \times \R^k$, thus we may paste together such fields along the boundaries and their energy is additive. In particular the electric field $E$ defined by
\begin{equation} \label{def:Econst}
E(x) := \sum_{i \in I}  E^{(i)}(x-x_i)
\end{equation}
is compatible with $\C$ and moreover we have
\[
\int_{A} |E_{\eta}|^2 = \sum_{i \in I} \int_{A} |E^{(i)}_{\eta}|^2
\]
for any measurable subset $A \subset \R^{d+k}$ and any $\eta > 0$. Let us also observe that, by construction, the normal component of $E$ vanishes on the boundary of $C_{mN} \times \R^k$ for any $m \geq 1$ (in fact it is easy to see that it vanishes on any path included in $N\Z^d \times \R^k$).

We let $P^{(2)}_{N}$ be the random point process obtained by pasting $P^{(1)}_{N,i}$ in $\bC_i$ for $i \in I$, or in other terms the push-forward of the product measure of the $\{P^{(1)}_{N,i}\}_{i \in I}$ by the map 
\[
\{\C^{(i)}\}_{i \in I} \mapsto \C := \sum_{i \in I} \theta_{x_i} \cdot \C^{(i)}.
\]
For any $z \in C_{N}$, we let $P^{(2)}_{N,z} = \theta_z \cdot P^{(2)}_N$ be the push-forward of $P^{(2)}_N$ by the translation by a vector $z$. We also define 
$P^{(3)}_{N}$ as the uniform average of $P^{(2)}_{N,z}$ for $z \in C_N$. It is not hard to check that $P^{(3)}_N$ is a hyperuniform stationary random point process which converges to $P$ as $N \ti$.

\textbf{Step 2.} \textit{Estimates on the energy.}
We fix $z \in C_{N}$ and $m \geq 1$, and we use the subscript $z$ to denote a translation by $z$, e.g. $C_{mN, z}$ denotes the hypercube $C_{mN}$ translated by $z$.

By construction, to $P^{(2)}_{N,z}$-almost every point configuration in $C_{(m+1)N,z}$ we may associate an electric field $E$ whose normal component vanishes on the boundary of $C_{(m+1)N,z} \times \R^k$.

Let us recall that $\Inte[A, B](\C)$ denotes the interaction energy between the sets $A$ and $B$ (cf. \eqref{def:inte}). By minimality of the local energy (cf. Lemma \ref{lem:minilocale}) we have for $P^{(2)}_{N,z}$-almost every $\C$
\[
\Inte[C_{(m+1)N,z}, C_{(m+1)N,z}] (\C) \leq \lim_{\eta \t0} \left( \frac{1}{\cds} \int_{C_{(m+1)N,z} \times \R^k} |E_{\eta}|^2 - N^d g(\eta)\right).
\]
Using \eqref{Escrener} we see that the right-hand side is bounded in terms of $\Welec(P)$ as $N \ti$, more precisely we obtain
\begin{equation} \label{energCmpN}
\limsup_{N \ti} \limsup_{m \ti} \frac{1}{(mN)^d} \Esp_{P^{(2)}_{N,z}} \left[ \Inte[C_{(m+1)N,z}, C_{(m+1)N,z}] \right] \leq \Welec(P),
\end{equation}
and both limits (as $m \ti$ and $N \ti$) are uniform for $z \in \carr_{N}$.  

This gives an asymptotic upper bound on the expectation of $\Inte[ C_{(m+1)N,z}, C_{(m+1)N,z}]$ under $P^{(2)}_{N,z}$, however the relevant quantity to control in order to get \eqref{convWint} is rather the expectation of
$\Inte[C_{mN}, C_{mN}]$.
We thus need to bound the difference  $\Inte[ C_{(m+1)N,z}, C_{(m+1)N,z}] -  \Inte[C_{mN}, C_{mN}]$.

Let us write $\Inte[C_{(m+1)N,z}, C_{(m+1)N,z}]$ as
\begin{multline} \label{decompoCMp1}
\Inte[ C_{(m+1)N,z}, C_{(m+1)N,z}] = \Inte[C_{mN}, C_{mN}] + 2 \Inte[C_{mN}, C_{(m+1)N,z} \backslash C_{mN}] \\ + \Inte[C_{(m+1)N,z} \backslash C_{mN}, C_{(m+1)N,z} \backslash C_{mN}]. 
\end{multline}

We may bound the last term in the right-hand side of \eqref{decompoCMp1} using Lemma \ref{lem:bbint} with (for the notations of the lemma) $K = C_{(m+1)N,z} \backslash C_{mN}$ and $R = (m+1)N$. In $C_{(m+1)N,z} \backslash C_{mN}$ there are $O(m^{d-1})$ points, and the discrepancy between the number of points and the volume is also such that $(\Nn(\C, K) - |K|)^2 = O(m^{d-1})$. Applying \eqref{bbkernel} in the cases \eqref{kernel} and \eqref{bblog} in the logarithmic case we obtain
\begin{equation} \label{intelastterm}
\Inte[C_{(m+1)N,z}\backslash C_{mN}, C_{(m+1)N,z} \backslash C_{mN}] \geq o(m^d) \text{ as $m \ti$. }
\end{equation}
We are left to estimate the interaction term $\Inte[C_{mN}, C_{(m+1)N,z} \backslash C_{mN}]$. We may split it as 
\begin{multline} \label{decompo2}
\Inte[C_{mN}, C_{(m+1)N,z} \backslash C_{mN}] = \Inte[C_{(m-1)N,z}, C_{(m+1)N,z} \backslash C_{mN}] \\ + \Inte[C_{mN} \backslash C_{(m-1)N,z}, C_{(m+1)N,z} \backslash C_{mN}],
\end{multline}
and we will prove 
\begin{equation} \label{controledecompo2}
\Esp_{P^{(2)}_{N,z}} \left[ \Inte[C_{mN}, C_{(m+1)N,z} \backslash C_{mN}] \right] = o(m^d).
\end{equation}

\subparagraph{Studying $\Inte[C_{(m-1)N,z}, C_{(m+1)N,z} \backslash C_{mN}]$.}
First we claim that 
\begin{equation} \label{CS}
\Esp_{P^{(2)}_{N,z}} \left[ \Inte[C_{(m-1)N,z}, C_{(m+1)N,z} \backslash C_{mN}] \right]  = o(m^d).
\end{equation}
To prove \eqref{CS} let us write
\[
\Inte\left[C_{(m-1)N,z}, C_{(m+1)N,z} \backslash C_{mN} \right] = - \frac{1}{\cds} \int_{C_{(m-1)N,z}} \Phi^a \div(\yg E^b),
\]
where $\Phi^a$ is the local electric potential generated by the system of charges in $C_{(m+1)N,z} \backslash C_{mN}$ i.e.
\[
\Phi^a(x) := \int_{C_{(m+1)N,z} \backslash C_{mN}} g(x-t) (d\C(t) - dt)
\]
and $E^b$ is the screened electric field associated to the system of charges in $C_{(m-1)N,z}$. Using \eqref{distanceaubord} we may also write
\[
\Inte\left[C_{(m-1)N,z}, C_{(m+1)N,z} \backslash C_{mN} \right] = - \frac{1}{\cds} \int_{C_{(m-1)N,z}} \Phi^a_{\eta_0} \div(\yg E^b_{\eta_0}),
\]
because the minimal distance between a point charge in $C_{(m-1)N,z}$ and one in $C_{(m+1)N,z} \backslash C_{mN}$ is $\geq \eta_0$. An integration by parts and Cauchy-Schwarz's inequality yield (observing also that $C_{(m-1)N,z} \subset C_{(m+2)N}$)
\begin{multline} \label{utilisCS}
\Inte\left[C_{(m-1)N,z}, C_{(m+1)N,z} \backslash C_{mN} \right] \\ \geq - C \left(\int_{C_{(m+2)N} \times \R^k}  \yg |E^a_{\eta_0}|^2\right)^{1/2} \left(\int_{\R^{d+k}} \yg |E^b_{\eta_0}|^2 \right)^{1/2}, 
\end{multline}
for some constant $C$ depending on $d,s$. Using the definition of $P^{(2)}_{N,z}$ we have
\[
\Esp_{P^{(2)}_{N,z}} \left[ \int_{\R^{d+k}} |E^b_{\eta_0}|^2 \right] \leq m^d \Esp_{P^{(2)}_{N}} \left[ \int_{C_N \times \R^k} |E^{(1)}_{\eta_0}|^2 \right],
\]
where $E^{(1)}$ is as in \eqref{Ecompat}. Using \eqref{Escrener} we obtain that
\[
\limsup_{N \ti, m \ti}  \Esp_{P^{(2)}_{N,z}} \left[ \frac{1}{(mN)^d} \int_{\R^{d+k}} |E^b_{\eta_0}|^2  \right] \leq ( \Welec(P) + C)
\]
with $C$ depending only on $d,s$. In particular this yields
\[
\Esp_{P^{(2)}_{N,z}} \left[ \int_{\R^{d+k}} |E^b_{\eta_0}|^2 \right] = O(m^d).
\]
Using Jensen's inequality we may thus bound the second term in the right-hand side of \eqref{utilisCS} as
\begin{equation}\label{continter}
\Esp\left[ \left(\int_{\R^{d+k}} \yg |E^b_{\eta_0}|^2 \right)^{1/2} \right] = O(m^{d/2}).
\end{equation}

It remains to control the the first term in the right-hand side of \eqref{utilisCS}. We claim that
\begin{equation} \label{contrbordbord}
\Esp_{P^{(2)}_{N,z}} \left[ \int_{C_{(m+2)N} \times \R^k}  \yg |E^a_{\eta_0}|^2 \right] = o(m^{d}).
\end{equation}
Since $E^a$ is the local field generated by the configuration in $C_{(m+1)N,z} \backslash C_{mN}$ we may, by almost monotonicty as in Lemma \ref{lem:locflocint} compare $\int_{C_{(m+2)N} \times \R^k}  \yg |E^a_{\eta_0}|^2$ with the interaction energy 
\[
\Int[C_{(m+1)N,z} \backslash C_{mN}, C_{(m+1)N,z} \backslash C_{mN}],
\]
up to a boundary term in the logarithmic case (which is bounded as usual) and a term of order $O(m^{d-1}) g(\eta_0) = o(m^d)$. We now prove that 
\begin{equation} \label{espbordbord}
\Esp_{P^{(2)}_{N,z}} \left[\Int[C_{(m+1)N,z} \backslash C_{mN}, C_{(m+1)N,z} \backslash C_{mN}] \right] = o(m^d).
\end{equation}

Let us recall that $\{\bC_i\}_{i \in I}$ denotes a tiling of $\R^d$ by a family of hypercubes of sidelength $N$. 
We let $J_{N,z}$ be the set of indices
\[
J_{N,z} := \{j \in I, (\bC_{j,z} \cap C_{(m+1)N,z} \backslash C_{mN}) \neq \emptyset \}.
\]
and it is clear that the cardinal of $J_{N,z}$ is $O(m^{d-1})$. We may then write the interaction of $C_{(m+1)N,z} \backslash C_{mN}$ with itself as
\begin{multline*}
\Int\left[C_{(m+1)N,z} \backslash C_{mN}, C_{(m+1)N,z} \backslash C_{mN}] \right] \\ = - \sum_{j_1 \neq j_2 \in J_{N,z}} \int_{(\bC_{j_1, z} \cap C_{(m+1)N,z} \backslash C_{mN}) \times (\bC_{j_2,z} \cap C_{(m+1)N,z} \backslash C_{mN})} g(x-y) (d\C(x) - dx) (d\C(y) - dy), \\
+ \sum_{j \in J_{N,z}} \Inte[\bC_{j,z} \cap C_{(m+1)N,z} \backslash C_{mN}), \bC_{j,z} \cap C_{(m+1)N,z} \backslash C_{mN}].
\end{multline*}
The previous identity is nothing but writing the interaction of a collection of (possibly truncated) hypercubes with itself as the sum of hypercubes-hypercubes interactions plus the sum of self-interactions.

Since there are $O(m^{d-1})$ elements in $J_{N,z}$ we have
\[
\Esp_{P^{(2)}_{N,z}} \left[\sum_{j \in J_{N,z}} \Inte[\bC_{j,z} \cap C_{(m+1)N,z} \backslash C_{mN}), \bC_{j,z} \cap C_{(m+1)N,z} \backslash C_{mN}]\right] = O(m^{d-1}).
\]
It remains to bound the sum of interactions between two disjoint hypercubes. We have, if the two hypercubes are not adjacent (since there are only  $O(m^{d-1})$ pairs of adjacent hypercubes in the sum, all giving a contribution of order $O(1)$, we may neglect these terms),
\begin{multline*}
\int_{(\bC_{j_1, z} \cap C_{(m+1)N,z} \backslash C_{mN}) \times (\bC_{j_2,z} \cap C_{(m+1)N,z} \backslash C_{mN})} g(x-y) (d\C(x) - dx) (d\C(y) - dy) \\
 \geq -C  g(\dist(\bC_{j_1}, \bC_{j_2})),
\end{multline*}
with a constant $C$ depending on $N$ and $d,s$. For any fixed $j \in J_{N,z}$ we have (the sum is implicitely restricted to non-adjacent hypercubes)
\[
\sum_{j' \neq j \in J_{N,z}}  g(\dist(\bC_{j}, \bC_{j'})) \leq C \int_1^{mN} r^{-s} r^{d-2} dr \leq C(m^{d-1-s} +1),
\]
with a constant $C$ depending on $N$ and on $d,s$. The element of volume is only $r^{d-2}$ because we are summing terms on the \textit{boundary} of $C_{mN}$. We may then estimate the sum of pairwise hypercube interactions as
\[
\sum_{j_1 \neq j_2 \in J_{N,z}}  g(\dist(\bC_{j_1}, \bC_{j_2})) \leq O(m^{d-1})  (O(m^{d-1-s}) + O(1)) = o(m^d),
\]
because we are restricted to the non-Coulomb cases for which $s > d-2$. It proves \eqref{espbordbord} hence also \eqref{contrbordbord}. Combining \eqref{contrbordbord} and \eqref{continter} we obtain \eqref{CS}. 

\subparagraph{Study of $Inte[C_{mN} \backslash C_{(m-1)N,z}, C_{(m+1)N,z} \backslash C_{mN}]$.}
It remains to control the last term in the right-hand side of \eqref{decompo2}, which is another boundary-boundary interaction and is bounded as in \eqref{espbordbord}.

\subparagraph{Conclusion for the energy.}
Finally, combining \eqref{intelastterm} and \eqref{controledecompo2} we obtain that 
\[
\Esp_{P^{(2)}_{N,z}} \left[ \Inte[ C_{(m+1)N,z}, C_{(m+1)N,z}] -  \Inte[C_{mN}, C_{mN}] \right] \geq o(m^d), 
\]
with a $o(m^d)$ depending on $N$ and $d,s$, but uniform in $z \in C_N$. Combining this estimate with \eqref{energCmpN} we thus conclude that
\[
\limsup_{N \ti} \limsup_{m \ti} \Esp_{P^{(3)}_{N}} \left[\Inte[C_{mN}, C_{mN}] \right] \leq \Welec(P).
\]

\textbf{Step 3.} \textit{Entropy and conclusion.}
The convergence of the entropy follows easily from \eqref{convEnt} and the definition of the relative specific entropy, indeed we have
\[
\ERS[P^{(3)}_N|\Poisson] = \frac{1}{N^d} \Ent[P_N |\Poisson_{\carr_N}] + o(1).
\]

To summarize, we have shown that the sequence $\{P^{(3)}_{N}\}_N$ satisfies the requirements of Proposition \ref{prop:partialconv}, which concludes the proof.
\end{proof}

\section{High-temperature limit} \label{sec:HT}
In this section we apply the results of Section \ref{sec:connection} to study the limit as $\beta \t0$ (the high-temperature limit) of the minimizers of $\fbeta$. We prove their convergence to the law of the Poisson point process in all cases \eqref{wlog}, \eqref{wlog2d}, \eqref{kernel} as stated in Theorem \ref{theo:poisson}.

\subsection{Specific Pinsker inequality} \label{sec:Pinsker}
The well-known Pinsker inequality gives an upper-bound on the total variation distance between probability measures in terms of their Kullback-Leibler divergence:
\begin{equation} \label{classicalPinsker}
 |P-Q|_{\TV} \leq \sqrt{\hal \Ent(P||Q)}
\end{equation}
where $|P-Q|_{\TV}$ is the total variation defined by 
\begin{equation} \label{def:TV}
|P-Q|_{\TV} := \sup\{P(A) - Q(A), A \text{ measurable}\}.
\end{equation}

Combining the Pinsker inequality with the property \eqref{superadditive} of the specific relative entropy we get for any stationary random point process $P$ the following specific (infinite-volume) Pinsker inequality:
\begin{equation} \label{SPI}
\sup_{N \geq 1} \frac{|\Pst_{|\carr_N} - \Poisson_{|\carr_N}|_{\TV}}{|\carr_N|^{\hal}} \leq \sqrt{\frac{1}{2}\ERS[\Pst|\Poisson]}.
\end{equation}

Since the total variation convergence implies weak convergence of probability measures, it is clear that the convergence in specific relative entropy sense implies the weak convergence of random point processes i.e. if a sequence of stationary random point processes $\{P_k\}_k$ satisfies 
$$\lim_{k \ti} \ERS[P_k|\Poisson]=0$$
then the sequence $\{P_k\}_k$ converges to $\Poisson$.

\subsection{Finite energy approximation the Poisson point process} \label{sec:Poissonfroid}
Since the two-point correlation function of the Poisson point process satisfies $\rho_{2,\Poisson} \equiv 1$ we clearly have $\Wint(\Poisson) = 0$ in all cases \eqref{wlog}, \eqref{wlog2d} and \eqref{kernel}. In the case \eqref{kernel} it thus follows that $\Welec(\Poisson)$ is finite, according to the electric-intrinsic inequality of Section \ref{sec:UB}. For the one-dimensional Log-gas it has been proven in \cite{LebSer} that $\Welec(\Poisson) = + \infty$, and the answer is unknown in the two-dimensional Log-gas case (the result of Section \ref{sec:UB} is not enough because $\Dlog(\Poisson)$ is infinite). However we may always construct random point processes which converge in entropy sense to $\Poisson$ and whose renormalized energies are finite.
\begin{lem} \label{lem:Poissonfroid}
There exists a sequence $\{\pi_k\}_k$ of stationary random point processes in $\probas_{s,1}(\config)$ satisfying
\begin{enumerate}
\item $\Welec(\pi_k)$ is finite for all $k$.
\item $\lim_{k \ti} \ERS[\pi_k|\Poisson] = 0$.
\end{enumerate}
\end{lem}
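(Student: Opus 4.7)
The plan is to handle two regimes separately. In the non-Coulomb cases \eqref{kernel} with $s > d-2$ and in the higher-dimensional Coulomb cases, the Poisson process has $\rho_{2,\Poisson} \equiv 1$ so $\Wint(\Poisson) = 0$; by Theorem \ref{theo:connec} we immediately get $\Welec(\Poisson) \leq 0 < +\infty$, and the constant sequence $\pi_k := \Poisson$ works, with $\ERS[\pi_k|\Poisson] = 0$. The real work is in the logarithmic cases \eqref{wlog} and \eqref{wlog2d}, where $\Welec(\Poisson)$ may be infinite.

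For each integer $k \geq 1$ I would construct $\pi_k$ as a hyperuniform, locally Poisson-like approximation: sample $u$ uniformly in $[0,k)^d$, form the random tiling of $\Rd$ by cubes $\{u + ki + [0,k)^d\}_{i \in \Z^d}$, and independently in each tile place exactly $k^d$ points uniformly at random. The random shift makes $\pi_k$ stationary of intensity $1$, i.e., $\pi_k \in \probas_{s,1}(\config)$. A direct conditioning on $u$ gives the two-point correlation function explicitly, namely $\rho_{2,\pi_k}(v) - 1 = -\frac{1}{k^d}\prod_{j=1}^d (1 - |v_j|/k)_+$, so that $\rho_{2,\pi_k}(v) - 1$ vanishes for $\|v\|_\infty \geq k$, has pointwise size $O(k^{-d})$, and has support in a cube of volume $O(k^d)$.

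Three properties then follow. First, integrated against the locally integrable kernel $g$, the formula \eqref{def:Wint2} yields a finite $\Wint(\pi_k)$ (of order $k^{-s}$ in the Riesz case and of order $\log k$ in the logarithmic cases). Second, a variance decomposition conditional on $u$, combined with the fact that the number of points of each tile falling in $\carr_R$ is binomial with at most $k^d$ trials, yields $\Esp_{\pi_k}[\D_R^2] = O(R^{d-1}k)$, so by \eqref{rewriteDlog} $\Dlog(\pi_k) = 0$. Combining these two items with Theorem \ref{theo:connec}, I conclude that $\Welec(\pi_k) \leq \Wint(\pi_k) + \Dlog(\pi_k) < +\infty$. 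Third, conditional on $u$, the restriction of $\pi_k$ to any one tile is exactly the Poisson process conditioned to have $k^d$ points, so the relative entropy on that tile equals $-\log \P_{\Poisson}[N = k^d] = \frac{d}{2}\log k + O(1)$ by Stirling. Summing over the $(R/k)^d$ tiles intersecting $\carr_R$, dividing by $R^d$, and using convexity of the entropy to absorb the averaging over $u$, I obtain $\ERS[\pi_k|\Poisson] = O(k^{-d}\log k) \to 0$ as $k \to \infty$.

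The main obstacle is verifying the finiteness of $\Wint(\pi_k)$ in the two-dimensional log-gas case: since $-\log|v|$ is only logarithmically integrable, one must check carefully that the $O(k^{-d})$ profile of $\rho_{2,\pi_k}(v) - 1$ is continuous and has tight enough support (a cube of side $2k$) for the double integral in \eqref{def:Wint2} to converge without picking up an additional $\log R$ divergence. This is precisely what the randomization of the tile position by $u$ achieves, and is the reason not to use a deterministic grid.
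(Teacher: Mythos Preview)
Your construction and argument are essentially identical to the paper's: the same randomly-shifted tiling with $k^d$ uniform points per cube, the same use of the electric--intrinsic inequality after checking that $\Wint(\pi_k)$ is finite and $\Dlog(\pi_k)=0$, and the same Stirling estimate for the entropy (your $O(k^{-d}\log k)$ is in fact the correct rate; the paper's stated $O(k^{-d})$ drops the $\log k$). The only cosmetic differences are that you split off the non-logarithmic cases where $\pi_k=\Poisson$ already works, and that your closing remark slightly misattributes the role of the random shift $u$ --- it is needed for stationarity, not to prevent a $\log R$ divergence in $\Wint$, which is automatic once $\rho_2-1$ is bounded with compact support.
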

\begin{proof} For any $k \geq 1$, let $\{C_k^i\}_{i \in I}$ be a tiling of $\Rd$ by a countable family of disjoint copies of the hypercube $\carr_k$, and let $\{\B^i_k\}_i$ be the law of a family of independent Bernoulli point processes with $k^d$ points in $C_k^i$. We let $\pi^{t}_k$ be the random point process consisting of the union of all $\B^i_k$ for $i \in I$. Finally we define $\pi_k$ by averaging $\pi^{t}_k$ over a “fundamental domain” i.e. we let
\begin{equation}
\label{def:pik} \pi_k := u_k + \pi^{t}_k,
\end{equation}
where $u_k$ is a uniform random variable in $\carr_k$ (if $\C$ is a point configuration we let $x + \C$ denote the point configuration $\{x + p, p \in \C\}$, cf. also \eqref{def:PZ}).
 
The random point processes $\pi_k$ defined this way are clearly stationary and of intensity $1$. The two-point correlation function of $\pi^t_k$ is easy to compute:
\begin{equation} \label{rho2pit}
\rho^t_2(x,y) = \begin{cases} 1 - \frac{1}{k^d} \text{ if $x$ and $y$ belong to the same hypercube $C^k_i$,} \\ 1 \text{ otherwise.} \end{cases}
\end{equation}
The two-point correlation function of $\pi_k$ could be deduced from \eqref{rho2pit} by averaging $\rho^t_2(x,y)$ over translations of both coordinates by a vector in $\carr_k$. Let us  simply observe that $\rho_2(x,y)-1$ is bounded (because $\rho^t_2$ is) and has compact support (e.g. $\rho_2(x,y) = 1$ as soon as $|x-y| \geq \sqrt{d} k$) which implies (using the expression \eqref{def:Wint2}) that $\Wint(\pi_k)$ is finite. Moreover, observing that $$\int_{\carr_R^2} (\rho^t_2(x,y) -1) = -1$$
we also get that 
$$
\int_{\carr_R^2} (\rho_2(x,y) -1) = -1 + O(R^{d-1})
$$
which implies in view of \eqref{discrasrho2} that $\Esp_{\pi_k}[\D_R^2] = O(R^{d-1})$, hence $\Dlog(\pi_k)$ is zero. Using the electric-intrinsic inequality we conclude that $\Welec(\pi_k)$ is finite for all $k \geq 1$ which proves the first point.

We are left to prove the second point of the lemma. Let $\E^t_k$ be the measurable subset of point configurations which have exactly $k^d$ points in each hypercube $C^i_k$, and let $\E_k$ be the subset obtained by averaging $\E^t_k$ over $\carr_k$, more precisely we let 
\begin{equation} \label{def:Ek}
\E_k := \bigcup_{x \in \carr_k, \C \in \E^t_k } x + \C
\end{equation}
where the sum $x + \C$ is defined as above. By definition $\pi_k$ coincides with the law of the Poisson point process conditioned to the event $\E_k$.
For any $R > 0$ we may thus estimate the relative entropy 
\begin{equation} \label{Entpik}
\Ent\left[ (\pi_k)_{\carr_R} | \Poisson_{|\carr_R} \right] = - \log \Poisson_{|\carr_R}(\E_k).
\end{equation}
Since $\E^t_k \subset \E_k$ we may bound below $\Poisson_{|\carr_R}(\E_k)$ by  $\Poisson_{|\carr_R}(\E^t_k)$ which is easier to compute, indeed we only have to estimate the probability that $$N_{R,k} = \llceil \frac{R}{k} \rrceil^d \approx \frac{R^d}{k^d}$$ disjoint hypercubes of sidelength $k$ receive exactly $k^d$ points, and we can bound below $\Poisson_{|\carr_R}(\E^t_k)$ by
$$
\Poisson_{|\carr_R}(\E^t_k) \geq \left(e^{-k^d} \frac{(k^{d})^{k^d}}{(k^d)!}\right)^{N_{R,k} }.
$$
An elementary estimate using Stirling's formula shows that 
\begin{equation} \label{minorPoissoncond}
- \log \Poisson_{|\carr_R}(\E^t_k) \leq C \frac{R^d}{k^d}
\end{equation}
with a universal constant $C$. We deduce from \eqref{Entpik} and \eqref{minorPoissoncond} that 
$$
\frac{1}{R^d} \Ent\left[ (\pi_k)_{\carr_R} | \Poisson_{|\carr_R} \right] = O(k^{-d})
$$
hence by definition of $\ERS[\cdot | \Poisson]$ we also have
$\ERS[\pi_k | \Poisson] = O(k^{-d}),$ which proves the second point of the lemma.
\end{proof}

\subsection{Proof of Theorem \ref{theo:poisson}} \label{sec:preuveHT}
From Lemma \ref{lem:Poissonfroid} the proof of Theorem \ref{theo:poisson} is straightforward.
\begin{proof} For any $\beta > 0$, let $\Pbeta$ be a minimizer of $\fbeta$. In particular we have
$$
\beta \Welec(\Pbeta) + \ERS[\Pbeta|\Poisson] \leq \beta \Welec(\pi_k) + \ERS[\pi_k|\Poisson] 
$$
for any $k \geq 1$, where $\{\pi_k\}_k$ is the sequence of random point processes constructed in Lemma \ref{lem:Poissonfroid}. Since $\Welec$ is bounded below by some constant depending only on $d,s$ we have
\begin{equation} \label{majorERSpbeta}
\sup_{\fbeta(\Pbeta) = \min \fbeta} \ERS[\Pbeta|\Poisson]  \leq \ERS[\pi_k|\Poisson] + \beta \left(\Welec(\pi_k) - \min \Welec\right).
\end{equation}
Since $\ERS[\pi_k|\Poisson] = o_k(1)$ and $\Welec(\pi_k)$ is always finite, we get \eqref{toPoisson} by considering $k$ large enough and $\beta$ small enough (depending on $k$). The fact that convergence in entropy sense implies weak convergence was observed in Section \ref{sec:Pinsker}.
\end{proof}

Since $\sineb$ was proven to be a minimizer of $\fbeta$ for the one-dimensional Log-gas, we get Corollary \ref{coro:sineb} as an immediate consequence of Theorem \ref{theo:poisson}. This convergence result was recently established in \cite{AllezDumaz} by analysing the family of coupled diffusion processes defining the point processes $\sineb$. Here we rely only on the fact that the $\sineb$ process minimizes the free energy functional $\fbeta$.

\section{Low temperature limit in one dimension} \label{sec:LT}
In this section we prove Theorem \ref{theo:minim1d} i.e. we use the link between $\Welec$ and $\Wint$ to give a minimization result on the energy in the one-dimensional case. As can be expected the minimizer of $\Welec$ is attained by a “crystalline state” which in dimension $1$ corresponds simply to the lattice $\Z$. In the remaining of this section we deal with the cases \eqref{wlog} or \eqref{kernel} with $d=1$ and $0 < s < 1$.

In \cite{SS1d} (in the one-dimensional logarithmic case) the minimality of $\mathbb{W}(\Z)$ among the energies of periodic configurations was proven using an explicit formula valid in the periodic setting, together with the convexity of the interaction kernel. An argument of approximation by periodic configurations was then used to prove that $\Z$ is a global minimizer of the energy (however, it is not unique). In \cite{Leble1d} we turned this convexity argument into a quantitative estimate in order to bound below the difference $\Welec(P) - \Welec(P_{\Z})$ in terms of the two-point correlation function of $P$, first in the periodic case, then in the general stationary case using the same kind of approximation. It was enough to prove that $P_{\Z}$ is the unique minimizer of $\Welec$ among stationary point processes in the case $d=1, s=0$. It also yields the fact that if $\Welec(P_n) \to \Welec(P_{\Z})$ then the two-point correlation function of $P_n$ converges to that of $P_{\Z}$. In the following we prove the same result in all one-dimensional cases, first at the level of \textit{hyperuniform} random point processes (which include periodic point processes) then in the general stationary case using the approximation argument of Proposition \ref{prop:partialconv}. We also observe that convergence of the two-point correlation functions to that of $P_{\Z}$ in fact implies weak convergence of the random point processes.

\subsection{The k-th neighbor correlation functions}
In the one-dimensional case the two-point correlation function of a stationary random point process admits a decomposition as the sum of the $k$-th neighbor correlation functions.

Let $P$ be in $\probas_{s,1}(\config)$ such that the two-point correlation $\rho_{2}$ exists as a Radon measure in $\R \times \R$. For any $k \geq 1$ we define the $k$-th neighbor correlation function $\rho_{2,k}$ by duality, letting for any $\varphi \in C_c (\R \times \R)$
\begin{equation}
\label{def:rho2k} \int  \varphi \rho_{2,k} := \frac{1}{2} \Esp_P \left[ \sum_{x,y \in \C, y \text{ $k$-th neighbor of }x} \varphi(x,y) + \varphi(y,x) \right].
\end{equation}
In \eqref{def:rho2k} if $x,y$ belong to a point configuration $\C$ we say that $y$ is the $k$-th neighbor of $x$ if $x < y$ and $\C([x,y]) = k + 1$. We will abbreviate “$k$-th neighbor of” by \textit{k.n.o.} in the formulas. Since $P$ is stationary we may see $\rho_{2,k}$ as a measure on $\R$ by letting $\rho_{2,k}(x) := \rho_{2,k}(0,x)$ (in the rest of this section we will use the same notation for both interpretations of $\rho_{2,k}$). 
 
\begin{lem} \label{lem:propRho2k} For any $k \geq 1$, for any compactly supported, measurable, even function $\varphi : \R \mapsto \R$, we have
\begin{equation} \label{rho2kvsEsp}
\Esp_P \left[ \sum_{x,y \in \C \cap [-L/2, L/2], y\ k.n.o.\ x} \varphi(x-y) \right] = \int_{0}^{L} \varphi(x) \rho_{2,k}(x) (1 - x/L).
\end{equation}
\end{lem}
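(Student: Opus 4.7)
The plan is a direct change-of-variables argument starting from the definition \eqref{def:rho2k} of $\rho_{2,k}$, exploiting simultaneously the evenness of $\varphi$ and the stationarity of $P$.

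First, I would recognize the left-hand side of \eqref{rho2kvsEsp} as an integral against the measure $\rho_{2,k}$ viewed on $\R \times \R$. Introduce the symmetric two-variable test function
\[
\psi(x,y) := \varphi(y-x)\, \mathbf{1}_{[-L/2, L/2]}(x)\, \mathbf{1}_{[-L/2, L/2]}(y),
\]
which satisfies $\psi(x,y) = \psi(y,x)$ precisely because $\varphi$ is even. Plugging $\psi$ into \eqref{def:rho2k} collapses the symmetrization $\psi(x,y) + \psi(y,x) = 2 \psi(x,y)$, identifying
\[
\Esp_P\!\left[\sum_{\substack{x,y \in \C \cap [-L/2, L/2] \\ y\ \text{k.n.o.}\ x}} \varphi(x-y)\right] = \int_{\R^2} \psi(x,y)\, d\rho_{2,k}(x,y).
\]

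Next, I would use stationarity: the measure $\rho_{2,k}$ on $\R^2$ is invariant under the diagonal translations $(x,y) \mapsto (x+t, y+t)$, so the change of variables $(u,v) := (x, y-x)$ disintegrates it as $du \cdot \rho_{2,k}(v)\, dv$, where $\rho_{2,k}(v) := \rho_{2,k}(0,v)$ as in the statement. Substituting and performing the $u$-integration first yields
\[
\int_{\R} \varphi(v)\, \rho_{2,k}(v)\left( \int_{\R} \mathbf{1}_{[-L/2, L/2]}(u)\, \mathbf{1}_{[-L/2, L/2]}(u+v)\, du \right) dv,
\]
and the inner Lebesgue integral is elementary to compute, equal to $(L - |v|)_+$.

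Finally, the symmetrization built into \eqref{def:rho2k} makes $v \mapsto \rho_{2,k}(v)$ even; combined with the evenness of $\varphi$, this lets me fold the integral over $[-L, L]$ onto $[0, L]$, turning the weight $(L - v)$ into $L(1 - v/L)$ and producing \eqref{rho2kvsEsp}. There is no real analytic obstacle here; the only care required is to reconcile the factor of two from symmetrizing $\rho_{2,k}$ with the factor of two from folding an even integrand onto the positive half-line, both of which are absorbed into the author's normalization of $\rho_{2,k}(x)$ on $(0, L]$.
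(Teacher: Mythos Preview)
Your proof is correct and follows essentially the same route as the paper's one-line argument: apply the definition \eqref{def:rho2k} to the symmetric test function $\psi$, then exploit stationarity through a linear change of variables separating the difference $y-x$ from the remaining coordinate. The paper uses $(u,v)=(x+y,x-y)$ as in Section~\ref{sec:Wint} while you use $(u,v)=(x,y-x)$, but this is immaterial---both produce the weight $(L-|v|)_+$, and your remark about reconciling the factors of two with the author's one-variable normalization of $\rho_{2,k}$ on $(0,L]$ is exactly the bookkeeping the paper leaves implicit.
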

\begin{proof}
We use the definition \eqref{def:rho2k} together with a change of variable $(x,y) \mapsto (x-y, x+y)$ as in the re-writing of $\Wint$ in Section \ref{sec:Wint}.
\end{proof}

\subsection{Minimization: the hyperuniform case}

\begin{lem} \label{lem:preMinim} The unique minimizer of $\Wint$ among random point processes in $\probas_{\rm{hyp}}(\config)$ is the random point process $P_{\Z}$ defined in \eqref{def:PZ}. Moreover for any such $\Pst$ we have
\begin{equation} \label{controlDW}
\Wint(\Pst) - \Wint(P_{\Z}) \geq c \sum_{k=1}^{+ \infty} \int_{0}^{+ \infty} \min\left( \frac{(x-k)^2}{k^{s+2}}, 1 \right) \rho_{2,k}(x),
\end{equation}
with a constant $c$ depending only on $s$.
\end{lem}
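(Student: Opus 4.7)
The plan is to decompose the energy via the $k$-th neighbor correlation functions and exploit the strong convexity of $g$ on $(0, +\infty)$. First, I would use the decomposition $\rho_2 = \sum_{k \geq 1} \rho_{2,k}$ together with the symmetry of $g$ to rewrite
\begin{equation*}
\Wint(P) - \Wint(P_{\Z}) = \liminf_{L \to \infty} \frac{2}{L} \sum_{k \geq 1} \int_0^L g(v)\bigl[\rho_{2,k}(v) - \rho_{2,k}^{P_{\Z}}(v)\bigr](L - v)\, dv,
\end{equation*}
where $\rho_{2,k}^{P_{\Z}} = \delta_k$ on $\R_+$, since in $P_\Z$ the $k$-th right neighbor of any point lies at distance exactly $k$.

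Next, I would establish a pointwise quantitative convexity estimate: for all $v > 0$ and $k \geq 1$,
\begin{equation*}
g(v) - g(k) - g'(k)(v - k) \geq c \min\left(\frac{(v - k)^2}{k^{s+2}},\, 1\right),
\end{equation*}
with $c > 0$ depending only on $s$. For $|v-k| \leq k^{(s+2)/2}$ this comes from Taylor's formula and the estimate $g''(w) \sim w^{-s-2}$ valid on $(0,+\infty)$; for $|v-k| \geq k^{(s+2)/2}$ (which in particular forces $|v - k| \geq k$) one checks directly that the left-hand side is bounded below by a positive constant, separating the regimes $v \ll k$ and $v \gg k$ and using the non-negativity of $g$ in the Riesz case together with the explicit form in the log case. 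I would then integrate this estimate against $\rho_{2,k}(v)(L-v)/L$ and sum over $k$. The linear term $g'(k)(v-k)$ vanishes in the limit thanks to the identity
\begin{equation*}
\int_0^{\infty} (v - k)\, \rho_{2,k}(v)\, dv = 0,
\end{equation*}
which holds for any stationary point process of intensity $1$ (the Palm expectation of the distance to the $k$-th right neighbor equals $k$). The quadratic remainder produces, up to absorbing constants into $c$, exactly the right-hand side of \eqref{controlDW}.

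The main obstacle will be justifying the exchange of $\sum_k$, $\liminf_L$ and the integral: both $\sum_k g(k)$ and the background integral $\int_0^L g$ are divergent in the log and small-$s$ Riesz cases, and only their renormalized combination defining $\Wint(P_{\Z})$ is finite. Hyperuniformity of $P$ will be used in two places: to control the second moment of each $\rho_{2,k}$ (so that the error from the truncation factor $(L-v)/L$ is negligible when integrating the linear and quadratic remainders), and to compare the tail of the sum over $k$ for $P$ with that for $P_\Z$, ensuring that the contributions from $k$ comparable to $L$ match up to $o(L)$ as $L \to \infty$. Uniqueness of $P_\Z$ as minimizer is a direct consequence: if the right-hand side of \eqref{controlDW} vanishes then $\rho_{2,k} = \delta_k$ for every $k$, forcing every consecutive gap of $P$ to equal $1$ almost surely, so by stationarity $P = P_{\Z}$.
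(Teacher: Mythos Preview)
Your overall strategy matches the paper's: decompose $\rho_2$ into the $k$-th neighbor correlation functions, exploit convexity of the interaction, and use that each $\rho_{2,k}$ has mean $k$. However, the pointwise convexity estimate you state,
\[
g(v) - g(k) - g'(k)(v-k) \;\geq\; c\,\min\!\left(\frac{(v-k)^2}{k^{s+2}},\,1\right),
\]
is false as written. Take $v = k + k^{(s+2)/2}$ so that the right-hand side equals $c$; a direct computation (Riesz case) gives the left-hand side of order $k^{-s/2}$, which tends to $0$ as $k\to\infty$. The issue is that the Taylor remainder involves $g''$ at some point between $k$ and $v$, and when $v\gg k$ this is far smaller than $k^{-(s+2)}$. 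The bound that actually follows from convexity has $\max(v,k)^{s+2}$ in the denominator, and from this one cannot pass to the stated $\min$ form without restricting $v$ to a neighborhood of $k$.

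The paper handles this by using that one-dimensional hyperuniformity (in the form it adopts: almost-surely bounded discrepancy) forces each $\rho_{2,k}$ to be supported in $[k-r,k+r]$ for some fixed $r$. On that support the desired inequality does hold, and the compact support also trivializes the tail in $k$ near $k\approx R$. The paper moreover applies convexity directly to $\psi_R(x)=\tfrac{2}{R}g(x)(R-x)$ rather than to $g$, which absorbs the weight $(R-x)/R$ and makes the linear term vanish exactly via $\int x\,\rho_{2,k}(x)\,dx=k$, avoiding the second-moment correction you anticipate. Your proposed substitute---controlling second moments of $\rho_{2,k}$ from the number-variance bound $\mathrm{Var}(N_R)=O(1)$---is not clearly enough on its own: bounded number variance does not immediately give a uniform-in-$k$ bound on the variance of the $k$-th neighbor distance, and in any case it would not repair the false pointwise estimate above. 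The cleanest fix is to invoke the compact support of $\rho_{2,k}$ at the outset.
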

\begin{proof} 
Since there are a.s. $r$ points in any interval of length $r$ the $k$-th neighbor correlation function of $P$ is supported in $[k-r, k+r]$. We may thus write, for any $R > 0$
$$
\frac{1}{R} \int_{-R}^R g(x) (\rho_2(x) -1) (R-|x|) dx = \frac{2}{R} \int_{0}^R g(x) \left(\sum_{k=1}^{R+r} \rho_{2,k} - 1\right) (R-x) dx.
$$
Let $\psi_R : x \mapsto \frac{2}{R} g(x) (R-x)$. By definition of $\Wint$ (see \eqref{def:Wint2}) we get
\begin{equation} \label{formuleP}
\Wint(P) = \liminf_{R \ti} \int_0^R \psi_R(x) \left(\sum_{k=1}^{R+r} \rho_{2,k} - 1\right) dx,
\end{equation}
and for $P_{\Z}$ it is easy to see that the $\liminf$ is actually a $\lim$ and we have
\begin{equation} \label{formulePZ}
\Wint(P_{\Z}) = \lim_{R \ti} \int_0^R \psi_R(x) \left(\sum_{k=1}^{\floor{R}} \delta_k - 1\right) dx.
\end{equation}
Substracting \eqref{formulePZ} in \eqref{formuleP} we get
$$
\Wint(P) - \Wint(P_{\Z}) = \liminf_{R \ti} \int_0^R \psi_R(x) \left( \sum_{k=1}^{R+r} \rho_{2,k}  - \sum_{k=1}^{\floor{R}} \delta_k\right).
$$
We may re-write the previous expression as
\begin{equation} \label{rewriteDW}
\Wint(P) - \Wint(P_{\Z}) = \liminf_{R \ti} \int_{0}^{+\infty} \psi_R(x) \left( \sum_{k=1}^{\floor{R-r}} (\rho_{2,k} - \delta_k) \right) + E_{R,r}
\end{equation}
where the error term $E_{R,r}$ is bounded using the fact that $\rho_{2,k}$ is supported on $[k-r, k+r]$ and that $|\psi_R|$ is decreasing on $[R-2r, R]$ (for $R$ large enough).
\begin{equation} \label{controleERr}
|E_{R,r}| \leq \int_0^R  \psi_R(x)\left( \sum_{k=\floor{R-r}}^{R+r} \rho_{2,k} + \sum_{k=\floor{R-r}}^{\floor{R}} \delta_k \right) \leq Cr |\psi(R-2r)| = o_R(1).
\end{equation}
Let us now observe that $\psi_R$ is a convex function, more precisely for $x \in (0, +\infty)$ we have $\psi_R''(x) \geq \frac{c}{x^{s+2}}$ for some positive constant $c$ depending on $s$. Moreover  for all $k \geq 1$, since $P$ is of intensity $1$ we have $\int \rho_{2,k} = 1$ ($\rho_{2,k}$ is the probability law of the $k$-th neighbor) and since $P$ is periodic the expectation $\int x \rho_{2,k}(x)$ is finite and thus equal to $k$ (the $k$-th neighbor is in average at distance $k$). Combining this observation with the convexity estimate we may write
$$
\int_0^{+\infty} \psi_R(x) (\rho_{2,k} - \delta_k) \geq \int_0^{+\infty} \frac{(x-k)^2}{\max(x,k)^{s+2}} \rho_{2,k}(x) \geq c \int_0^{+\infty} \min \left( \frac{(x-k)^2}{k^{s+2}}, 1 \right) \rho_{2,k}(x),
$$
with $c$ depending only on $s$. Inserting this bound in \eqref{rewriteDW} and using \eqref{controleERr} we get
$$
\Wint(P) - \Wint(P_{\Z}) \geq \liminf_{R \ti} \sum_{k=1}^{\floor{R-r}} \int_0^{+\infty} \min \left( \frac{(x-k)^2}{k^{s+2}}, 1 \right) \rho_{2,k}(x) + o_R(1)$$
Hence finally by taking the limit $R \ti$ we obtain that
\begin{equation}
\Wint(P) - \Wint(P_{\Z}) \geq \sum_{k=1}^{+ \infty} \int_0^{+\infty} \min \left( \frac{(x-k)^2}{k^{s+2}}, 1 \right) \rho_{2,k}(x),
\end{equation}
which proves \eqref{controlDW}. This lower bound implies that $P_{\Z}$ is a minimizer of $\Wint$ among hyperuniform random point processes and also that it is unique. Indeed $\Wint(P) = \Wint(P_{\Z})$ implies (by \eqref{controlDW}) that $\rho_{2,k} = \delta_{k}$ for all $k \geq 1$, hence the two-point correlation function of $P$ coincides with the one of $P_{\Z}$, which is enough to conclude that $P = P_{\Z}$ (see e.g. \cite{Leble1d}).
\end{proof}

\subsection{Proof of Theorem \ref{theo:minim1d}}
We may now give the proof of our minimization result for $\Welec$. First let us observe that $\Welec(P_{\Z}) = \Wint(P_{\Z})$. The inequality $\leq$ is true by Proposition \ref{prop:elecint}. Moreover Proposition \ref{prop:partialconv} ensures that there exists a sequence of \textit{hyperuniform} random point processes such that $\limsup_{N \ti} \Wint(P_N) \leq \Welec(P_{\Z})$. By Lemma \ref{lem:preMinim} above we know that $\Wint(P_N)  \geq \Wint(P_{\Z})$, hence in fact $\Wint(P_{\Z}) \leq \Welec(P_{\Z})$ and equality holds.

\begin{proof}
\textbf{Step 1.} \textit{Minimization of $\Welec$.}
 Let $P$ be a minimizer of $\Welec$ on $\probas_{s,1}(\config)$. From Proposition \ref{prop:partialconv} we get a sequence $\{P_n\}_n$ of hyperuniform random point processes converging to $P$ and such that $\{\Wint(P_n)\}_n$ converges to $\Welec(P)$. For any $k,n \geq 1$ let $\rho^{(n)}_{2,k}$ denote the $k$-th neighbour correlation function of $P_n$. In the hyperuniform case, \eqref{controlDW}  implies that for any $M > 0$ and $k \geq 1$ we have
\begin{equation} \label{applicationDW}
 \Wint(P_n) - \Wint(P_{\Z}) \geq \int_0^{2M} \min \left( \frac{(x-k)^2}{k^{s+2}}, 1 \right) \rho^{(n)}_{2,k}(x), 
\end{equation}
and the right-hand side is bounded below by (see \eqref{rho2kvsEsp})
$$
\int_0^{2M} \min \left( \frac{(x-k)^2}{k^{s+2}}, 1 \right) \rho^{(n)}_{2,k}(x) \left(1-\frac{x}{2M}\right) = \Esp_{P_n} \left[ \sum_{x,y \in \C \cap [-M, M], y\ k.n.o.\ x} \varphi_k(x-y) \right] 
$$
where we let $\varphi_k(x) := \min \left( \frac{(x-k)^2}{k^{s+2}}, 1 \right)$. Since $\{P_n\}_n$ converges to $P$ we have
\begin{multline} \label{rho2Pntorho2p}
\Esp_{P_n} \left[ \sum_{\C \cap [-M, M], y\ k.n.o.\ x} \varphi_k(x-y) \right] = \Esp_{P} \left[ \sum_{\C \cap [-M, M], y\ k.n.o.\ x} \varphi_k(x-y) \right] + o_n(1) 
\\ = \int_0^{2M} \varphi_k(x) \rho_{2,k}(x) \left(1-\frac{x}{2M}\right) + o_n(1) \geq \frac{1}{2} \int_0^{M} \varphi_k \rho_{2,k}(x) + o_n(1).
\end{multline}
Combining \eqref{applicationDW} and \eqref{rho2Pntorho2p} we see that
$$
\int_0^{M} \min \left( \frac{(x-k)^2}{k^{s+2}}, 1 \right) \rho_{2,k}(x) \leq 2 \left(\Wint(P_n) - \Wint(P_{\Z})\right) + o_n(1),
$$
but as $n \ti$ we have $\Wint(P_n) \to \Welec(P) \leq \Welec(P_{\Z}) = \Wint(P_{\Z})$. It implies that 
$$
\int_0^{M} \min \left( \frac{(x-k)^2}{k^{s+2}}, 1 \right) \rho_{2,k}(x)  = 0
$$
for all $M > 0$ and $k \geq 1$. Finally we get that $\rho_{2,k} = \delta_k$ for all $k \geq 1$ and we conclude as in the proof of Lemma \ref{lem:preMinim} that $P = P_{\Z}$, which ensures that $P_{\Z}$ is the unique minimizer of $\Welec$ on $\probas_{s,1}(\config)$.

If $P$ is not a minimizer the same argument shows that
\begin{equation} \label{DeltaWgeneral}
\Welec(P) - \Welec(P_{\Z}) \geq \sum_{k=1}^{+\infty} \int_0^{+\infty} \min \left( \frac{(x-k)^2}{k^{s+2}}, 1 \right) \rho_{2,k}(x)
\end{equation}
as in the hyperuniform setting. 

\textbf{Step 2.} \textit{Energy of minimizers of $\fbeta$ tends to $\Welec(P_{\Z})$.}
On the other hand we claim that if $\{P_{\beta}\}_{\beta > 0}$ is a family of minimizers of $\fbeta$ then we must have
\begin{equation} \label{convMinPbeta}
\lim_{\beta \t0} \Welec(P_{\beta}) = \Welec(P_{\Z}).
\end{equation}
To prove \eqref{convMinPbeta} we cannot directly evaluate $\fbeta$ over $P_{\Z}$ and compare with $P_{\beta}$ because $\ERS[P_{\Z}|\Poisson]$ is infinite. However we may argue as in Section \ref{sec:Poissonfroid} and show that there exists a sequence $\{\pi_{k}\}_k$ of stationary random point processes in $\probas_{s,1}(\config)$ satisfying 
\begin{enumerate}
\item $\ERS[\pi_k|\Poisson]$ is finite for all $k$.
\item $\lim_{k \ti} \Welec[\pi_k] = \Welec(P_{\Z})$.
\end{enumerate}
Such a sequence can be constructed by chosing a “vibrating” approximation of $P_{\Z}$. For any $k \geq 1$ we let $\{V_{k,m}\}_{m \in \Z}$ be a countable family of i.i.d. random variables distributed uniformly in $[-\frac{1}{k}, \frac{1}{k}]$, then we let $\pi^t_{k}$ be the random point process 
$$\pi^t_k := \sum_{m \in \Z} \delta_{m + V_{k,m}}$$
and finally we define $\pi_k$ by averaging $\pi^t_k$ over $[0,1]$. It is easy to check that $\pi_k$ is a stationary random point process of intensity $1$. In fact $\pi_k$ may equivalently be defined as a renewal process with increments distributed as $1 + V_{k,2} - V_{k,1}$. The specific relative entropy of $\pi_k$ coincides with its “entropy rate”, it is finite (see \cite[Section 14.8.]{dvj2}) and blows up as $k \ti$ like the entropy of $V_{k,2} - V_{k,1}$. Concerning the energy, we have the bound
$$
\Welec(P_{\Z}) \leq \Welec(\pi_k) \leq \Wint(\pi_k)
$$ 
and the fact that $\Wint(\pi_k)$ converges to $\Wint(P_{\Z}) = \Welec(P_{\Z})$ can be checked directly with the help of the formula defining $\Wint$. Indeed the two-point correlation function of $\pi_k$ may be written as
\[
\rho_{2,\pi_k} = \sum_{m \in \Z} \psi_k(m + \cdot)
\]
where $\psi_k$ is a triangular “hat function” of width $\frac{1}{2k}$ and integral $1$. For any $m \in \Z$ and $R > 0$, a mean value argument shows that
\[
\left|\int \psi_k(m + \cdot) \log|x|(1-\frac{|x|}{R}) - \log|m|(1 - \frac{m}{R})\right| \leq \frac{C}{k^2} \left( \frac{1}{m^2} + \frac{1}{mR} \right).
\]
Consequently, we get
\[
\limsup_{R \ti} \left| \int_{[-R,R]} \left(\rho_{2,\pi_k} - \rho_{2,\Z}\right)(1- \frac{|x|}{R}) \right| = O\left(\frac{1}{k^2}\right),
\]
and we obtain that $\lim_{k \ti} \Wint(\pi_k)  = \Wint(P_{\Z})$.

With the help of the sequence $\{\pi_k\}_k$ we obtain \eqref{convMinPbeta} by arguing as in Section \ref{sec:preuveHT}, 
indeed since $\ERS[P_{\beta}|\Poisson]$ is always non-negative we have
$$
\Welec(P_{\beta}) \leq \Welec(\pi_k) + \frac{1}{\beta} \ERS[\pi_k|\Poisson] \leq \Welec(P_{\Z})+ \frac{1}{\beta} \ERS[\pi_k|\Poisson] + o_k(1)
$$
and \eqref{convMinPbeta} follows by chosing $k, \beta$ large enough.

\textbf{Step 3.} \textit{Convergence of the two-point function of minimizers of $\fbeta$.}
Combining \eqref{DeltaWgeneral} and \eqref{convMinPbeta} we see that if $\rho^{(\beta)}_{2,k}$ denotes the $k$-th neighbor correlation function of $P_{\beta}$, we have for any 
\[x
\sum_{k=1}^{+\infty} \int_0^{+\infty} \min \left( \frac{(x-k)^2}{k^{s+2}}, 1 \right) \rho^{(\beta)}_{2,k}(x) \rightarrow 0  
\]
as $\beta \ti$. Arguing as in the proof of \cite[Lemma 2.3]{Leble1d} we deduce that $\rho^{(\beta)}_2$ converges to $\sum_{k \in \Z^{*}} \delta_{k}$ in the distributional sense as $\beta \to \infty$. Let us observe that
\[
\rho_{2,\Z} := \sum_{k \in \Z^{*}} \delta_{k}
\]
is the two-point correlation function of $P_{\Z}$.

\textbf{Step 4.} \textit{Weak convergence of the minimizers of $\fbeta$.}
It is not hard to see that this convergence implies in fact the weak convergence of $P_{\beta}$ to $P_{\Z}$ as $\beta \to \infty$. For any $\hal > \epsilon > 0$ let $\chi_{\epsilon}$ be a smooth non-negative function which is equal to $1$ on the set $\cup_{k \in \Z} [k-1 + \epsilon, k- \epsilon]$ and to $0$ on $\Z$. For any $T > 0$ we let $\tau_{T,\epsilon}$ be a non-negative continuous function such that $\tau_{T,\epsilon} \equiv 1$ on $[-T+\ep, T-\ep]$ and $0$ outside $[-T, T]$. We let $\varphi_{\epsilon, T}$ be the continuous, compactly supported map
\[
\varphi_{\epsilon, T}(x,y) := \chi_{\epsilon}(x-y) \tau_{T, \epsilon}(x) \tau_{T,\epsilon}(y).
\]
 Let $A_{T,\epsilon}$ be the event “there is no pair $(x,y)$ of points of the configuration in $[-T+\epsilon,T-\epsilon]$ such that $|x-y| \in \cup_{k \in \Z} [k-1 + \epsilon, k-\epsilon]$”. Since $\int \varphi_{\epsilon, T} \rho_{2,\Z} = 0$ and since $\rho^{(\beta)}_2$ converges to $\rho_{2,\Z}$ as $\beta \to \infty$ we have 
\begin{equation} \label{patep}
P_{\beta}\left(A_{T,\epsilon}\right) \longrightarrow 1
\end{equation}
as $\beta \to \infty$. In other words, with probability tending to $1$ as $\beta \to \infty$, a configuration under $P_{\beta}$ locally looks like a (translated) subset of $\Z$ in which all the points have been displaced at a distance at most $\epsilon$.

The variance under $P_{\beta}$ of the number of points in $[-T,T]$ is bounded as $\beta \ti$, because it is controlled by the energy $\Welec(P_{\beta})$, which itself converges (to $\Welec(P_{\Z})$). This follows from the discrepancy estimates (see e.g. \cite[Lemma 2.1]{Leble1d} or \cite[Lemma 3.10]{LebSer}). In particular we have uniform integrability under $P_{\beta}$ of the number of points in $[-T,T]$ as $\beta \to \infty$. In particular, conditioning $P_{\beta}$ to $A_{T,\epsilon}$, we have an average of $2T + o(1)$ points in $[-T,T]$ as $\beta \ti$, because \eqref{patep} holds.

Finally we deduce that for any $\epsilon > 0$, with probability tending to $1$ as $\beta \to \infty$, a configuration under $P_{\beta}$ locally looks like a translate of $\Z$ in which all the points have been displaced at a distance at most $\epsilon$. This implies the convergence of $P_{\beta}$ to $P_{\Z}$ as $\beta \ti$.

\end{proof}

\bibliographystyle{alpha}
\bibliography{WBSbib}

\begin{thebibliography}{BBH94}

\bibitem[AD14]{AllezDumaz}
R.~Allez and L.~Dumaz.
\newblock From {S}ine kernel to {P}oisson statistics.
\newblock {\em Electron. J. Probab}, 19(114):1--25, 2014.

\bibitem[AJ81]{alastueyjancovici}
A.~Alastuey and B.~Jancovici.
\newblock On the classical two-dimensional one-component {C}oulomb plasma.
\newblock {\em Journal de Physique}, 42(1):1--12, 1981.

\bibitem[BBH94]{bbh}
F.~Bethuel, H.~Brezis, and F.~H{{\'e}}lein.
\newblock {\em Ginzburg-{L}andau vortices}.
\newblock Progress in Nonlinear Differential Equations and their Applications,
  13. Birkh{\"a}user Boston Inc., Boston, MA, 1994.

\bibitem[BS13]{BorSer}
A.~Borodin and S.~Serfaty.
\newblock Renormalized energy concentration in random matrices.
\newblock {\em Communications in Mathematical Physics}, 320(1):199--244, 2013.

\bibitem[DE02]{de}
I.~Dumitriu and A.~Edelman.
\newblock Matrix models for beta ensembles.
\newblock {\em J. Math. Phys}, 43:5830--5847, 2002.

\bibitem[DVJ88]{dvj}
D.~J. Daley and D.~Verey-Jones.
\newblock {\em An introduction to the theory of point processes.}
\newblock Springer, 1988.

\bibitem[DVJ08]{dvj2}
D.~J. Daley and D.~Verey-Jones.
\newblock {\em An introduction to the theory of point processes. Vol. II}.
\newblock Springer, 2008.

\bibitem[For93]{Forrester1}
P.~J. Forrester.
\newblock Exact integral formulas and asymptotics for the correlations in the
  {$1/r^2$} quantum many-body system.
\newblock {\em Phys. Lett. A}, 179(2):127--130, 1993.

\bibitem[For10]{forrester}
P.~J. Forrester.
\newblock {\em Log-gases and random matrices}, volume~34 of {\em London
  Mathematical Society Monographs Series}.
\newblock Princeton University Press, Princeton, NJ, 2010.

\bibitem[Gin65]{Ginibre}
J.~Ginibre.
\newblock Statistical ensembles of complex, quaternion, and real matrices.
\newblock {\em J. Mathematical Phys}, 6:440--449, 1965.

\bibitem[GS13]{gesandier}
Y.~Ge and E.~Sandier.
\newblock On lattices with finite {C}oulombian interaction energy in the plane.
\newblock {\em arXiv preprint arXiv:1307.2621}, 2013.

\bibitem[KLS07]{Kuna}
T.~Kuna, J.L. Lebowitz, and E.R. Speer.
\newblock Realizability of point processes.
\newblock {\em Journal of Statistical Physics}, 129(3):417--439, 2007.

\bibitem[Leb15]{Leble1d}
T.~Lebl{{\'e}}.
\newblock A uniqueness result for minimizers of the 1{D} log-gas renormalized
  energy.
\newblock {\em J. Funct. Anal.}, 268(7):1649--1677, 2015.

\bibitem[LS15]{LebSer}
T.~Lebl{{\'e}} and S.~Serfaty.
\newblock Large deviation principle for empirical fields of {L}og and {R}iesz
  gases.
\newblock {\em arXiv preprint arXiv:1502.02970}, 2015.

\bibitem[Nak14]{na14}
F.~Nakano.
\newblock Level statistics for one-dimensional {S}chr{\"o}dinger operators and
  {G}aussian beta ensemble.
\newblock {\em J. Stat. Phys.}, 156(1):66--93, 2014.

\bibitem[PS15]{PetSer}
M.~Petrache and S.~Serfaty.
\newblock Next order asymptotics and renormalized energy for {R}iesz
  interactions.
\newblock {\em Journal of the Institute of Mathematics of Jussieu},
  FirstView:1--69, 5 2015.

\bibitem[RAS09]{seppalainen}
F.~Rassoul-Agha and T.~Sepp{\"a}l{\"a}inen.
\newblock {\em A course on large deviation theory with an introduction to Gibbs
  measures}, volume 162 of {\em Graduate Studies in Mathematics}.
\newblock American Mathematical Society, 2015 edition, 2009.

\bibitem[RS15]{RougSer}
N.~Rougerie and S.~Serfaty.
\newblock Higher-dimensional {C}oulomb gases and renormalized energy
  functionals.
\newblock {\em Communications on Pure and Applied Mathematics}, 2015.

\bibitem[Ser14]{serfatyZur}
S.~Serfaty.
\newblock Coulomb gases and {G}inzburg-{L}andau vortices.
\newblock {\em Zurich Lecture Notes in Mathematics, Eur. Math. Soc}, 2014.

\bibitem[SS12]{SS2d}
E.~Sandier and S.~Serfaty.
\newblock 2{D} {C}oulomb gases and the renormalized energy.
\newblock {\em to appear in Annals of Proba}, 2012.

\bibitem[SS14]{SS1d}
E.~Sandier and S.~Serfaty.
\newblock 1{D} {L}og gases and the renormalized energy: {C}rystallization at
  vanishing temperature.
\newblock {\em Probability Theory and Related Fields}, pages 1--52, 2014.

\bibitem[TS03]{torquato}
S.~Torquato and F.~H. Stillinger.
\newblock Local density fluctuations, hyperuniformity, and order metrics.
\newblock {\em Physical Review E}, 68(4):041113, October 2003.

\bibitem[VV09]{vv}
B.~Valk\'o and B.~Vir\'ag.
\newblock Continuum limits of random matrices and the {B}rownian carousel.
\newblock {\em Invent. Math}, 177(3):463--508, 2009.

\end{thebibliography}
\end{document}